\newcommand{\x}{\bx}
\newcommand{\bigp}{\mathcal{P}}
\newcommand{\bigx}{\mathcal{X}}
\newcommand{\bigy}{\mathcal{Y}}
\newcommand{\ent}{\mathbf{H}}
\newcommand{\bfe}{\mathbf{E} }
\newtheorem{conjecture}[theorem]{Conjecture}
\begin{document}

\title{Decision Trees, Protocols, and\\ the Fourier Entropy-Influence Conjecture}

\author{Andrew Wan\thanks{On leave from IIIS, Tsinghua University.  This
work was completed at Harvard University and supported by NSF grant CCF-964401 and NSFC grant 61250110218.}\\ Simons Institute, U.C.\ Berkeley \\ \tt{atw12@seas.harvard.edu} 
\and 
John Wright\thanks{Supported by NSF grants CCF-0747250 and CCF-1116594 and a grant from the MSR--CMU Center for Computational Thinking. Some of this research done while visiting the Toyota Technological Institute at Chicago.}\\ Carnegie Mellon University\\ \tt{jswright@cs.cmu.edu}
\and Chenggang Wu\thanks{This work was supported in part by the National Basic Research Program of China Grant 2011CBA00300, 2011CBA00301, the National Natural Science Foundation of China Grant 61033001, 61061130540. Research done while visiting Carnegie Mellon University.}\\ IIIS, Tsinghua University\\ \tt{wcg06@mails.tsinghua.edu.cn}}

\maketitle

\begin{abstract}
Given $f:\{-1, 1\}^n \rightarrow \{-1, 1\}$, define the \emph{spectral distribution} of $f$ to be the distribution on subsets of $[n]$ in which the set $S$ is sampled with probability $\widehat{f}(S)^2$.  Then the \emph{Fourier Entropy-Influence (FEI) conjecture} of Friedgut and Kalai~\cite{FK96} states that there is some absolute constant $C$ such that $\ent[\widehat{f}^2] \leq C\cdot\Inf[f]$.  Here, $\ent[\widehat{f}^2]$ denotes the Shannon entropy of $f$'s spectral distribution, and $\Inf[f]$ is the total influence of $f$.  This conjecture is one of the major open problems in the analysis of Boolean functions, and settling it would have several interesting consequences.

Previous results on the FEI conjecture have been largely through direct calculation.  In this paper we study a natural interpretation of the conjecture, which states that there exists a communication protocol which, given subset $S$ of $[n]$ distributed as $\widehat{f}^2$, can communicate the value of $S$ using at most $C\cdot\Inf[f]$ bits in expectation.
Using this interpretation, we are able show the following results:
\begin{itemize}
\item First, if $f$ is computable by a read-$k$ decision tree, then $\ent[\widehat{f}^2] \leq 9k\cdot \Inf[f]$.
\item Next, if $f$ has $\Inf[f] \geq 1$ and is computable by a decision tree with expected depth $d$, then $\ent[\widehat{f}^2] \leq 12d\cdot \Inf[f]$.
\item Finally, we give a new proof of the main theorem of O'Donnell and Tan~\cite{OT13}, i.e. that their FEI$^+$ conjecture composes.
\end{itemize}
In addition, we show that natural improvements to our decision tree results would be sufficient to prove the FEI conjecture in its entirety.  
We believe that our methods give more illuminating proofs than previous results about the FEI conjecture.
\end{abstract}

\newpage

\section{Introduction}

Given a Boolean function $f : \{-1, 1\}^n\rightarrow\{-1, 1\}$, define the \emph{spectral distribution} of $f$ to be the distribution on subsets of $[n]$ in which the set $S$ is sampled with probability $\widehat{f}(S)^2$.  Overloading notation, we will denote this distribution by $\widehat{f}^2$.  Write $\bigx\sim\widehat{f}^2$ for the random variable which is distributed according to $\widehat{f}^2$.  The \emph{Fourier Entropy-Influence (FEI) Conjecture} of Friedgut and Kalai~\cite{FK96} states that there is some absolute constant $C$ such that $\ent[\bigx] \leq C\cdot\Inf[f]$, where $\Inf[f]$ is the total influence of $f$, and $\ent[\bigx]$ is the spectral entropy of $f$ (equivalently, the Shannon entropy of $\bigx$), which equals
\begin{equation*}
\ent[\bigx] = \sum_{S \subseteq [n]} \widehat{f}(S)^2 \log \left(\frac{1}{\widehat{f}(S)^2}\right).
\end{equation*}
\noindent The FEI Conjecture has been shown to have several interesting consequences, including a learning algorithm for DNFs in the agnostic learning model~\cite{Man94,GKK08}, and resolving it is a central question in the analysis of Boolean functions.  See~\cite{OWZ11} for a comprehensive introduction to the subject.

Verifying the conjecture for individual functions---such as Majority, AND/OR, and Tribes---can be done via straightforward calculation.  Verifying it for larger classes of functions requires more subtle argumentation.  To date, it has been shown to hold for random DNFs~\cite{KLW10}, symmetric functions and read-once decision trees~\cite{OWZ11}, and read-once formulas~\cite{OT13, CKLS13}.  Unfortunately, this conjecture lends itself to proofs which are at times opaque and conceptually unilluminating.  Perhaps one of the reasons is that whereas the total influence $\Inf[f]$ is a central quantity in the analysis of Boolean functions, the spectral entropy $\ent[\bigx]$ is rarely encountered and poorly understood.

In this paper we consider the natural interpretation of the FEI conjecture as stating the existence of a coding scheme for the random variable $\bigx$ with a certain performance.  Roughly speaking, the coding scheme must use, on average, some fixed constant times the \textit{size} of $\bigx$ (see Section \ref{sec:shannon} for a precise description).    Using this interpretation, we give three results concerning the FEI conjecture; we believe that our proofs of these results are both straightforward and conceptually interesting.

For our first result, we verify the conjecture for read-$k$ decision trees, where $k$ is a constant.  This is the class of decision trees in which each variable is queried at no more than $k$ distinct locations in the entire tree.  Previous results---those for read-once decision trees~\cite{OWZ11} and read-once formulas~\cite{OT13, CKLS13}---failed to generalize even to the read-twice case, as allowing a decision tree to be read-twice introduces correlations between different parts of the tree, and this is difficult to analyze.  In this paper, we surmount this barrier, proving:
\begin{theorem}\label{thm:readk}
Suppose $f:\{-1, 1\}^n \rightarrow \{-1, 1\}$ can be computed by a read-$k$ decision tree, and let $\bigx \sim \widehat{f}^2$.  Then $\ent[\bigx] \leq 9k \cdot \Inf[f]$.
\end{theorem}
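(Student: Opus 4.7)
The plan is to invoke the communication interpretation of the FEI conjecture described in Section~\ref{sec:shannon}: to show $\ent[\bigx] \le 9k \cdot \Inf[f]$, it suffices to exhibit a prefix-free code for $\bigx \sim \widehat{f}^2$ whose expected codeword length is at most $9k\cdot \Inf[f]$. Accordingly, I would construct such a code using the read-$k$ decision tree $T$ computing $f$.

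The high-level idea is to describe $S$ via a ``witness leaf'' of $T$. Recall that any decision tree expands $f = \sum_\ell c_\ell \mathbf{1}_\ell$, where the sum is over leaves $\ell$, $c_\ell \in \{-1,1\}$ is $\ell$'s output, and $\mathbf{1}_\ell$ is the indicator of reaching $\ell$. Taking Fourier transforms gives $\widehat{f}(S) = \sum_{\ell\,:\,S\subseteq \mathrm{path}(\ell)} c_\ell\,\sigma_\ell(S)\,2^{-|\mathrm{path}(\ell)|}$, where $\sigma_\ell(S) = \prod_{i\in S}\sigma_{\ell,i}$ is the path's sign pattern. Whenever $\widehat{f}(S)\neq 0$, some leaf $\ell$ has $S\subseteq \mathrm{path}(\ell)$, so $S$ can be encoded by sending the root-to-leaf path of $\ell$ ($|\mathrm{path}(\ell)|$ bits) and then indicating which subset of that path equals $S$ (another $|\mathrm{path}(\ell)|$ bits), for a total of at most $2|\mathrm{path}(\ell)|$ bits. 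I would jointly sample $(S,\ell)$ from a distribution whose $S$-marginal is exactly $\widehat{f}^2$; a natural candidate is to choose $\ell$, among leaves containing $S$, with weight proportional to $|c_\ell\,\widehat{\mathbf{1}_\ell}(S)|$, or via a correlated two-sample walk on $T$.

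The crucial technical step is bounding the expected path length by $O(k)\cdot \Inf[f]$. Here the read-$k$ hypothesis enters: each variable labels at most $k$ nodes of $T$. I would \emph{charge} $|\mathrm{path}(\ell)|$ to the elements of $S$, arguing that each $i \in S$ is responsible for at most $O(k)$ expected bits---namely, the segment of $\mathrm{path}(\ell)$ traversed between consecutive $S$-variables encountered along the walk. Summing over $i \in S$ and using the standard identity $\Inf[f] = \mathbb{E}_{S\sim\widehat{f}^2}[|S|]$ then yields $\mathbb{E}[|\mathrm{path}(\ell)|] \le O(k)\cdot \Inf[f]$, and tuning constants should give the claimed $9k$.

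The main obstacle, as flagged in the introduction, is the interference between contributions of different leaves to a single Fourier coefficient $\widehat{f}(S)$. For read-once trees, a simple sampler suffices because essentially one leaf dominates $\widehat{f}(S)$; for read-$k$, many leaves can contain the same $S$, their signed contributions $c_\ell\sigma_\ell(S)2^{-|\mathrm{path}(\ell)|}$ can partially cancel, and a naive sampler over $\{\ell : S \subseteq \mathrm{path}(\ell)\}$ will overshoot $\widehat{f}(S)^2$. Getting the joint distribution $(S,\ell)$ to have the correct $S$-marginal---while still being implementable by an encoder who knows only $S$, and while keeping the charging argument intact---is the delicate part where the read-$k$ structure must be leveraged beyond prior read-once arguments.
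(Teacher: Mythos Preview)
Your high-level framework matches the paper: encode $S$ by a root-to-(some node) path in $T$, output the path description plus membership bits, and invoke Shannon's theorem. The paper also uses a probabilistic choice of path---at each internal node with subfunctions $g,h$ it goes left/right with probability proportional to $\widehat g(S')^2$ vs.\ $\widehat h(S')^2$---so your ``correlated two-sample walk'' intuition is on target. But the crucial analytic step is where your proposal diverges and, as stated, does not go through.

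The gap is the charging argument. You propose to bound the expected path length by assigning to each $i\in S$ ``the segment of $\mathrm{path}(\ell)$ traversed between consecutive $S$-variables'' and claiming this segment has expected length $O(k)$. The read-$k$ hypothesis says nothing about distances between $S$-variables along a root-to-leaf path; in a read-$k$ tree those gaps can be arbitrarily long (many distinct non-$S$ variables may sit between two $S$-variables). So this charging scheme does not obviously yield $O(k)$ per element of $S$, and I do not see how to rescue it without a new idea. Relatedly, sampling $\ell$ proportional to $|c_\ell\,\widehat{\mathbf 1_\ell}(S)|$ does not produce the $\widehat f^2$ marginal (precisely because of the signed cancellations you flag), and you have not specified an alternative joint law that both has the correct marginal and admits a per-element charge.

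The paper's route is different in exactly this place. Rather than charging the path to elements of $S$, it bounds, for every coordinate $i$, the probability $p_i(T)$ that the emitted path \emph{passes through} a node labelled $x_i$ (whether or not $i\in S$). A structural induction gives $p_i(T)\le 2\,\Inf_i[f]+\Cov_i[T]$, where $\Cov[T]=\sum_v 2^{-d(v)}\Cov[g_v,h_v]$ is a new ``tree covariance'' quantity measuring how correlated sibling subtrees are. Summing over $i$ gives expected protocol length $\le 4\,\Inf[f]+2\,\Cov[T]$. The read-$k$ hypothesis is then used only in a separate lemma, proved by its own induction with a strengthened hypothesis tracking $m_T(S)=\max_{i\in S}\#\{\text{nodes labelled }x_i\}$, to show $\Cov[T]\le (k-1)\Var[f]\le (k-1)\Inf[f]$. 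Combining yields length $(2k+2)\Inf[f]$ over a $3$-letter alphabet, and Lemma~\ref{lem:nozero} converts this to $\ent[\bigx]\le 9k\,\Inf[f]$. The conceptual point you are missing is this covariance decomposition: the excess path length beyond $O(\Inf[f])$ is controlled not by $|S|$ but by how much sibling subfunctions agree, and it is \emph{that} quantity which the read-$k$ property bounds.
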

A natural question is whether this can be improved to show the FEI conjecture for read-$k(n)$ decision trees, where $k(n) = \omega(1)$ is a slowly growing function of $n$.  However, a simple padding argument shows that this would be sufficient to prove the full FEI conjecture: given $f:\{-1, 1\}^n \rightarrow \{-1, 1\}$, one could add enough dummy variables to $f$ so that $k(\cdot)$ is at least $2^n$, and since any $n$-variable function is trivially computable by a read-$2^n$ decision tree, $f$ would satisfy the FEI conjecture.

Using much of the same proof as for Theorem \ref{thm:readk}, we then verify the conjecture for decision trees with expected depth $d$, where $d$ is a constant.  The FEI conjecture trivially holds for depth-$d$ decision trees, which have a bounded number of variables, and so what makes this interesting is that we only require a bound on the \emph{expected} depth of the tree.  Our result is:

\begin{theorem}\label{thm:depthd}
Suppose $f:\{-1, 1\}^n\rightarrow\{-1, 1\}$ is computable by a decision tree whose expected depth is $d$.  Further, suppose $\Inf[f] \geq 1$.  Then $\ent[\widehat{f}^2] \leq 12d \cdot \Inf[f]$.
\end{theorem}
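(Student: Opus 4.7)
I would follow the coding-scheme framework from the proof of Theorem~\ref{thm:readk}. Recall that to bound $\ent[\widehat{f}^2]$ it suffices, by Gibbs' inequality, to exhibit a probability distribution $q$ on subsets of $[n]$ such that $\mathbf{E}_{S \sim \widehat{f}^2}[\log(1/q(S))]$ is small. Let $T$ be the decision tree of expected depth $d$ computing $f$. For each leaf $\ell$, let $d(\ell)$ be its depth, so the reach probability under a uniform input is $2^{-d(\ell)}$, and write $\mathrm{path}(\ell)$ for the set of variables queried on the root-to-$\ell$ path. The key structural fact is that the Fourier support of $f$ is contained in $\bigcup_\ell \mathcal{P}(\mathrm{path}(\ell))$.

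The natural candidate distribution comes from the protocol ``sample a uniform $x$, walk down $T$ to its leaf $\ell(x)$, and emit a uniformly random subset of $\mathrm{path}(\ell(x))$'': this gives $q(S) = \sum_{\ell : S \subseteq \mathrm{path}(\ell)} 4^{-d(\ell)}$, which one verifies sums to $1$. Since $q(S) \geq 4^{-d(\ell)}$ whenever $S \subseteq \mathrm{path}(\ell)$, this yields the bound
\[
\ent[\widehat{f}^2] \;\leq\; 2 \cdot \mathbf{E}_{S \sim \widehat{f}^2}[d(\ell(S))]
\]
for any rule $S \mapsto \ell(S)$ with $S \subseteq \mathrm{path}(\ell(S))$. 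The remaining task is to produce such a rule satisfying $\mathbf{E}_{S \sim \widehat{f}^2}[d(\ell(S))] \leq 6 d \cdot \Inf[f]$.

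To define $\ell(S)$, I would run a randomized descent through $T$ that uses $S$ together with shared randomness: at each internal node querying $x_i$, descend in the direction that keeps ``$S \subseteq \mathrm{path}$'' feasible, using the shared randomness to break ties when both directions are feasible. The resulting leaf depth decomposes into a ``forced'' component---nodes where $S$ dictates the branch, of total expected size $O(|S|)$---and a ``free'' component---random-walk nodes, of expected size controlled by the tree's expected depth $d$. Taking expectation over $S \sim \widehat{f}^2$ and using $\mathbf{E}_{S \sim \widehat{f}^2}[|S|] = \Inf[f]$ alongside $\mathbf{E}_x[d(\ell(x))] = d$, one gets a bound of the form $O(d) + O(\Inf[f])$; the hypothesis $\Inf[f] \geq 1$ then absorbs the additive $O(d)$ term into the multiplicative factor $O(d \cdot \Inf[f])$, yielding the claimed $12 d \cdot \Inf[f]$.

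I expect the main obstacle to be in controlling $\mathbf{E}_S[d(\ell(S))]$ when $\widehat{f}^2$ concentrates on subsets $S$ that lie only in deep paths of $T$: in that regime the ``forced'' descents dominate and the random-walk-plus-$|S|$ decomposition no longer cleanly separates. Dealing with this requires carefully accounting for how Fourier mass is distributed across the tree's paths---playing the role that the read-$k$ hypothesis played in Theorem~\ref{thm:readk}---and is where I expect the proof's core technical step (and the factor $12$ in $12 d$) to come from.
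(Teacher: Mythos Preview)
Your high-level framework matches the paper's: encode $S$ by a root-downward path in $T$ whose variable set contains $S$, together with indicator bits marking which path variables lie in $S$. Your distribution $q$ and the bound $\ent[\widehat{f}^2]\le 2\,\E_{S}[d(\ell(S))]$ are the Gibbs/Kraft reformulation of exactly this protocol. The divergence---and the gap---is in how the path is chosen and analyzed.

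The forced/free decomposition does not work as stated. The claim that forced steps number $O(|S|)$ is false pointwise: take the ``spine'' tree in which the depth-$(i{-}1)$ node queries $x_i$, has a leaf as its left child, and the next spine node as its right child, for $i=1,\dots,m$, with $x_{m+1}$ at the bottom. For $S=\{m{+}1\}$ every one of the first $m$ steps is forced, yet $|S|=1$ and the expected depth of the whole tree is below~$2$. You are right that such deep singletons carry small Fourier mass (here $\Inf_{m+1}[f]\le 2^{-m}$), so the damage may wash out in expectation over $S\sim\widehat f^2$---but your decomposition does not capture that cancellation. Nor is the ``free'' component a uniform walk from the root: it is interleaved with forced descents into subtrees whose conditional expected depth need not be bounded by $d$, so controlling it by $d$ is also unjustified. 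You correctly flag this as the main obstacle; it is not a detail to be filled in but the entire technical content of the theorem.

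The paper's resolution is different from what you propose. Instead of uniform tie-breaking, the protocol descends with probability proportional to Fourier weight: at a node with subfunctions $g,h$ and remaining set $S'$, go left with probability $\widehat g(S')^2/\bigl(\widehat g(S')^2+\widehat h(S')^2\bigr)$. The payoff is that when $S\sim\widehat f^2$, the set handed to the left subtree is distributed \emph{exactly} as $\widehat g^2$ (and symmetrically for $h$), so the analysis recurses cleanly. One then proves by induction that the probability the chosen path touches coordinate $i$ is at most $2\,\Inf_i[f]+\Cov_i[T]$, where $\Cov_i[T]=\sum_{v:\,l(v)=i}2^{-d(v)}\Cov[g_v,h_v]$ is a ``tree covariance'' term; summing over $i$ gives expected path length at most $2\,\Inf[f]+\Cov[T]$, and the trivial bound $\Cov[T]\le d$ (each summand is at most $1$) finishes. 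The Fourier-weighted descent and the covariance bookkeeping---not a forced/free split---are what make the $\widehat f^2$-expectation tractable and produce the additive shape $O(\Inf[f])+O(d)$ that your hypothesis $\Inf[f]\ge 1$ then converts to $12d\cdot\Inf[f]$.
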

As before, if we could show the FEI conjecture for decision trees with expected depth $d(n)$, where $d(n) = \omega(1)$ is a slowly growing function of $n$, we would be able to show the full FEI conjecture.  In addition, the requirement in Theorem~\ref{thm:depthd} that $\Inf[f]$ be reasonably large is necessary, as we show in Appendix~\ref{app:biginf}.  We note that this result (with a better constant) also follows from the bound $\ent[\widehat{f}^2] \leq 2d$, which was proven independently by~\cite{CKLS13}.

For our final result, we give a new proof of the main theorem from~\cite{OT13}, which is a composition theorem for the FEI conjecture.  Their main application is to verify the FEI conjecture for read-once formulas.  For example, consider trying to prove the FEI conjecture for a read-once DNF formula (an OR of ANDs).  It is easy to verify that both the AND and OR functions (of any input size) each individually satisfy the FEI conjecture, but it is not so obvious how to prove that their composition satisfies it.

More broadly, let $f$ and $g_1,\ldots,g_k$ be Boolean functions, and consider the composition $h = f(g_1, \ldots, g_k)$, where each $g_i$ is over its own set of variables.  Their paper considers the following question: supposing that $f$ and the $g_i$'s satisfy the FEI conjecture with constant $C$, what can one conclude about $h$?    Perhaps their main contribution is in noting that from $f$'s perspective, it is not receiving perfectly unbiased bits as inputs, but $\E[g_i]$-biased bits.   Thus it is natural that it shouldn't matter whether $f$ satisfies the FEI conjecture, but rather whether it satisfies some $\E[g_i]$-biased version of the FEI conjecture.  They formulate this biased version of the FEI conjecture, which they call the FEI$^+$ conjecture (which we will formally state later), and prove the following composition theorem:
\begin{theorem}[Informal]\label{thm:informal}
Suppose $f$ and $g_1,\ldots,g_k$ satisfy the FEI$^+$ conjecture with constant $C$.  Then $h = f(g_1, \ldots, g_k)$ also satisfies the FEI$^+$ conjecture with constant $C$.
\end{theorem}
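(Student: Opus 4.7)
The plan is to work entirely in the protocol/coding reformulation that drives the paper. I would interpret the FEI$^+$ hypothesis for a function $g$ (with mean $\mu$) as supplying a prefix-free encoding of a draw $S \sim \widehat{g}^2$ whose expected length is at most $C$ times the appropriate biased-influence quantity. The goal is then to splice the given schemes for $f$ and for each $g_i$ into a single scheme for $h = f(g_1,\ldots,g_k)$.

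First, I would establish the standard spectral decomposition for compositions. Setting $\mu_i = \mathbf{E}[g_i]$ and expanding $f$ in the $(\mu_1,\ldots,\mu_k)$-biased Fourier basis, a direct calculation shows that sampling $S \sim \widehat{h}^2$ is equivalent to: (i) sample $T \subseteq [k]$ from $f$'s $(\mu_1,\ldots,\mu_k)$-biased spectral distribution; (ii) independently for each $i \in T$, sample a non-empty subset $S_i$ of $g_i$'s variables from the normalized spectrum of $g_i - \mu_i$; and output $S = \bigcup_{i \in T} S_i$. Because the $g_i$'s act on disjoint variable sets, this union is disjoint and the $S_i$ are conditionally independent given $T$.

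With this decomposition in hand, the natural scheme for $h$ is a two-phase protocol. Alice first invokes $f$'s FEI$^+$ encoder to transmit $T$; then, for each $i \in T$ in some canonical order, she appends the codeword produced by $g_i$'s FEI$^+$ encoder for $S_i$. By the FEI$^+$ hypothesis, the expected cost of phase one is at most $C$ times $f$'s $(\mu_1,\ldots,\mu_k)$-biased total influence, and the conditional expected cost of each inner codeword (given $i \in T$) is at most $C \cdot \Inf[g_i]$. Averaging over $T$, and using that $\Pr[i \in T]$ is exactly $f$'s $i$-th biased influence, the total expected length becomes $C$ times the classical composition expression for biased total influence, which is precisely the influence quantity FEI$^+$ attaches to $h$.

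The main obstacle is ensuring that the bookkeeping matches the precise form of FEI$^+$. FEI$^+$ carries an additional bias-dependent term on the entropy side (penalizing the empty-set mass of each biased spectrum), and one must verify that the overhead of the two-phase protocol---in particular, the cost of signaling which coordinates lie in $T$ and hence which $g_i$'s need no codeword at all---is absorbed precisely by that correction, applied both to the outer $f$ and to each inner $g_i$. Once the accounting is aligned, the remaining content is just the chain rule for Shannon entropy applied to the compound distribution, which is the conceptual clarity the protocol viewpoint is meant to deliver.
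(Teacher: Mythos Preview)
Your approach is essentially the paper's: the two-phase protocol (transmit $T$ via $f$'s encoder, then append each $g_i$'s codeword for the nonempty $S_i$) and the spectral decomposition you describe are exactly the paper's Claims~3.1 and~3.2 and the protocol of Section~3, and your remark that the bias-correction terms in FEI$^+$ must be matched against the bookkeeping of which $g_i$'s are ``active'' is precisely the content of Lemma~3.3.

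The one loose end worth flagging is your opening sentence: you say the FEI$^+$ hypothesis \emph{supplies} a prefix-free encoding of the right length, but FEI$^+$ is an entropy bound, and Shannon's theorem only gives you a protocol of length $\ent + 1$. That extra~$+1$ is not absorbed by the FEI$^+$ accounting (indeed, the right-hand side of FEI$^+$ can be arbitrarily small), so the na\"ive translation fails. The paper handles this by encoding $t$ independent copies of each spectral sample in parallel, driving the per-sample overhead to~$1/t \to 0$; without this step, the protocol-level composition (your Theorem, their Theorem~\ref{thm:protocolcomp}) does not by itself recover the entropy-level statement (their Theorem~\ref{thm:entropycomp}). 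Your final sentence gestures at a direct entropy chain-rule argument instead, which would also work---that is essentially the original O'Donnell--Tan proof---but then the protocol layer becomes purely heuristic rather than the actual proof mechanism.
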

They proved this by expanding the expressions $\ent[\bigx]$ and $\Inf[h]$ in terms of the Fourier coefficients of $f$ and $g_1,\ldots,g_k$, and comparing the results.  Using our coding theoretic interpretation of the FEI conjecture, we give a new proof of this theorem which shows that codes compose in a very clean way.

We now describe our interpretation of the FEI conjecture and discuss our main results in more detail.
\subsection{The FEI Conjecture as a Coding Bound}\label{sec:shannon}

Let $\bigx \sim \widehat{f}^2$.  We view the Fourier Entropy-Influence Conjecture as stating the existence of highly efficient coding schemes for communicating the value of $\bigx$.  To explain this, we begin with some standard information theory background.  Given a domain $\mathcal{D}$ and an output alphabet $\Sigma$, a \emph{code} on $\mathcal{D}$ is a function $c:\mathcal{D}\rightarrow \Sigma^*$.  We say that $c$ is \emph{prefix-free} if $c(x)$ is never a prefix of $c(y)$ for distinct $x, y \in \mathcal{D}$.  If $\x$ is a random variable which takes values in $\mathcal{D}$, then the average number of characters output by $c$, called the \emph{length} of $c$, is $\E[|c(\x)|]$, and we often care about finding a code $c$ which minimizes this quantity.  The source coding theorem of Shannon says that $\ent[\x]$ is roughly the best possible length achievable by a prefix-free code:

\begin{theorem}[Shannon's source coding theorem~\cite{Sha48}] \label{thm:shannon}
Let $\x$ be a random variable over a domain $\mathcal{D}$ and let $\Sigma$ be a finite alphabet.
\begin{enumerate}
\item If $c:\mathcal{D} \rightarrow \Sigma^*$ is a prefix-free code for $\x$, then
$\ent[\x]/\log_2|\Sigma| \leq \E[|c(\x)|]$.
\item Furthermore, there exists a prefix-free code $c:\mathcal{D} \rightarrow \Sigma^*$ such that $\E[|c(\x)|]\leq \ent[\x]/\log_2|\Sigma| +1$.
\end{enumerate}
\end{theorem}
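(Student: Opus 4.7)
The plan is to derive both parts from Kraft's inequality, the classical combinatorial fact that prefix-free codes over an alphabet of size $s = |\Sigma|$ with codeword lengths $\ell_x = |c(x)|$ satisfy $\sum_{x \in \mathcal{D}} s^{-\ell_x} \leq 1$, and, conversely, that any sequence of positive integer lengths satisfying this inequality is realized by \emph{some} prefix-free code. I would first sketch Kraft by identifying each string in $\Sigma^*$ with a node of the infinite $s$-ary tree: the prefix-free property says that the nodes indexed by $c(\mathcal{D})$ are pairwise incomparable in the ancestor relation, so the subtrees rooted at them are disjoint, and summing the ``mass'' $s^{-\ell_x}$ of each (the fraction of leaves at any large common depth that it captures) gives the inequality. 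The converse is a greedy construction: process the lengths in nondecreasing order, and at each step assign a codeword to any free node at the required depth, which is possible precisely because Kraft's inequality has not yet been saturated.

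For Part 1, given a prefix-free code $c$ with lengths $\ell_x$, set $p_x = \Pr[\x = x]$ and $q_x = s^{-\ell_x}/Z$, where $Z = \sum_x s^{-\ell_x} \leq 1$ by Kraft. A short calculation gives
\[
\E[|c(\x)|]\log_2 s \;=\; -\sum_x p_x \log_2 s^{-\ell_x} \;=\; -\log_2 Z \;+\; \ent[\x] \;+\; \sum_x p_x \log_2(p_x/q_x) \;\geq\; \ent[\x],
\]
where the last inequality uses $-\log_2 Z \geq 0$ and Gibbs' inequality (nonnegativity of KL divergence). Dividing through by $\log_2 s$ yields the stated bound.

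For Part 2, I would choose the codeword lengths $\ell_x := \lceil \log_s(1/p_x) \rceil$. These positive integers satisfy Kraft, since $\sum_x s^{-\ell_x} \leq \sum_x s^{-\log_s(1/p_x)} = \sum_x p_x = 1$, so the converse to Kraft produces a prefix-free code with precisely these lengths. Its expected length is at most $\E[\log_s(1/p_{\x}) + 1] = \ent[\x]/\log_2 s + 1$, as required.

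The only step requiring any real argument is Kraft's inequality and its converse; once those are in hand, both directions are a few lines of algebra. There is no genuine obstacle here since the result is a textbook piece of information theory, so the planning task is really to present the combinatorial lemma cleanly and let both bounds drop out of the KL-divergence identity and the ceiling trick.
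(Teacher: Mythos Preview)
Your proof is correct and is the standard textbook argument via Kraft's inequality and Gibbs' inequality. Note, however, that the paper does not actually prove this theorem: it is stated as a classical result with a citation to Shannon~\cite{Sha48} and then used as a black box throughout. So there is no ``paper's own proof'' to compare against; your write-up simply supplies the well-known derivation that the authors chose to omit.
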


(In fact, this theorem applies to the more general class of \emph{uniquely decodable} codes, but it is sufficient for our purposes that we only consider prefix-free codes.)

This suggests that if we want to upper bound the entropy of $\bigx \sim \widehat{f}^2$, we should try to design an efficient protocol for communicating the value of $\bigx$.  The formula $\Inf[f] = \sum_S |S|\cdot \widehat{f}(S)^2$ shows that $\Inf[f]$ is actually the expected size of the set $\bigx$.  Thus, showing a bound of the form $\ent[\bigx]\leq C \cdot \Inf[f]$ for a function $f$ requires showing a protocol for communicating the value of $\bigx$ which uses at most a constant number of bits on average for each element of $\bigx$.  As an example, consider the following protocol for encoding the value of a set $S \subseteq [n]$:
\begin{center}
\textbf{$\bigp(S)$:}
\end{center}
\begin{itemize}
\item For each $i \in S$, output the $\lceil \log n \rceil$-bit description of $i$.
\item Output $\bot$.
\end{itemize}
Here $\bot$ is a termination character which prevents different codewords from being prefixes of each other. (Without it, the codeword for $\{1\}$ would be a prefix of the codeword for $\{1, 2\}$, for example.)

Given the output of this protocol, one can uniquely determine the value of $S$.  Furthermore, the protocol uses exactly $\lceil \log n \rceil \cdot |S| + 1$ characters to code $S$.  As a result, we have
\begin{equation}
\E[|\bigp(\bigx)|] = \lceil \log n \rceil \cdot \E[|\bigx|] + 1 = \lceil \log n \rceil \cdot \Inf[f] + 1, \label{eq:plusone}
\end{equation}
giving an upper bound of $\ent[\bigx] \leq \log_2 3\cdot \left(\lceil \log n \rceil \cdot \Inf[f] + 1\right)$.  This is (ignoring the $\log_2 3$ factor) the well-known ``weak'' upper bound~\cite{OWZ11, KMS12}, which is essentially the best-known upper bound for a general Boolean $f$ (and is tight when $f$ is real-valued).

With some extra work, we can remove the $(+1)$ from Equation~\eqref{eq:plusone} while adding only a small factor to the coefficient of $\Inf[f]$.  This is important for the case when $f$ is heavily biased and $\Inf[f]$ is small (for example, when $f$ is the AND function).  As a start, consider the modified protocol $\bigp'$ which has the same first line as $\bigp$ but the following second line instead:
\begin{itemize}
\item If $S \neq \emptyset$, output $\bot$.
\end{itemize}
This will only output $\bot$ when $S \neq \emptyset$. For $\bigx \sim \widehat{f}^2$, the probability that $\bigx \neq \emptyset$ is
$\sum_{S \neq \emptyset} \widehat{f}^2(S) = \Var[f] \leq \Inf[f]$.
As a result, $\E[|\bigp'(\bigx)|] \leq (\lceil \log n\rceil + 1) \cdot \Inf[f]$.  However, $\bigp'$ is no longer prefix-free: $\bigp'(\emptyset)$ is the empty string, and is therefore a prefix of $\bigp'(S)$ for \emph{every} $S$.  The following lemma, which is implicit in~\cite{OWZ11}, shows that such a protocol still gives an entropy bound at a cost of $2\cdot \Inf[f].$

\begin{lemma}\label{lem:nozero}
Let $\bigx \sim \widehat{f}^2$, and let $\bigp:2^{[n]} \rightarrow \Sigma^*$ be a prefix-free protocol, except it outputs an empty string on the input $\emptyset$.  Then $\ent[\bigx] \leq \log_2 |\Sigma| \cdot \E[|\bigp(\bigx)|] + 2 \cdot \Inf[f]$.
\end{lemma}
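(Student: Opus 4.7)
The plan is to bound $\ent[\bigx]$ by splitting the entropy of $\bigx$ according to whether $\bigx = \emptyset$, and then applying Shannon's source coding theorem to the non-empty part. Set $p := \Pr[\bigx = \emptyset] = \widehat{f}(\emptyset)^2$, so that $1-p = \Var[f]$, and write $H_2(p) := p\log_2(1/p) + (1-p)\log_2(1/(1-p))$ for the binary entropy. The chain rule for entropy gives
\begin{equation*}
\ent[\bigx] = H_2(p) + (1-p)\cdot \ent[\bigx \mid \bigx \neq \emptyset],
\end{equation*}
and I would handle the two terms separately.

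For the conditional term, I would observe that the restriction of $\bigp$ to $\{S : S \neq \emptyset\}$ is an honest prefix-free code (all codewords are nonempty and still mutually prefix-free), and this is precisely the support of the conditional distribution of $\bigx$ given $\bigx \neq \emptyset$. Applying part~1 of Theorem~\ref{thm:shannon} to this conditional distribution, multiplying by $1-p$, and folding the two contributions into one expectation using $|\bigp(\emptyset)|=0$, yields
\begin{equation*}
(1-p)\cdot \ent[\bigx \mid \bigx \neq \emptyset] \leq \log_2|\Sigma|\cdot \E[|\bigp(\bigx)|].
\end{equation*}

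The main obstacle, and the source of the extra $2\Inf[f]$ term in the lemma, is the Boolean-function inequality $H_2(\widehat{f}(\emptyset)^2) \leq 2\Inf[f]$. My approach would be to use the edge-isoperimetric inequality on the cube: letting $q := \min(\Pr[f=1], \Pr[f=-1]) \leq 1/2$, we have $1-p = \Var[f] = 4q(1-q) \leq 4q$, while edge-isoperimetry gives $\Inf[f] \geq 2q\log_2(1/q)$. Combining these with the elementary estimate $H_2(r) \leq r\log_2(e/r)$, the monotonicity of $r \mapsto r\log_2(e/r)$ on $(0,1)$, and $e < 4$, yields $H_2(p) = H_2(1-p) \leq 4q\log_2(e/(4q)) \leq 4q\log_2(1/q) \leq 2\Inf[f]$. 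Adding this to the conditional bound gives the lemma.

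This last step genuinely uses that $f$ is Boolean: a real-valued example such as $f(x) = a + b\, x_1$ with $a^2+b^2 = 1$ and $b$ small (where $\Inf[f] = b^2$ but $H_2(\widehat{f}(\emptyset)^2) = H_2(1-b^2)$ is constant-sized) shows that $H_2(\widehat{f}(\emptyset)^2) \leq 2\Inf[f]$ can fail for arbitrary $f$ with $\|f\|_2=1$. So the factor of $2$ in the lemma reflects an isoperimetric/log-Sobolev-type inequality on the hypercube and is essentially tight, as witnessed by the AND function.
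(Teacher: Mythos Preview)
Your approach is essentially identical to the paper's: both decompose $\ent[\bigx]$ via the chain rule into $H_2(p) + (1-p)\ent[\bigx\mid\bigx\neq\emptyset]$, bound the conditional term by Shannon's source coding theorem, and bound $H_2(p)$ by $2\Inf[f]$ using (a consequence of) the edge-isoperimetric inequality---the paper just cites \cite{OWZ11} for the last step whereas you unpack it. One small wrinkle: your monotonicity step $(1-p)\log_2(e/(1-p)) \leq 4q\log_2(e/(4q))$ requires $4q \leq 1$, but $q$ can be as large as $1/2$; this is harmless since for $q > 1/4$ one has $2\Inf[f] \geq 2\Var[f] = 8q(1-q) > 1 \geq H_2(p)$ directly, but you should say so.
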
\noindent
For completeness, we include a proof of this lemma in Appendix~\ref{sec:nozero}.  Applying this lemma to the protocol $\bigp'$ in the previous example shows that $\ent[\bigx] \leq (\log_2 3 \cdot (\lceil \log n\rceil  +1) + 2)\cdot \Inf[f]$.

As the above example illustrates, it is natural for a protocol to output nothing when $\bigx = \emptyset$.  For convenience, we will call such protocols \emph{almost} prefix-free.\footnote{An almost prefix-free protocol is implicit in the proof of the FEI conjecture for symmetric functions in~\cite{OWZ11}, and ignoring the case when $\bigx = \emptyset$ is even explicitly built into the definition of the FEI$^+$ conjecture in~\cite{OT13}.}

\subsection{Decision Tree Protocol}

\begin{figure}
\centering
\includegraphics{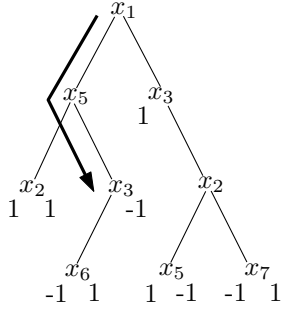}
\caption{A path for the set $S = \{1, 3\}$.  The other possible path is $x_1 \rightarrow x_3$.}\label{fig:path}
\end{figure}

Let $f:\{-1, 1\}^n \rightarrow \{-1, 1\}$ be computed by a decision tree $T$, and let $\bigx \sim \widehat{f}^2$.  To prove Theorems~\ref{thm:readk} and~\ref{thm:depthd}, we give an efficient protocol for communicating the value of $\bigx$.  The protocol we use is simple: for a set $S \subseteq [n]$, $\widehat{f}(S)^2$ can be nonzero only if there is a root-to-leaf path in $T$ which contains all the variables in $S$ and, potentially, some extra variables.  This means that any value which $\bigx$ takes with nonzero probability must correspond to at least one such path in the tree $T$.  The protocol outputs the left/right description of such a path (stopping when the path has reached all the variables in $\bigx$), along with a sequence of bits indicating which indices along the path are contained in~$\bigx$.  Then, if $\bigx \neq \emptyset$, it terminates with a $\bot$.

For example, consider the tree in Figure~\ref{fig:path}.  If the protocol were given the set $S = \{1, 3\}$, then there are two paths it could use: $x_1 \rightarrow x_5 \rightarrow x_3$ and $x_1 \rightarrow x_3$.  Supposing it chose the first path, it would output $0, 1$ for the description of the path, then $1, 0, 1$ to indicate that $x_1$ and $x_3$ are in $S$ but $x_5$ is not, and finally it would output $\bot$.  So the total output string would be $0, 1, 1, 0, 1, \bot$.  If it used the other path, the output string would be $1, 1, 1, \bot$.  We defer the complete description of the protocol, including how it chooses between the possible paths, until Section \ref{sec:readk}.

Note that when $\bigx = \emptyset$, the protocol simply outputs an empty path.  We show the following bound on the performance of this protocol which, when combined with Lemma~\ref{lem:nozero} (and the fact that $k$ and $d$ are at least $1$), yields Theorems~\ref{thm:readk} and~\ref{thm:depthd}:
\begin{theorem}\label{thm:readkprot}
Suppose $f:\{-1, 1\}^n \rightarrow \{-1, 1\}$ is computable by a read-$k$ decision tree whose expected depth is $d$, and let $\bigx \sim \widehat{f}^2$.  Then there is an almost prefix-free protocol for $\bigx$ with length at most $\min\{(2k+2)\cdot \Inf[f], 4\cdot \Inf[f] + 2d\}$ and alphabet size $\vert \Sigma \vert = 3$.
\end{theorem}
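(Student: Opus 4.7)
The plan is to formalize the sketched protocol, verify that it is almost prefix-free, and then bound its expected length. For any $S$ in the support of $\widehat{f}^2$, there is a root-to-leaf path in $T$ that queries every variable of $S$; the protocol picks some such path $\pi_S$ (two choices will be needed, corresponding to the two sub-bounds), truncates it at the first depth $\ell = \ell(S)$ by which every variable of $S$ has been queried, and emits
\[
i_1,\, d_1,\, i_2,\, d_2,\, \ldots,\, d_{\ell-1},\, i_{\ell},\, \bot,
\]
where $i_t \in \{0,1\}$ indicates whether the $t$-th queried variable lies in $S$ and $d_t \in \{0,1\}$ records which child $\pi_S$ descends to out of the $t$-th node. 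On input $S = \emptyset$ the output is the empty string. A decoder knowing $T$ simulates the walk using the $d_t$'s, records the queried variables with $i_t = 1$, and halts at $\bot$; nonempty codewords end in the unique character $\bot$, so the code is almost prefix-free over $\Sigma = \{0,1,\bot\}$. The output length is exactly $2\ell(S)\cdot\mathbf{1}[S\ne\emptyset]$, so both bounds reduce to bounding $\E[\ell(\bigx)\cdot\mathbf{1}[\bigx\ne\emptyset]]$.

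\textbf{The two choices of $\pi_S$.} For the read-$k$ bound, choose $\pi_S$ greedily: at each node, descend into some child whose subtree still contains a path visiting every as-yet-unvisited element of $S$. The goal is to show $\E[\ell(\bigx)\cdot\mathbf{1}[\bigx\ne\emptyset]]\le (k+1)\Inf[f]$ by charging every non-$S$ query along the greedy path to an element of $S$ and using the fact that each variable appears in at most $k$ nodes of $T$ to argue, on average over $\bigx$, that no element of $\bigx$ is charged more than $k$ non-$S$ queries. For the expected-depth bound, instead take $\pi_S$ by flipping an independent fair coin at every non-$S$ node (and taking the forced direction at every $S$-node). Then $\ell$ decomposes as $|\bigx|$ plus the number of non-$S$ queries, and the random choices at non-$S$ nodes let one couple the walk with a uniformly random input to $T$, whose expected leaf-depth is $d$. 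After using $\Pr[\bigx\ne\emptyset]=\Var[f]\le\Inf[f]$, this should give $\E[\ell(\bigx)\cdot\mathbf{1}[\bigx\ne\emptyset]]\le 2\Inf[f]+d$.

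\textbf{Main obstacle.} The main challenge is the amortized bound in the read-$k$ case. A pointwise estimate of the form $\ell(S) \le O(k)\cdot|S|$ is too strong --- one can build a deep decision tree in which a single long path carries nonzero Fourier mass on a singleton $S$ sitting at its tip --- so the argument must average over $\bigx\sim\widehat{f}^2$. I expect the cleanest route to be a bit-by-bit accounting: write $\ell$ as a sum of indicators, one per position along $\pi_S$, and bound the $\widehat{f}^2$-weighted expectation of each indicator using the fact that the variable queried at that position occurs at most $k$ times in $T$. This kind of per-position bookkeeping is what should let the read-$k$ structure ``see past'' the correlations introduced by non-read-once branches, which is precisely the barrier that purely read-once-based arguments have been unable to surmount.
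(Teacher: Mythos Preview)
Your proposal has two genuine gaps, and it also diverges from the paper's strategy in a way worth naming.

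\textbf{The expected-depth bound.} Your fair-coin protocol is not well-defined. At a node whose queried variable lies in $S$ there is no ``forced direction'': knowing $j\in S$ tells you nothing about which child to descend into. And at a node whose variable is \emph{not} in $S$, a fair coin may send you into a subtree that contains no path through the remaining elements of $S$; then the walk can never finish and the encoding is lossy. So the coupling with a uniformly random input breaks: any walk that is guaranteed to visit all of $S$ must sometimes be forced at non-$S$ nodes, and those forced steps destroy the coupling. The paper sidesteps this by randomizing with Fourier weights: from a node with subfunctions $g,h$ it descends left with probability $\widehat g(S')^2/(\widehat g(S')^2+\widehat h(S')^2)$. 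This choice always leads into a subtree containing the remaining elements of $S'$, yet (and this is the nontrivial calculation) the \emph{marginal} left/right probability is exactly $1/2$ and the recursive input is distributed as $\widehat g^2$ or $\widehat h^2$. That is what makes a clean induction possible.

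\textbf{The read-$k$ bound.} Your greedy protocol is fine as an encoding, but the charging argument is only a hope. You say you will ``bound the $\widehat f^2$-weighted expectation of each indicator using the fact that the variable queried at that position occurs at most $k$ times in $T$,'' but you have not said what links $\widehat f(S)^2$ to the multiplicity of a variable at a specific depth along the greedy path for $S$. The paper does not attempt any such direct charging. Instead it introduces a single quantity, the \emph{tree covariance} $\Cov[T]=\sum_v 2^{-d(v)}\Cov[g_v,h_v]$, proves that the (Fourier-weighted) protocol has expected length exactly $4\Inf[f]+2\Cov[T]$, and then bounds $\Cov[T]$ separately: trivially $\Cov[T]\le d$, and for read-$k$ trees $\Cov[T]\le(k-1)\Var[f]$ via a strengthened induction tracking, for each $S$, the maximum multiplicity $m_T(S)$ of any variable in $S$. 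The read-$k$ hypothesis enters only in that last lemma, through the observation that if $S$ is supported in both subtrees then $m_{T_b}(S)\le m_T(S)-1$. This is exactly the ``amortization over $\widehat f^2$'' you are looking for, but it is routed through covariance rather than through any per-position indicator bound; I do not see how to make your greedy-path bookkeeping produce it.

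In short: one protocol with Fourier-weighted branching, one new invariant $\Cov[T]$, two easy bounds on that invariant. Your two-protocol plan misses the unifying quantity, and each half currently has a hole where that quantity does the work.
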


This protocol relies heavily on the intuition that the structure of a decision tree should indicate which variables are significant.  For example, the root variable should be very important, as should variables in the upper levels of the tree.
Thus, even though the path outputted by the protocol always includes the root variable and almost always includes the variables in the upper levels of the tree, this should not be a problem given that these variables are highly influential.

It is possible, however, to construct trees which do not fit this intuition: for example, consider a decision tree $T$ which contains one set of variables on levels $0$ through $l-1$, and has rooted at every node on level $l$ a copy of a decision tree $T'$ over a different set of variables.  An example of such a tree is given in Figure~\ref{fig:badtree} for $l = 2$.  As all paths lead to $T'$, the variables in the first $l$ levels clearly have influence zero.  Unfortunately, the described protocol will always output a path containing a variable from each of these $l$ levels, as each path from the root to an influential variable must go through these levels.  Thus, for any arbitrary $l$, one can make this protocol output $2l$ extraneous characters for any nonempty input set, regardless of the influence of the function.

\begin{figure}
\centering
\includegraphics{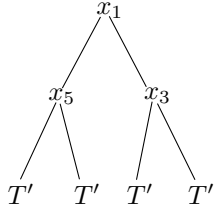}
\caption{A bad tree.  The $T'$s are identical and do not contain $x_1$, $x_3$, or $x_5$.}\label{fig:badtree}
\end{figure}

This is not so problematic for the case when $T$ has small expected depth or is read-$k$, for $k$ a small constant.  In the above example, every level of dummy variables adds one to the depth of $T$, and so this construction is limited by the expected depth of $T$.  Furthermore, since a copy of $T'$ is rooted at every level-$l$ node, $T$ is itself at least a read-$2^l$ decision tree, in which case the fact that the protocol outputs only $2l$ more bits than it should is perhaps not too concerning.

To analyze this example, we note that for each level $i$ between~$0$ and~$l-1$, every node at level $i$ has a pair of highly covariant children.  
Here, by the covariance of two functions $f$ and $g$ we mean the quantity $\Cov[g, h] := \E_{\x}[(g(\x) - \E[g])\cdot(h(\x)-\E[h])]$.
In other words, for a node at level $i$, if $g$ and $h$ are the functions computed by that node's left and right subtrees, respectively, then as $g = h$, $\Cov[g, h] = \Var[f]$.  Imagining that $\Var[f]$ is large, then it is exactly these nodes with highly covariant children which are troublesome.  To keep track of these troublesome nodes, we define the quantity of \emph{tree covariance} for $T$, written $\Cov[T]$.  If $T$'s left and right subtrees $T_0$ and $T_1$ compute the functions $g$ and $h$, then $\Cov[T]$ can be defined recursively as $\Cov[T] = \Cov[g, h] + \frac{1}{2}\left(\Cov[T_0] + \Cov[T_1]\right)$, with the base case that $\Cov[T] = 0$ if $T$ computes a constant function.  We show that the performance of this protocol on a general tree $T$ depends on $\Cov[T]$:

\begin{lemma}\label{thm:covthm}
The length of the above protocol is $4 \cdot \Inf[f] + 2 \cdot \Cov[T]$.
\end{lemma}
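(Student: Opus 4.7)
I would proceed by induction on $T$, maintaining the equality $L(T) = 4\,\Inf[f] + 2\,\Cov[T]$ throughout. The base case when $T$ is a single leaf is immediate: $f$ is constant, so $\bigx = \emptyset$ almost surely, the protocol outputs nothing, and $\Inf[f] = \Cov[T] = 0$. For the inductive step, write $T$'s root as $x_i$ with subtrees $T_0, T_1$ computing Boolean $g, h$ on the remaining variables; my plan is to show that both $L(T)$ and $4\Inf[f] + 2\Cov[T]$ satisfy the same recursion, namely
\[
\Phi(T) \;=\; 2\bigl(1 - \E[g]\,\E[h]\bigr) \;+\; \tfrac{1}{2}\bigl(\Phi(T_0) + \Phi(T_1)\bigr).
\]

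To derive this recursion for $L(T)$, I would expand $L(T) = \sum_S \widehat f(S)^2\,|\bigp(S)|$ and split the sum according to whether $i \in S$. Using the Fourier identity $\widehat f(S) = \tfrac{1}{2}(\widehat g(S) + \widehat h(S))$ for $i \notin S$, and $\widehat f(S) = \tfrac{1}{2}(\widehat h(S\setminus\{i\}) - \widehat g(S\setminus\{i\}))$ for $i \in S$, squaring produces diagonal terms $\tfrac{1}{4}\widehat g(\cdot)^2 + \tfrac{1}{4}\widehat h(\cdot)^2$ together with cross terms $\pm\tfrac{1}{2}\widehat g(\cdot)\widehat h(\cdot)$ that carry opposite signs in the two cases. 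The diagonal contributions, combined with the protocol's recursive structure (its output on nonempty $S$ consists of two ``root'' characters followed by the appropriate sub-protocol's output on $S \setminus \{i\}$), produce exactly $\tfrac{1}{2}L(T_0) + \tfrac{1}{2}L(T_1)$ plus a root overhead of $2$ per nonempty $S$. The cross terms telescope via Parseval into $-2\E[g]\,\E[h]$, and combined with the overhead yield the boundary term $2(1 - \E[g]\E[h])$. Every step is tracked as an equality, without introducing any nonnegative squared-error bounds.

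For the second recursion: by the definition given in the excerpt, $\Cov[T] = \Cov[g,h] + \tfrac{1}{2}(\Cov[T_0] + \Cov[T_1])$, and a direct expansion of $f = \tfrac{1 - x_i}{2}\,g + \tfrac{1 + x_i}{2}\,h$ gives $\Inf_i[f] = \tfrac{1}{2}(1 - \E[gh])$ and $\Inf_j[f] = \tfrac{1}{2}(\Inf_j[g] + \Inf_j[h])$ for $j \neq i$, so $\Inf[f] = \tfrac{1}{2}(\Inf[g] + \Inf[h]) + \tfrac{1}{2}(1 - \E[gh])$. Substituting the inductive hypothesis $L(T_b) = 4\,\Inf[f_b] + 2\,\Cov[T_b]$ for $b \in \{0,1\}$ and applying $\Cov[g,h] = \E[gh] - \E[g]\,\E[h]$, the two $\E[gh]$ contributions cancel and the remainder simplifies to $2(1-\E[g]\E[h]) + \tfrac{1}{2}(L(T_0) + L(T_1))$. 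Both quantities thus satisfy the displayed recursion with identical base case, so $L(T) = 4\Inf[f] + 2\Cov[T]$ for every $T$. The main obstacle is the exact bookkeeping in the first step: the protocol's subtree choice on sets $S$ coverable in either $T_0$ or $T_1$ must be pinned down so that the cross terms $\pm\tfrac{1}{2}\widehat g(S)\widehat h(S)$ aggregate to $-\E[g]\,\E[h]$ as an equality, rather than leaving a nonnegative residual that would only give an upper bound.
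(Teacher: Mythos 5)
There is a genuine gap: the exact equality you claim is false, and it fails at precisely the point you flagged as the ``main obstacle.'' Your recursion for $4\Inf[f]+2\Cov[T]$ is correct, but the protocol length $L(T)$ does not satisfy that same recursion. Tracking the protocol exactly, the root level contributes $2$ characters precisely when $\bigx\neq\emptyset$, so
\[
L(T)\;=\;2\Var[f]\;+\;\tfrac{1}{2}\bigl(L(T_0)+L(T_1)\bigr),
\]
where the recursive term uses the fact that with probability $1/2$ each the protocol recurses into $T_0$ (resp.\ $T_1$) with input distributed as $\widehat{g}^2$ (resp.\ $\widehat{h}^2$) --- a distributional fact you also need but do not establish, and which the paper proves via Proposition~\ref{prop:commonfact} and the auxiliary protocol $\bigp'$. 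The boundary term is $2\Var[f]=2-2\widehat{f}(\emptyset)^2$ with $\widehat{f}(\emptyset)^2=\bigl(\tfrac{\E[g]+\E[h]}{2}\bigr)^2$, not $2(1-\E[g]\E[h])$: for nonempty $S'$ the cross terms $\pm\tfrac12\widehat{g}(S')\widehat{h}(S')$ cancel outright (the protocol's expected length is identical on $S'$ and $S'\cup\{i\}$), and the only surviving contribution comes from the empty set, which produces $\widehat{f}(\emptyset)^2$ rather than $\E[g]\E[h]$. The gap between the two boundary terms is $\tfrac12(\E[g]-\E[h])^2\ge 0$, so your claimed equality $L(T)=4\Inf[f]+2\Cov[T]$ is simply false; already for $f(x)=x_1$ the protocol always outputs $2$ characters while $4\Inf[f]+2\Cov[T]=4$.

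What is true --- and what the lemma means, as the surrounding text (``at most'') and Theorem~\ref{thm:readkprot} make explicit --- is the inequality $L(T)\le 4\Inf[f]+2\Cov[T]$, and the nonnegative squared residual you tried to avoid is exactly where it enters: $\widehat{f}(\emptyset)^2\ge\E[g]\E[h]$, i.e.\ the step $ab\le\bigl(\tfrac{a+b}{2}\bigr)^2$ that the paper uses in Lemma~\ref{lem:inf}. With that one correction (and the distributional claim above pinned down), your global induction on $L(T)$ does go through and gives a legitimate alternative to the paper's coordinate-by-coordinate argument, which instead proves $p_i(T)\le 2\Inf_i[f]+\Cov_i[T]$ (Lemma~\ref{lem:main}) and sums over $i$. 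As written, however, the proposal asserts an equality that a one-node tree already refutes.
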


It is a simple fact (see Proposition \ref{prop:depthdcov}) that $\Cov[T] \leq d$ if $T$ has expected depth $d$, and so Lemma~\ref{thm:covthm} implies that the length of the protocol is at most $4\cdot \Inf[f] + 2d$, which gives a part of Theorem~\ref{thm:readkprot}.

Upper bounding $\Cov[T]$ for read-$k$ decision trees is more complicated. For intuition, consider the case when $k = 2$.  
Again, suppose $T$'s left and right subtrees $T_0$ and $T_1$ compute the functions $g$ and $h$.
At the extreme, if $\Cov[g, h]$ were to equal one, then this would mean that $g = h$, in which case every variable relevant to $g$ is also relevant to $h$, and vice versa.  In particular, every variable queried in $T_0$ to compute $g$ must also be queried in $T_1$ to compute $h$, meaning that $T_0$ cannot have any variables which appear twice (as $T$ is read-twice).  And if $T_0$ is read-once, then the functions computed by its left and right subtrees must be entirely uncorrelated, as they depend on different variables.  Thus, in this case $\Cov[T_0] = \Cov[T_1] = 0$, so $\Cov[T] = \Cov[g, h] = 1$.  The result, intuitively, is that $T$ has a finite amount of tree covariance to go around, and once it uses it up at a given level, the remaining levels must be uncorrelated.  We extend this intuition into a bound on the tree covariance for read-$k$ decision trees.

\begin{lemma} \label{thm:covbound}
Let $f:\{-1, 1\}^n \rightarrow \{-1, 1\}$ be computed by a read-$k$ decision tree $T$.  Then $\Cov[T] \leq (k-1) \cdot \Var[f]$.
\end{lemma}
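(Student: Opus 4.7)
The plan is to prove the lemma by strong induction on the number of internal nodes of $T$. The base case, where $T$ is a leaf, is trivial since $\Cov[T] = 0$ and $\Var[f] = 0$. For the inductive step, let $T$ query $x_i$ at its root with subtrees $T_0, T_1$ computing $g, h$. Two identities drive the argument: the recursive definition
\[
\Cov[T] = \Cov[g,h] + \tfrac12\Cov[T_0] + \tfrac12\Cov[T_1],
\]
and the standard decision-tree variance decomposition $\Var[f] = \tfrac12(\Var[g]+\Var[h]) + \tfrac14(\E[g]-\E[h])^2$. The subtrees $T_0, T_1$ are themselves (proper) read-$k$ decision trees.

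The main structural fact to exploit is: if a variable $x_j$ appears in both $T_0$ and $T_1$, then by the read-$k$ hypothesis $r_{T_0}(j) + r_{T_1}(j) \leq k$, so in fact $r_{T_0}(j), r_{T_1}(j) \leq k-1$. Combined with the observation that $g$ and $h$ are conditionally independent given the shared variables $V$ (since their exclusive variables are disjoint), we get $\Cov[g,h] = \Cov[\bar g, \bar h] \leq \tfrac12(\Var[\bar g] + \Var[\bar h])$ by AM-GM, where $\bar g = \E[g\mid x_V]$ and $\bar h = \E[h\mid x_V]$.

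Applying the inductive hypothesis to $T_0$ and $T_1$ with their actual read-counts $k_0, k_1 \leq k$ reduces the goal to showing
\[
\Cov[g,h] \leq \tfrac{k-k_0}{2}\Var[g] + \tfrac{k-k_1}{2}\Var[h] + \tfrac{k-1}{4}(\E[g]-\E[h])^2.
\]
When $V=\emptyset$, $\Cov[g,h]=0$ and the claim is immediate. When $V$ is nonempty and the max-read-count variables of $T_0$ and $T_1$ both lie in $V$, we get $k_0,k_1 \leq k-1$, which yields enough slack on the RHS to absorb $\Cov[g,h]$ by the AM-GM estimate above. The main obstacle, and subtlest case, is when some variable \emph{exclusive} to $T_0$ (or $T_1$) has read-count $k$, making $k_0 = k$ and eliminating the slack on the $\Var[g]$ term. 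To overcome this, I would strengthen the inductive hypothesis to control $\Cov[T]$ separately in terms of the projected variance $\Var[\bar g]$ (which is precisely what governs $\Cov[g,h]$) and the residual variance $\Var_V[g] := \E[\Var[g \mid x_V]]$, via the law of total variance $\Var[g]=\Var[\bar g]+\Var_V[g]$. Exclusive heavy variables contribute only to the residual part, while shared variables (forced to have read-count $\leq k-1$ in each subtree) provide budget savings on the $\Var[\bar g]$ part, which is exactly what the AM-GM bound on $\Cov[g,h]$ needs; this alignment closes the induction.
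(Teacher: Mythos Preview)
Your proposal correctly identifies the obstacle but does not actually close the induction. You track only the global maxima $k_0,k_1$ of the subtrees, and you rightly observe that an exclusive variable of $T_0$ with read-count $k$ forces $k_0=k$, killing the slack. Your suggested remedy---``strengthen the hypothesis to control $\Cov[T]$ in terms of $\Var[\bar g]$ and $\Var_V[g]$''---is not a well-posed inductive statement: the set $V$ of shared variables is a property of the \emph{pair} $(T_0,T_1)$ at a particular split, not an intrinsic parameter of a subtree, so there is no single decomposition $\Var[g]=\Var[\bar g]+\Var_V[g]$ that can be carried through the recursion. At the next level down, the shared set changes, and the ``projected'' and ``residual'' pieces do not compose. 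So as written the argument has a genuine gap at exactly the point you flag as the ``main obstacle.''

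The paper resolves this by strengthening the hypothesis at the Fourier-coefficient level rather than at the variance level. Define $m_T(S)$ to be the maximum over $i\in S$ of the number of occurrences of $x_i$ in $T$, and prove
\[
\Cov[T] \;\le\; \sum_{S\neq\emptyset}\bigl(m_T(S)-1\bigr)\,\widehat f(S)^2,
\]
which specializes to $(k-1)\Var[f]$ when $T$ is read-$k$. Your AM--GM bound on $\Cov[g,h]$ is exactly the paper's bound $2\Cov[g,h]\le\sum_{\emptyset\neq S\subseteq J}\widehat g(S)^2+\widehat h(S)^2$ (here $J$ is your $V$), but now the inductive hypothesis on $T_0$ gives coefficients $m_{T_0}(S)-1$, and the crucial structural fact is that \emph{for $S\subseteq J$} one has $m_{T_0}(S)\le m_T(S)-1$, because every variable in $S$ also appears in $T_1$. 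This produces an extra $-1$ precisely on the terms $\widehat g(S)^2$ with $S\subseteq J$, exactly cancelling the $\Cov[g,h]$ contribution. Sets $S\nsubseteq J$ contain an exclusive variable and contribute nothing to $\Cov[g,h]$, so the plain bound $m_{T_0}(S)\le m_T(S)$ suffices there. This per-set bookkeeping is what makes the accounting close; your projected/residual split is the right intuition but at too coarse a granularity.
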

Combining this lemma with Lemma~\ref{thm:covthm} and the fact that $\Var[f] \leq \Inf[f]$ shows that the length of the protocol is at most $(2k+2)\cdot \Inf[f]$, giving the remaining part of Theorem~\ref{thm:readkprot}.

\subsection{Read-Once Composition Protocol}
Theorem \ref{thm:informal} from \cite{OT13} shows that composing functions which satisfy the FEI$^+$ conjecture will result in a function which also satisfies FEI$^+$.  We give a new proof of this theorem by proving an analogous result (Theorem \ref{thm:protocolcomp} below) for \textit{protocols} instead of entropy; our proof shows how to construct an efficient protocol for the composed function using the efficient protocols of each of the functions in the composition.  To complete the proof of Theorem \ref{thm:informal}, which is a statement about entropies, one might try to use the source coding theorem to translate our result about protocols to a result about entropies.  This can't be done so simply, however, as Theorem~\ref{thm:shannon} only gives an approximate correspondence between protocols and entropy. We are able to get this step to work by using a (mostly standard) parallelizing technique.  We now describe each of these two steps in more detail.



The FEI$^+$ conjecture works with the spectral distribution conditioned on the sample being non-empty. We write this distribution as
$\bigy \sim \widehat{f}^2 \setminus \emptyset,$ which is defined so that:
\begin{equation*}
\Pr[\bigy = \emptyset] = 0,
\quad\text{and}\quad
\Pr[\bigy = S] = \frac{\widehat{f}(S)^2}{1 - \widehat{f}(\emptyset)^2},
\end{equation*}
for any $S \neq \emptyset$.  
We assume here that $\widehat{f}(\emptyset)^2 < 1$ (the FEI Conjecture is trivial when $\widehat{f}(\emptyset)^2 = 1$).
For our purposes, a prefix-free protocol $\bigp$ for $\bigy$ is the same as an almost prefix-free protocol for $\bigx \sim \widehat{f}^2$:  the equality $\E[|\bigp(\bigx)|] = \Var[f]\cdot\E[|\bigp(\bigy)|]$ holds, and Lemma \ref{lem:nozero} tells us that we may obtain a bound on the entropy of $\bigx$ using a prefix-free protocol for $\bigy$.

The FEI$^+$ conjecture in \cite{OT13} strengthens the FEI conjecture and generalizes it to product distributions, making it amenable to composition.   We use $\widetilde{f}$ to denote the Fourier transform of $f$ with respect to a product distribution $\mu$ (here each bit $x_i$ is set so that $\E_\mu[x_i]=\mu_i$).  We now state the main definition from $\cite{OT13}$:
\begin{definition}\label{def:feiplus}
Let $f:\{-1,1\}^n_\mu \to \{-1,1\}$ be a Boolean function. The function $f$ satisfies FEI$^+$ with constant $C$ if
$$  \sum_{S\neq \emptyset} \widetilde{f}(S)^2\log \left(\frac{\prod_{i\in S} (1-\mu^2_i)}{\widetilde{f}(S)^2}
\right)\leq C \cdot \sum_{S\neq \emptyset} \widetilde{f}(S)^2 (|S|-1).$$

\end{definition}
\noindent In~\cite{OT13}, it was conjectured that for some constant $C$, every Boolean function satisfies FEI$^+$ with constant $C$.  They were in fact able to show that every Boolean function $f$ satisfies FEI$^+$ with ``constant" $2^{O(n)}$.\footnote{It is known that one can improve this to $O(\log(n))$ in the unbiased case when all the $\mu_i$'s are zero.}

Our first step is to reformulate what it means to ``satisfy the FEI$^+$ conjecture with constant $C$"
as a statement about the existence of an efficient protocol:
\begin{definition}
Let $f:\{-1, 1\}_\mu^n \rightarrow \{-1, 1\}$ be a function over the $\mu$-biased variables $x_1, \ldots, x_n$, and let $\bigy \sim \widetilde{f}^2\setminus \emptyset$.  Let $P$ be a prefix-free protocol for communicating the value of $\bigy$.  Then $P$ is a \emph{$C$-good} protocol for $f$ under bias $\mu$ if
\begin{equation*}
\E[|P(\bigy)|] \leq C\cdot\left( \E[|\bigy|] - 1\right) + \sum_{i} \Pr[i \in \bigy] \cdot \log \frac{1}{1-\mu_i^2} + \log \Var_p[f].
\end{equation*}
\end{definition}
This definition can be derived by rearranging the inequality in Definition \ref{def:feiplus} to place \newline $\sum_{S\neq \emptyset} \widetilde{f}(S)^2\log \frac{1}{\widetilde{f}(S)^2}$ on the left-hand side, and then replacing $\sum_{S\neq \emptyset} \widetilde{f}(S)^2 \log \frac{1}{\widetilde{f}(S)^2} = \ent[\bigy]$ with $\E[|P(\bigy)|]$.
Because $\ent[\bigy] \leq \E[|P(\bigy)|]$, any function with a good protocol automatically satisfies FEI$^+$:
\begin{fact}\label{fact:entropyprotocol}
Suppose there exists a $C$-good protocol for $f$ under bias $\mu$.  Then $f$ satisfies (the $\mu$-biased) FEI$^+$ with constant $C$.
\end{fact}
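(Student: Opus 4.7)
The plan is to derive the FEI$^+$ inequality of Definition~\ref{def:feiplus} from the $C$-good protocol bound by running Shannon's source coding theorem (Theorem~\ref{thm:shannon}) on $\bigy$ and then translating the resulting inequality back to the unconditional Fourier spectrum. The only subtlety is tracking the normalization factor $\Var_\mu[f] = 1 - \widetilde{f}(\emptyset)^2 = \sum_{S\neq\emptyset}\widetilde{f}(S)^2$, which converts probabilities under $\bigy$ into Fourier weights.

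The first step is to apply Theorem~\ref{thm:shannon} with $|\Sigma|=2$ to the prefix-free protocol $P$ for $\bigy$ to obtain $\ent[\bigy]\le\E[|P(\bigy)|]$, and then combine this with the $C$-good hypothesis to get
\begin{equation*}
\ent[\bigy]\;\le\; C\cdot(\E[|\bigy|]-1) + \sum_i \Pr[i\in\bigy]\cdot\log\tfrac{1}{1-\mu_i^2} + \log\Var_\mu[f].
\end{equation*}
The second step is to multiply both sides by $\Var_\mu[f]$ and expand each term using $\Pr[\bigy=S] = \widetilde{f}(S)^2/\Var_\mu[f]$ for $S\neq\emptyset$. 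On the left, this yields
\begin{equation*}
\Var_\mu[f]\cdot\ent[\bigy] \;=\; \sum_{S\neq\emptyset}\widetilde{f}(S)^2\log\tfrac{1}{\widetilde{f}(S)^2} + \Var_\mu[f]\cdot\log\Var_\mu[f],
\end{equation*}
where the extra $\Var_\mu[f]\log\Var_\mu[f]$ term arises precisely because $\ent[\bigy]$ is the entropy of the \emph{conditional} distribution. On the right, the identities $\Var_\mu[f]\cdot(\E[|\bigy|]-1) = \sum_{S\neq\emptyset}\widetilde{f}(S)^2(|S|-1)$ and $\Var_\mu[f]\cdot\Pr[i\in\bigy] = \sum_{S\ni i}\widetilde{f}(S)^2$ let us rewrite the bias term as $\sum_{S\neq\emptyset}\widetilde{f}(S)^2\log\prod_{i\in S}\tfrac{1}{1-\mu_i^2}$ by swapping the order of summation.

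The third and final step is bookkeeping: the additive $\Var_\mu[f]\log\Var_\mu[f]$ term appears on both sides and cancels, after which rearranging the two logarithms into a single ratio yields exactly
\begin{equation*}
\sum_{S\neq\emptyset}\widetilde{f}(S)^2\log\left(\tfrac{\prod_{i\in S}(1-\mu_i^2)}{\widetilde{f}(S)^2}\right) \;\le\; C\cdot\sum_{S\neq\emptyset}\widetilde{f}(S)^2(|S|-1),
\end{equation*}
which is the FEI$^+$ inequality. There is no real obstacle here; the content of the definition of $C$-good is engineered so that after multiplying by $\Var_\mu[f]$ every term lines up with the corresponding term in Definition~\ref{def:feiplus}, and the spurious $\log\Var_\mu[f]$ contribution from the conditioning on $\bigy\neq\emptyset$ is precisely why that term was built into the definition of $C$-good in the first place. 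The only thing to watch is the sign and normalization of $\log\Var_\mu[f]$, which one can double-check by noting that when $f$ is balanced $\Var_\mu[f]=1$ and this term vanishes.
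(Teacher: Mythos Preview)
Your proof is correct and is essentially the same approach the paper takes. The paper treats this fact as immediate, justifying it with the single observation ``because $\ent[\bigy] \leq \E[|P(\bigy)|]$'' together with the remark that the definition of a $C$-good protocol was obtained by rearranging Definition~\ref{def:feiplus} and replacing the entropy term by $\E[|P(\bigy)|]$; your argument simply spells out this rearrangement in full, correctly tracking the $\Var_\mu[f]\log\Var_\mu[f]$ term that arises from the conditioning and verifying that it cancels.
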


We then prove the following composition theorem for protocols in  Section \ref{sec:composition}:
\begin{theorem}\label{thm:protocolcomp}
Let $h(x^1, \ldots, x^k) = f(g_1(x^1), \ldots, g_k(x^k))$, where the domain of $h$ is endowed with a product distribution $\mu$.  Suppose there are $C$-good protocols for $g_1,\ldots,g_k$ under $\mu$ and a $C$-good protocol for $f$ under bias $\eta = \langle \E_\mu[g_1],\ldots,\E_\mu[g_k]\rangle$. Then there exists a $C$-good protocol for $h$ under bias $\mu$.
\end{theorem}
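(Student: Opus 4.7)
The plan is to define the protocol $P_h$ by the natural concatenation suggested by the Fourier structure of the composition. First I would establish the key Fourier identity: writing $\bigy_h \sim \widetilde{h}^2\setminus\emptyset$ as a disjoint union $\bigy_h = S_1 \sqcup\cdots\sqcup S_k$ with $S_j\subseteq x^j$, and setting $T := \{j : S_j\neq\emptyset\}$, one has
\begin{equation*}
\widetilde{h}(S_1,\ldots,S_k) \;=\; \widetilde{f}(T)\cdot\prod_{j\in T}\frac{\widetilde{g_j}(S_j)}{\sqrt{1-\eta_j^2}}.
\end{equation*}
This follows by substituting $y_j=g_j(x^j)$ into the $\eta$-biased Fourier expansion of $f$, then expanding each $(g_j(x^j)-\eta_j)/\sqrt{1-\eta_j^2}$ in the $\mu$-biased basis; its $\emptyset$-coefficient vanishes, so only $T=\{j: S_j\neq\emptyset\}$ contributes. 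Squaring gives a two-stage sampler for $\bigy_h$: sample $T$ from $\widetilde{f}^2\setminus\emptyset$ under bias $\eta$, then independently sample each $S_j$ ($j\in T$) from $\widetilde{g_j}^2\setminus\emptyset$ under bias $\mu$, and return $\bigsqcup_j S_j$.

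Given this sampler, let $P_h$ first emit $P_f(T)$ and then, for each $j\in T$ in canonical order (which is readable off $P_f(T)$ because $P_f$ is prefix-free), emit $P_{g_j}(S_j)$. Prefix-freeness is preserved. By linearity,
\begin{equation*}
\E[|P_h(\bigy_h)|] \;=\; \E[|P_f(T)|]+\sum_j \Pr[j\in T]\cdot\E_{S_j\sim\widetilde{g_j}^2\setminus\emptyset}[|P_{g_j}(S_j)|],
\end{equation*}
where $\Pr[j\in T]$ is taken over $T\sim\widetilde{f}^2\setminus\emptyset$. I would then substitute the $C$-good bounds for $P_f$ (under $\eta$) and each $P_{g_j}$ (under $\mu$) and collect terms.

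Writing $e_j := \Pr[j\in T]$, the $C$-coefficient collapses cleanly: $C\bigl[(\E[|T|]-1)+\sum_j e_j(\E[|S_j|]-1)\bigr] = C(\E[|\bigy_h|]-1)$, using $\sum_j e_j=\E[|T|]$ and $\E[|\bigy_h|]=\sum_j e_j \E[|S_j|]$. The bias-penalty contributions from the $P_{g_j}$'s combine into $\sum_i\Pr[i\in\bigy_h]\log\tfrac{1}{1-\mu_i^2}$, since $\Pr[i\in\bigy_h]=e_j\cdot\Pr_{S_j\sim\widetilde{g_j}^2\setminus\emptyset}[i\in S_j]$ for $i\in x^j$.

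The subtle residue is $R := \sum_j e_j\log\tfrac{1}{1-\eta_j^2}+\log\Var_\eta[f]+\sum_j e_j\log\Var_\mu[g_j]$, which needs to be bounded by $\log\Var_\mu[h]$. Two elementary identities close it: since each $g_j$ is $\pm 1$-valued, $\Var_\mu[g_j]=1-\eta_j^2$, so the first and third summations cancel term-by-term; and since the $g_j(x^j)$'s are independent $\eta_j$-biased $\pm 1$ random variables under $\mu$, $\E_\mu[h]=\E_\eta[f]$ and hence $\Var_\mu[h]=\Var_\eta[f]$. Thus $R=\log\Var_\mu[h]$ exactly, yielding $C$-goodness of $P_h$. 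The main obstacle I anticipate is getting the biased Fourier identity for the composition right and carefully separating the $T=\emptyset$ event from the conditioning on $\bigy_h\neq\emptyset$; after that, the proof is a straightforward bookkeeping computation.
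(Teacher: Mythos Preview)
Your proposal is correct and is essentially the same as the paper's proof: the paper proves the identical Fourier identity (its Proposition~\ref{prop:fourier}), derives from it the same two distributional facts (Claims~\ref{claim:gdist} and~\ref{claim:fdist}, your ``two-stage sampler''), defines exactly the same concatenated protocol $P_h$, and performs the same bookkeeping, including the cancellations $\Var_\mu[g_j]=1-\eta_j^2$ and $\Var_\eta[f]=\Var_\mu[h]$. The only cosmetic difference is that the paper phrases the sampler as conditional distributions of the pieces of a single draw $\bigy\sim\widetilde{h}^2\setminus\emptyset$ rather than as a generative procedure, but the content is identical.
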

Given a good protocol $P_f$ for $f$ and good protocols $P_1,\ldots,P_k$  for $g_1,\ldots, g_k$, we construct a good protocol for $h$ in the following way. Let $\bigy = \bigy_1 \circ \cdots \circ \bigy_k$ be drawn from $\widetilde{h}^2\setminus\emptyset,$ where each
$\bigy_i$ denotes the restriction of $\bigy$ to the relevant coordinates of $g_i$. Note that the $\bigy_i$'s form a partition of $\bigy$ because the $g_i$'s have disjoint inputs.
The protocol will use $P_f$ to specify which $\bigy_i$ are non-empty, and, for each such $i$, it will use $P_i(\bigy_i)$ to specify which of the bits relevant to $g_i$ are present in $\bigy$.  While outputting all of $P_1(\bigy_1),\cdots,P_k(\bigy_k)$ would be simpler and would suffice to completely specify $\bigy$,  this protocol will not be efficient when the  $g_i$'s have small variance (in this case the number of non-empty $\bigy_i$ may be quite small).

In fact, the set $S\subseteq [k]$ of non-empty $\bigy$ will be distributed according to $\widetilde{f}^2 \setminus \emptyset$, where $\widetilde{f}$ denotes the $\eta$-biased Fourier transformation of $f$, and furthermore, the sets $\bigy_i$ are distributed according to $\widetilde{g_i}^2\setminus \emptyset.$  This fact is somewhat implicit in the analysis of \cite{OT13}, though we find it somewhat clearer and simpler to prove in isolation, without reference to entropy.
The analysis of this protocol follows almost immediately from this fact, as the protocols $P_f$ and $P_1,\ldots,P_k$ are designed for these distributions.

This yields a composition theorem for protocols.  Our ultimate goal, however, is to prove the following composition theorem for FEI$^+$:
\begin{theorem}\label{thm:entropycomp}
Let $h(x^1, \ldots, x^k) = f(g_1(x^1), \ldots, g_k(x^k))$, where the domain of $h$ is endowed with a product distribution $\mu$.  Suppose $g_1, \ldots, g_k$ satisfy $\mu$-biased FEI$^+$ with constant $C$ and $f$ satisfies $\eta$-biased FEI$^+$ with constant $C$, where $\eta = \langle \E_\mu[g_1],\ldots,\E_\mu[g_k]\rangle$. Then $h$ satisfies $\mu$-biased FEI$^+$ with constant $C$.
\end{theorem}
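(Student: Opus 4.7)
The plan is to derive Theorem \ref{thm:entropycomp} from the protocol-level composition (Theorem \ref{thm:protocolcomp}) combined with Shannon's source coding theorem (Theorem \ref{thm:shannon}), using a standard block-coding parallelization to absorb Shannon's $+1$ additive overhead. A direct application of Shannon converts each of the $k+1$ FEI$^+$ hypotheses into a $C$-good protocol with additive slack $+1$; naively composing these would yield FEI$^+$ for $h$ with constant $C$ plus an additive slack of $k+1$. The parallelization amortizes this constant overhead to zero per sample.

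Fix a large $N$ and let $\bigy \sim \widetilde{h}^2 \setminus \emptyset$. Consider $N$ iid copies $\bigy^{(1)}, \ldots, \bigy^{(N)}$. As made explicit in the discussion of Theorem \ref{thm:protocolcomp}, each $\bigy^{(j)}$ decomposes into a support $S^{(j)} \subseteq [k]$ distributed as $\widetilde{f}^2 \setminus \emptyset$ and, conditionally on $S^{(j)}$, mutually independent restrictions $\bigy_i^{(j)}$ for $i \in S^{(j)}$, each distributed as $\widetilde{g_i}^2 \setminus \emptyset$. I build a prefix-free code for the joint tuple $(\bigy^{(j)})_j$ in two layers: (a) block-Shannon-code the $N$ iid samples $(S^{(j)})_j$ together in length at most $N\ent[S] + 1$; (b) for each $i \in [k]$, after the decoder has recovered the $S^{(j)}$'s, block-Shannon-code the $M_i := \#\{j : i \in S^{(j)}\}$ iid restrictions $(\bigy_i^{(j)})_{j : i \in S^{(j)}}$ in expected length at most $\E[M_i]\ent[\bigy_i] + 1$ (no separate encoding of $M_i$ is needed, since the decoder computes it from the outer block). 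Summing the two layers, the total expected length is at most $N\ent[S] + \sum_i \E[M_i]\ent[\bigy_i] + (k+1)$.

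Now apply the FEI$^+$ hypothesis to $\ent[S]$ (for $f$ under bias $\eta$) and to each $\ent[\bigy_i]$ (for $g_i$ under bias $\mu$), and combine using $\E[M_i] = N\Pr[i \in S]$ together with the compositional identities $\E[|\bigy|] = \sum_i \Pr[i \in S]\E[|\bigy_i|]$, $\Pr[\ell \in \bigy] = \Pr[i \in S]\Pr[\ell \in \bigy_i]$ for $\ell$ a coordinate of $g_i$, and $\Var_\mu[h] = \Var_\eta[f]$ (which follows from $\E_\mu[h] = \E_\eta[f]$). Substituting and collecting terms, the total length is at most $N \cdot R(h) + (k+1)$, where $R(h)$ is the right-hand side of the FEI$^+$ inequality for $h$ with constant $C$. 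Since $N\ent[\bigy]$ (the joint entropy of iid samples) is upper bounded by any prefix-free code's expected length, we obtain $\ent[\bigy] \leq R(h) + (k+1)/N$. Taking $N \to \infty$ yields $\ent[\bigy] \leq R(h)$, i.e., FEI$^+$ for $h$ with constant $C$ exactly.

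The main obstacle is verifying the compositional algebra that makes the sum of FEI$^+$ bounds for $f$ and the $g_i$'s collapse to the FEI$^+$ bound for $h$. Specifically, the $C$-coefficients combine via $\E[|\bigy|] = \sum_i \Pr[i \in S]\E[|\bigy_i|]$; the singleton-log terms decompose via $\Pr[\ell \in \bigy] = \Pr[i \in S]\Pr[\ell \in \bigy_i]$; and the variance-log terms cancel using $\Var_\mu[h] = \Var_\eta[f]$ together with $1 - \eta_i^2 = \Var_\mu[g_i]$ (which holds by definition of $\eta_i = \E_\mu[g_i]$). None of these identities is deep, but their clean combination---implicit in the proof of Theorem \ref{thm:protocolcomp}---is what makes the parallelized argument bottom out at the correct constant $C$.
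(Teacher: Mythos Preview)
Your proposal is correct and is essentially the same as the paper's argument: the paper likewise builds a two-layer block code for $t$ iid samples of $\bigy$ (outer block for the $S^{(j)}$'s, inner blocks for the nonempty $\bigy_i^{(j)}$'s), uses Fact~\ref{fact:shannon} so that the Shannon overhead vanishes as $t\to\infty$, and then invokes the same cancellation identities (those appearing in Lemma~\ref{lem:bigfatlemma}) to land on the FEI$^+$ bound for $h$ with constant exactly $C$. Your write-up merges the protocol composition and the parallelization into a single pass, whereas the paper factors them through the ``$C$-good protocol'' abstraction, but the content is the same.
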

The naive strategy would be to apply Shannon's source coding theorem to derive $C$-good protocols for $f, g_1, \ldots, g_k$, apply Theorem~\ref{thm:protocolcomp} to give a $C$-good protocol for $h$, and then apply Fact~\ref{fact:entropyprotocol} to show that $h$ satisfies FEI$^+$.  Unfortunately, this fails in the first step: the source coding theorem loses an additive factor of $(+1)$ when translating from entropy to protocols, and this $(+1)$ means that $f, g_1, \ldots, g_k$ don't necessarily have $C$-good protocols.

To fix this problem, we use the well-known observation that the length of a protocol can be made arbitrarily close to the entropy of a given random variable by encoding many independent copies of that random variable.  Thus, by switching to protocols which encode multiple copies of $\bigy$ instead of just one, we can ensure that the first step goes through properly, and the other steps (such as Theorem~\ref{thm:protocolcomp}) go through nearly identically in this setting as well. As this part of the argument is essentially standard, we sketch it briefly in Appendix \ref{sec:parallel}.

\subsection{Organization}
The decision tree results can be found in Section~\ref{sec:readk} , and the FEI$^+$ results can be found in Section~\ref{sec:composition}.  The appendices mostly contain proofs of simple lemmas.  Appendix~\ref{app:biginf} contains the argument for why the restriction on the total influence of $f$ in Theorem~\ref{thm:depthd} is necessary.

\paragraph{Proofs of the main theorems.} Theorem~\ref{thm:readk} and Theorem~\ref{thm:depthd} follow from Lemma~\ref{lem:nozero} and Theorem~\ref{thm:readkprot}. 
Theorem~\ref{thm:informal} follows from Theorem~\ref{thm:protocolcomp} (proved in Section~\ref{sec:composition}) and Theorem~\ref{thm:entropycomp}.

\section{Entropy-Influence for read-$k$ decision trees}\label{sec:readk}
In this section, we analyze our communication protocol for decision trees .    We begin with some preliminary definitions in Section~\ref{sec:defs}.  Then, as a simple first step, we consider the case of read-once decision trees in Section~\ref{sec:readonce}.  Finally, we prove Lemma~\ref{thm:covthm} in Section~\ref{sec:readk-prot} and Lemma~\ref{thm:covbound} in Section~\ref{sec:readkcov}. Together, these prove Theorem \ref{thm:readkprot}.

\subsection{Definitions and Notation}\label{sec:defs}

\paragraph{Fourier analysis.}
Unless stated otherwise, a random input $\x \in \{-1, 1\}^n$ has the uniform distribution.
Any function $f:\{-1, 1\}^n \rightarrow \mathbb{R}$ can be written as
\begin{equation*}
f(x) = \sum_{S \subseteq [n]}\widehat{f}(S) \chi_S(x).
\end{equation*}
The $\widehat{f}(S)$'s are the \emph{Fourier coefficients} of $f$, and for each $S\subseteq [n]$, the parity function $\chi_S$ is defined as $\chi_S(x) = \prod_{i \in S} x_i$. \emph{Parseval's equation} will be important for us, which states that $\E_\x[f(\x)^2] = \sum_S \widehat{f}(S)^2$.  In particular, if $f$ is $\pm 1$-valued, then this sum equals one, and so the squared coefficients $\widehat{f}(S)^2$ form a probability distribution.  We will also need the formula $\Var[f] = \sum_{S \neq \emptyset} \widehat{f}(S)^2$.  We note that if $\bigx \sim \widehat{f}^2$, then $\Pr[\bigx \neq \emptyset] = \Var[f]$.

The \emph{influence} of a variable $x_i$ on $f$ is $\Inf_i[f] := \Pr_\bx[f(\bx)\neq f(\bx^{\oplus i})]$, where $\bx^{\oplus i}$ is $\bx$ with the $i$-th bit flipped.  The \emph{total influence} of $f$ is $\Inf[f] := \sum_i \Inf_i[f]$, and it is simple to show that $\Inf[f]$ can also be written as $\Inf[f] = \sum_{S \neq \emptyset} |S| \widehat{f}(S)^2$.  Comparing this to the formula for $\Var[f]$ shows that $\Var[f] \leq \Inf[f]$.  This is all the Fourier analysis we will need; for a more comprehensive introduction to the subject, see~\cite{OD13}.

\paragraph{Decision trees.}
Decision trees are a standard model of computation, and we omit their definition  (see, for example, \cite{OWZ11} for a definition).  Given a tree $T$, we will call the subtree corresponding to the $+1$ edge the \emph{left} subtree and the subtree corresponding to the $-1$ edge the \emph{right} subtree. We will assume that if $T$ is a decision tree, then no variable appears more than once in any root-to-leaf path of $T$.  If this is not the case, then $T$ can be simplified.  We say that $T$ is a \emph{read-$k$} decision tree if no variable is queried in more than $k$ locations of $T$.

Given a decision tree $T$, if $v$ is a node of $T$, then $l(v)$ is the \emph{label} of $v$, i.e. the coordinate in $x$ which is queried at node $v$.  Let $r(T)$ be the root node of $T$.  Next, set $d(v)$ to be the depth of $v$ in $T$.  We start counting the depth at~$0$, so that $d(r(T)) = 0$.  The \emph{expected depth} of $T$ is the average number of bits $T$ queries on a uniformly random input $\bx$.  Since a given node $v$ is reached with probability $2^{-d(v)}$, the expected depth of $T$ may be written as
\begin{equation}
\sum_{v \in T} 2^{-d(v)}.\label{eq:expecteddepth}
\end{equation}

\ignore{Given a decision tree $T$ and a path $P = p_1, \ldots, p_k$ in the tree (starting at the root $p_1$), define the \emph{description} of the path to be the sequence of bits $b_1, \ldots, b_k$ which, if read in that order, would result in traversing the given path.
Let $r(T)$ be the root of $T$.  If $v$ is an internal node of $T$, then $l(v)$ is the \emph{label} of $v$, i.e. the coordinate in $x$ which is queried at node $v$.  Next, set $d(v)$ to be the depth of $v$ in $T$.  We start counting the depth at~$0$, so that $d(r(T)) = 0$.  The depth of a tree is largest depth of any of its leaves. The expected depth of a tree is defined to be the expected depth of a random leaf (chosen by a uniform random input to the tree). Since each leaf is reached with probability $2^{-d(\ell)}$, the expected depth of $T$ may be written as:
$$\sum_{\ell \in \text{leaves}(T)} 2^{-d(\ell)} d(\ell).$$}

Given two functions $g,h:\{-1, 1\}^n \rightarrow \mathbb{R}$, define $\Cov[g, h] := \E_{\x}[(g(\x) - \E[g])\cdot(h(\x)-\E[h])]$.  Now we may state our main definition:
\begin{definition}
Given a decision tree $T$ and an internal node $v$, let $g$ be the function computed by $v$'s left subtree and $h$ be the function computed by $v$'s right subtree.  Then define
\begin{itemize}
\item $\Cov[v] := \Cov[g, h]$,
\item $\Cov_i[T] := \sum_{v: l(v) = i} \Cov[v] \cdot 2^{-d(v)}$, and
\item $\Cov[T] := \sum_{v \in T} \Cov[v] \cdot 2^{-d(v)}$.
\end{itemize}
\end{definition}
Note that $\Cov[T]$ may also be written as $\Cov[T] = \sum_{i \in [n]} \Cov_i[T]$.  Furthermore, if $T_0$ is $T$'s left subtree and $T_1$ is $T$'s right subtree, then $\Cov[T]$ may also be written recursively as $\Cov[T] = \Cov[g, h] + \frac{1}{2} \left(\Cov[T_0] + \Cov[T_1]\right)$, with the base case that $\Cov[T] = 0$ if $T$ performs no queries.
Intuitively, $\Cov[T]$ is a measure of the total correlation present in the structure of $T$. For example, $\Cov[T] = 0$ if $T$ is a read-once decision tree.  We note that when $T$ computes a Boolean function,  $\Cov[v] \leq 1$ for each $v \in T$.  Thus, in this case, it is immediate from Equation~\eqref{eq:expecteddepth} that the expected depth of $T$ is at least $\Cov[T]$.  This gives the following proposition.
\begin{proposition}\label{prop:depthdcov}
Let $f:\{-1, 1\}^n \rightarrow \{-1, 1\}$ be computed by $T$, a decision tree with expected depth $d$. Then $\Cov[T] \leq d$.
\end{proposition}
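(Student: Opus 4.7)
The plan is short because the proposition is essentially an immediate consequence of the definitions together with one simple bound on covariance.

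First, I would establish that $\Cov[v] \leq 1$ for every internal node $v$ of $T$. Since $f$ is $\{-1,1\}$-valued, the functions $g$ and $h$ computed by the left and right subtrees of $v$ are each $\{-1,1\}$-valued as well, so $\Var[g], \Var[h] \leq 1$. By the Cauchy--Schwarz inequality,
\begin{equation*}
\Cov[v] = \Cov[g,h] \leq \sqrt{\Var[g]\cdot\Var[h]} \leq 1.
\end{equation*}

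Second, I would plug this pointwise bound into the definition of $\Cov[T]$ and compare with the expression for the expected depth given in Equation~\eqref{eq:expecteddepth}:
\begin{equation*}
\Cov[T] \;=\; \sum_{v \in T} \Cov[v]\cdot 2^{-d(v)} \;\leq\; \sum_{v \in T} 2^{-d(v)} \;=\; d.
\end{equation*}

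There is no real obstacle here; the only content is the Cauchy--Schwarz step, and everything else is unpacking definitions. The hypothesis that $f$ is Boolean is used exclusively to ensure the $[-1,1]$ range bound on $g$ and $h$ at every internal node, which is what drives $\Cov[v] \leq 1$.
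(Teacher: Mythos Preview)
Your proposal is correct and matches the paper's own argument essentially verbatim: the paper simply notes that $\Cov[v] \leq 1$ for every internal node when $f$ is Boolean, then bounds $\Cov[T] = \sum_{v} \Cov[v]\cdot 2^{-d(v)}$ above by $\sum_{v} 2^{-d(v)} = d$ via Equation~\eqref{eq:expecteddepth}. The only cosmetic difference is that you spell out the Cauchy--Schwarz justification for $\Cov[v] \leq 1$, which the paper leaves implicit.
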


We will also need the following two propositions, which are proven in Appendix \ref{sec:readk-prelim}.
\begin{proposition}\label{prop:commonfact}
Let $f$ be computed by a decision tree $T$ whose left and right subfunctions are $g$ and $h$, respectively.  If $x_i$ is at the root of $T$ and $S$ is any subset of $[n] \setminus \{i\}$, then
\begin{equation*}
\widehat{f}(S)^2 + \widehat{f}(S \cup \{i\})^2 = \frac{1}{2}\left(\widehat{g}(S)^2 + \widehat{h}(S)^2\right).
\end{equation*}
\end{proposition}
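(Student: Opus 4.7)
The plan is to expand $f$ as a convex combination of $g$ and $h$ using the root variable $x_i$ as a selector, and then read off the Fourier coefficients of $f$ in terms of those of $g$ and $h$. Since the paper assumes no variable appears more than once on any root-to-leaf path, the subfunctions $g$ and $h$ do not depend on $x_i$, which is what makes this decomposition clean.

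Concretely, I would start by writing
\begin{equation*}
f(x) = \frac{1+x_i}{2}\, g(x) + \frac{1-x_i}{2}\, h(x) = \frac{g(x)+h(x)}{2} + \frac{g(x)-h(x)}{2}\, x_i,
\end{equation*}
using the convention from Section~\ref{sec:defs} that the left ($+1$) subtree computes $g$ and the right ($-1$) subtree computes $h$. Since neither $g$ nor $h$ involves $x_i$, every Fourier character in their expansions is of the form $\chi_S$ with $S \subseteq [n] \setminus \{i\}$. Multiplying such a character by $x_i$ gives $\chi_{S \cup \{i\}}$, so collecting coefficients yields the two identities
\begin{equation*}
\widehat{f}(S) = \frac{\widehat{g}(S) + \widehat{h}(S)}{2},
\qquad
\widehat{f}(S \cup \{i\}) = \frac{\widehat{g}(S) - \widehat{h}(S)}{2},
\end{equation*}
for every $S \subseteq [n] \setminus \{i\}$.

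The proposition then follows from the parallelogram identity $(a+b)^2 + (a-b)^2 = 2(a^2+b^2)$ applied with $a = \widehat{g}(S)$ and $b = \widehat{h}(S)$. There is no real obstacle here; the only care needed is to justify that $g$ and $h$ are functions of the variables other than $x_i$, which is exactly the simplifying assumption on decision trees stated in Section~\ref{sec:defs}. This proposition will later be the key algebraic identity used to relate $\Cov[v]$ at the root of $T$ to the Fourier weights split across $S$ and $S \cup \{i\}$.
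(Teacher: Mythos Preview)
Your proof is correct and is essentially identical to the paper's own argument: the paper writes $f = \tfrac{1+x_i}{2}g + \tfrac{1-x_i}{2}h$, reads off $\widehat{f}(S) = \tfrac{1}{2}(\widehat{g}(S)+\widehat{h}(S))$ and $\widehat{f}(S\cup\{i\}) = \tfrac{1}{2}(\widehat{g}(S)-\widehat{h}(S))$, and squares and adds. Your only addition is making explicit the reliance on the no-repeated-variable assumption and naming the parallelogram identity, both of which are fine.
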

\begin{proposition}\label{prop:obviousinfluencefact}
Assume the setup from Proposition~\ref{prop:commonfact}.  Then for all coordinates $j \neq i$,
\begin{equation*}
\Inf_j[f] = \frac{1}{2}\cdot(\Inf_j[g] + \Inf_j[h]).
\end{equation*}
\end{proposition}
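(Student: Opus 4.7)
The plan is to prove the identity by conditioning on the value of the root variable $x_i$. Since $x_i$ sits at the root of $T$, we have $f(\bx) = g(\bx)$ whenever $x_i = +1$ and $f(\bx) = h(\bx)$ whenever $x_i = -1$ (recalling that $g, h$ do not depend on the $i$-th coordinate). Crucially, because $j \neq i$, flipping the $j$-th bit of $\bx$ leaves $x_i$ unchanged, so both $\bx$ and $\bx^{\oplus j}$ descend into the same subtree of $T$. Unpacking the definition
\begin{equation*}
\Inf_j[f] = \Pr_{\bx}[f(\bx) \neq f(\bx^{\oplus j})]
\end{equation*}
and conditioning on $x_i$, each of the two halves (each with probability $1/2$) reduces exactly to $\Pr[g(\bx) \neq g(\bx^{\oplus j})] = \Inf_j[g]$ and $\Pr[h(\bx) \neq h(\bx^{\oplus j})] = \Inf_j[h]$, respectively. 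Averaging gives the claim.

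Alternatively, I could give a one-line Fourier derivation using Proposition~\ref{prop:commonfact}. Starting from $\Inf_j[f] = \sum_{S \ni j} \widehat{f}(S)^2$, I would split the sum according to whether a set contains $i$:
\begin{equation*}
\Inf_j[f] = \sum_{\substack{S \ni j \\ i \notin S}} \bigl(\widehat{f}(S)^2 + \widehat{f}(S \cup \{i\})^2\bigr) = \tfrac{1}{2}\sum_{\substack{S \ni j \\ i \notin S}} \bigl(\widehat{g}(S)^2 + \widehat{h}(S)^2\bigr) = \tfrac{1}{2}\bigl(\Inf_j[g] + \Inf_j[h]\bigr),
\end{equation*}
where the middle equality is Proposition~\ref{prop:commonfact} and the last uses that $g$ and $h$ do not depend on $x_i$, so the restriction $i \notin S$ is free.

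There is no real obstacle here; the proposition is essentially a restatement of the fact that $T$'s behavior factors cleanly through the root query. The only point requiring care is the hypothesis $j \neq i$: if $j = i$, then flipping bit $j$ would switch which subfunction is evaluated, and the simple decomposition above would no longer hold. I would likely present the probabilistic proof in the appendix since it is shorter and more transparent than the Fourier calculation.
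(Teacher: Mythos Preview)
Your probabilistic argument---conditioning on the value of the root variable $x_i$ and observing that flipping bit $j \neq i$ keeps $\bx$ and $\bx^{\oplus j}$ in the same subtree---is exactly the proof the paper gives. Your Fourier alternative via Proposition~\ref{prop:commonfact} is also correct but is not the route the paper takes; the paper opts for the direct probabilistic one-liner, as you anticipated.
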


\subsection{Read-once decision trees}\label{sec:readonce}
In this section, we will sketch the argument for read-once decision trees.  Let $f:\{-1, 1\}^n \rightarrow \{-1, 1\}$ be computed by a read-once decision tree $T$.  Given a decision tree $T$ and a path $P = v_1\rightarrow \cdots\rightarrow v_k$ in the tree (starting at the root $v_1$),  the \emph{description} of the path is the sequence of bits $b_1, \ldots, b_{k-1} \in \{0, 1\}$ which, if read in that order, would result in traversing the given path (here we are using the standard $1\leftrightarrow 0$ and $-1\leftrightarrow 1$ correspondance). Given a set $S$ for which $\widehat{f}(S) \neq 0$, our protocol will output the description of a path in which $S$ is a subset of $\{l(v_1), \ldots, l(v_k)\}$. In fact, our protocol will choose a \emph{minimal} such path containing $S$, in the sense that the path will stop once it has encountered all of the variables in $S$.  In a general decision tree, there could be many minimal paths containing $S$ and starting at the root, but because $T$ is read-once, there can only be one such path.  We may therefore state the protocol as:

\ignore{
\begin{center}
\textbf{Given $S \subseteq [n]$:}
\end{center}
\begin{itemize}
\item If $S = \emptyset$, output nothing.
\item There is a minimal path $P = p_1\rightarrow \cdots\rightarrow p_k$  containing the indices in $S$ which starts at $T$'s root.
\item Output the description of that path.
\item Output a bit sequence $b_1, \ldots, b_k \in \{0, 1\}^k$, where $b_i = 1$ iff $p_i \in \bigx$.
\item Terminate with a $\bot$.
\end{itemize}
}

\medskip
\fbox{\parbox{14.5cm}{
\begin{center}
\textbf{Given $S \subseteq [n]$:}
\end{center}
\begin{enumerate}
\item If $S = \emptyset$, output nothing.
\item There is a minimal path $P = p_1\rightarrow \cdots\rightarrow p_k$  containing the indices in $S$ which starts at $T$'s root.
\item Output the description of that path.
\item Output a bit sequence $b_1, \ldots, b_k \in \{0, 1\}$, where $b_i = 1$ iff $p_i \in \bigx$.
\item Terminate with a $\bot$.
\end{enumerate}
}}
\medskip

\noindent
We stress that the protocol is only required to work properly when $S$ corresponds to a nonzero Fourier coefficient, i.e.\ $\widehat{f}(S) \neq 0$.

Suppose that the path $P$ the protocol finds is of length $l$.  Then because the description of a path of length $l$ uses $l-1$ bits, the protocol outputs $2l$ characters in total.  Furthermore, the protocol accurately communicates the value of $S$: given the output of the protocol, one could reconstruct $S$ by following the path indicated by the first $l-1$ bits and including only those indices along the path which are tagged with a~$1$ in the second sequence.  So long as $S \neq \emptyset$, the output is terminated with a $\bot$ character.  Together, these mean that the protocol is an almost prefix-free protocol with alphabet size $|\Sigma| = 3$.

We are interested in the length of the protocol on input $\bigx \sim \widehat{f}^2$.  As shown above, the number of characters this protocol outputs is exactly twice the length of the path $P$.   Thus, we need to upper bound the average length of $P$.

Let us consider reasons why $P$ might be on average too long.   For example, because the protocol only considers paths starting at the root, the path output always contains the root variable (unless $\bigx = \emptyset$), even though this variable might have very low influence on $f$.  However, a simple argument shows that this worry is unfounded.  In particular, if $x_i$ is $T$'s root variable, then $\Inf_i[f] \geq \frac{1}{2}\Var[f]$ (we will show this later in Lemma~\ref{lem:inf}).  This inequality uses crucially the fact that $T$ is read-once.  The path $P$ contains $x_i$ whenever $\bigx \neq \emptyset$, which happens with probability $\Var[f]$.  Thus, the probability that $P$ contains $x_i$ is at most~$2\cdot\Inf_i[f]$.

An inductive argument allows us to bring this inequality down to the rest of the variables in the tree, showing that the probability $P$ contains a variable $x_j$ is at most $2 \cdot \Inf_j[f]$ (we will show this later in Lemma~\ref{lem:main}).  Summing this inequality over all $j$ shows that the expected length of $P$ is at most $2\cdot\Inf[f]$.  Thus, the protocol outputs at most $4\cdot\Inf[f]$ characters in expectation, proving Theorem~\ref{thm:covthm} in the $k=1$ case.

\subsection{General decision trees}\label{sec:readk-prot}

Let $f:\{-1, 1\}^n \rightarrow \{-1, 1\}$ be computed by a decision tree $T$.  Generalizing the above argument to work for $T$ requires some modifications.  The main change is that given $S \subseteq [n]$, there is no longer necessarily a unique minimal path starting from $T$'s root which contains the indices in $S$.  As Figure~\ref{fig:path} shows, there could be two paths to select from when, for example, $S = \{1, 3\}$.  We want our protocol to use the fewest characters possible, so the obvious choice is for it to simply use the shortest path possible.  This protocol is difficult to analyze, however, so we instead use a suboptimal protocol which constructs a path vertex-by-vertex probabilistically.  If $g$ is the function computed by $T$'s left subtree and $h$ is the function computed by $T$'s right subtree, then the first step of the path will be chosen based on the relative weight that $g$ and $h$ place on the set $S$, i.e. $\widehat{g}(S)^2$ versus $\widehat{h}(S)^2$.  As a result, the protocol is most easily stated recursively, as follows:

\ignore{
\begin{center}
\textbf{$\bigp(T, S)$:}
\end{center}
\begin{itemize}
\item If $S = \emptyset$, output nothing and terminate.
\item Let $g$ be the function computed by $T$'s left subtree $T_0$, and likewise let $h$ be the function computed by $T$'s right subtree $T_1$.
\item Let $x_i$ be $T$'s root variable.  If $i \in S$, output a~$1$.  Otherwise, output a~$0$.
\item Set $S' = S\setminus \{i\}$.  If $S' = \emptyset$, output $\bot$ and terminate.
\item With probability proportional to $\widehat{g}(S')^2$, output~$0$ and run $\bigp(T_0, S')$.
\item With probability proportional to $\widehat{h}(S')^2$, output~$1$ and run $\bigp(T_1, S')$.
\end{itemize}
}

\medskip
\fbox{\parbox{14.5cm}{
\begin{center}
\textbf{$\bigp(T, S)$:}
\end{center}
\begin{enumerate}
\item If $S = \emptyset$, output nothing and terminate.
\item Let $g$ be the function computed by $T$'s left subtree $T_0$, and likewise let $h$ be the function computed by $T$'s right subtree $T_1$.
\item Let $x_i$ be $T$'s root variable.  If $i \in S$, output a~$1$.  Otherwise, output a~$0$.
\item Set $S' = S\setminus \{i\}$.  If $S' = \emptyset$, output $\bot$ and terminate.
\item With probability proportional to $\widehat{g}(S')^2$, output~$0$ and run $\bigp(T_0, S')$.
\item With probability proportional to $\widehat{h}(S')^2$, output~$1$ and run $\bigp(T_1, S')$.
\end{enumerate}
}}
\medskip

This protocol outputs the same information that the protocol in Section~\ref{sec:readonce} does, only now the description of the path and the bit sequence are interleaved.  If this protocol outputs $2k$ characters, then characters~$2$,~$4$,~$\ldots$~,~$2k-2$ give a description of a path $P$,  characters~$1$,~$3$,~$\ldots$~,~$2k-1$ indicate which indices along the path $P$ are included in $S$, and the $2k$-th character is a $\bot$. As a result, this protocol is an almost prefix-free protocol with alphabet size $|\Sigma|=3$.  We will  refer to the path $P$ as the path the protocol outputs, selects, etc.

Let us now consider the length of the protocol on input $\bigx \sim \widehat{f}^2$.  The number of characters output is exactly twice the length of the path $P$ the protocol outputs.   Thus, we would like to upper bound the expected length of $P$.  Our main lemma will show that for a given variable $x_i$, we can upper-bound the probability that it appears in $P$ as follows:
\begin{lemma}\label{lem:main}
Let $f:\{-1, 1\}^n \rightarrow \{-1, 1\}$ be computed by a decision tree $T$, let $\bigx \sim \widehat{f}^2$, and let $p_i(T)$ be the probability that the path selected by $\bigp(T, \bigx)$ contains index $i$.  Then $p_i(T) \leq 2 \cdot \Inf_i[f] + \Cov_i[T]$.
\end{lemma}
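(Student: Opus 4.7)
The plan is to prove the bound by induction on the structure of $T$. The base case, where $T$ is a single leaf computing a constant function, is trivial: $p_i(T) = 0$, $\Inf_i[f] = 0$, and $\Cov_i[T] = 0$. For the inductive step, let $x_j$ be $T$'s root variable, and let $T_0, T_1$ be the left and right subtrees, computing $g, h$ respectively. I would split into two cases according to whether $j = i$.

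Case $j \neq i$: The protocol outputs a path containing $i$ only if it recurses into one of the subtrees and the sub-path returned by the recursion contains $i$. The key observation is that, by Proposition~\ref{prop:commonfact} applied at the root variable $x_j$, the joint probability that $\bigp(T, \bigx)$ recurses into $T_0$ with input $S' \neq \emptyset$ equals exactly $\tfrac{1}{2}\widehat{g}(S')^2$, and similarly $\tfrac{1}{2}\widehat{h}(S')^2$ for $T_1$. Summing over $S'$ collapses the recursion to $p_i(T) = \tfrac{1}{2} p_i(T_0) + \tfrac{1}{2} p_i(T_1)$. Applying the inductive hypothesis, using Proposition~\ref{prop:obviousinfluencefact} to combine $\Inf_i[g]$ and $\Inf_i[h]$ into $\Inf_i[f]$, and invoking the identity $\Cov_i[T] = \tfrac{1}{2}(\Cov_i[T_0] + \Cov_i[T_1])$ (which is immediate from the definition of $\Cov_i$, since no node labeled $i$ sits at depth zero and the depths in $T$ are one greater than those in $T_0, T_1$), closes this case exactly.

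Case $j = i$: The protocol always outputs a path containing $i$ whenever $\bigx \neq \emptyset$, so $p_i(T) = \Pr[\bigx \neq \emptyset] = \Var[f]$. Since no root-to-leaf path may query a variable twice and the root already queries $x_i$, the variable $x_i$ cannot appear anywhere in $T_0$ or $T_1$, so $\Cov_i[T] = \Cov[g, h]$. It then suffices to establish the Fourier identity $\Var[f] \leq 2\Inf_i[f] + \Cov[g, h]$. Writing $\widehat{f}(S\cup\{i\}) = \tfrac{1}{2}(\widehat{g}(S) - \widehat{h}(S))$ and $\widehat{f}(S) = \tfrac{1}{2}(\widehat{g}(S) + \widehat{h}(S))$ for $S \not\ni i$, expanding each of $\Inf_i[f]$, $\Cov[g,h] = \sum_{S\neq\emptyset}\widehat{g}(S)\widehat{h}(S)$, and $\Var[f]$ in these terms, and using Parseval on $g$ and $h$, a short calculation yields
\[
2\Inf_i[f] + \Cov[g, h] - \Var[f] \;=\; \tfrac{1}{4}\bigl(\widehat{g}(\emptyset) - \widehat{h}(\emptyset)\bigr)^2 \;\geq\; 0,
\]
which gives the desired inequality (with a satisfying slack that measures how much the estimate overshoots).

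I expect this last Fourier computation to be the main obstacle, since the recursive case is essentially forced once one notices that Proposition~\ref{prop:commonfact} makes the protocol's branching probabilities collapse into the clean factor $\tfrac{1}{2}$. The remaining work is routine bookkeeping about how $\Cov_i$ and $\Inf_i$ behave when passing to subtrees.
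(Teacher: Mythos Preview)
Your proposal is correct and follows essentially the same structural-induction strategy as the paper: handle the case where $x_i$ is at the root via the inequality $\Var[f] \leq 2\Inf_i[f] + \Cov[g,h]$ (which the paper isolates as Lemma~\ref{lem:inf}), and handle the other case by showing $p_i(T) = \tfrac{1}{2}(p_i(T_0)+p_i(T_1))$ using Proposition~\ref{prop:commonfact}. Your treatment of the recursion is in fact a bit more direct than the paper's, which introduces an auxiliary protocol $\bigp'$ to make the edge cases $S=\emptyset$ and $S=\{j\}$ fit the recursion, whereas you simply observe that those cases contribute nothing to $p_i$ and compute the branching probabilities straight from Proposition~\ref{prop:commonfact}.
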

By summing this lemma over $i \in [n]$, the expected length of $P$ is at most $2\cdot\Inf[f] + \Cov[T]$, and so the expected number of characters output by the protocol is at most $4\cdot \Inf[f]+2\cdot\Cov[T]$, which proves Lemma \ref{thm:covthm}.




In the special case when $T$ has expected depth $d$, Proposition~\ref{prop:depthdcov} tells us that $\Cov[T] \leq d$, so the protocol uses at most $4\cdot\Inf[f] + 2\cdot d$ characters in expectation, the bound given in Theorem~\ref{thm:readkprot}.  If we further assume that $\Inf[f] \geq 1$, then this quantity is at most $6d\cdot\Inf[f]$. Combining this with Lemma~\ref{lem:nozero} yields
our FEI bound for decision trees of expected depth $d$,
Theorem \ref{thm:depthd}.
In Appendix~\ref{app:biginf}, we argue that proving this theorem without the restriction that $\Inf[f]\geq1$ is unlikely so long as the Fourier Entropy-Influence conjecture remains unproven.  Next, as upper-bounding $\Cov[T]$ is more involved if $T$ is read-$k$, we will defer the proof of the FEI conjecture for read-$k$ decision trees to Section~\ref{sec:readkcov}.

Now we prove Lemma~\ref{lem:main}.  In Section~\ref{sec:readonce}, we stated that if $x_i$ is $T$'s root variable, then $\Inf_i[f] \geq \frac{1}{2}\Var[f]$, supposing that $T$ is read-once.  Unfortunately, this is not true for general (or even read-twice) decision trees.  For example, the root variable could have two identical subtrees as its children, in which case it has influence zero.  For this to happen, though, it must be the case that the two subfunctions have high covariance.

\begin{lemma}\label{lem:inf}
Let $f : \{-1, 1\}^n \rightarrow \{-1, 1\}$ be computed by a decision tree $T$.  If $x_i$ is at the root of $T$, then $\Inf_i[f] \geq \frac{1}{2}\Var[f]- \frac{1}{2}\Cov[r(T)]$.
\end{lemma}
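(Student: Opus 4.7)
Let $T_0$ and $T_1$ be the left and right subtrees of $T$, computing the $\pm 1$-valued functions $g$ and $h$ respectively, neither depending on $x_i$. Because $x_i$ sits at the root, $f$ admits the representation $f(x) = \tfrac{1+x_i}{2} g(x) + \tfrac{1-x_i}{2} h(x)$, so flipping $x_i$ simply swaps between the values $g(\x)$ and $h(\x)$. The plan is to write each of $\Inf_i[f]$, $\Var[f]$, and $\Cov[r(T)] = \Cov[g,h]$ directly in terms of the three numbers $\E[g]$, $\E[h]$, and $\E[gh]$, and then check that the desired inequality reduces to an obvious algebraic fact.

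First I would compute $\Inf_i[f]$: since $x_i$ is the root variable, $\Inf_i[f] = \Pr_{\x}[g(\x) \neq h(\x)]$, and since $g,h$ are $\pm 1$-valued, $(g-h)^2 \in \{0,4\}$, which gives
\[
\Inf_i[f] \;=\; \tfrac{1}{4}\E\bigl[(g-h)^2\bigr] \;=\; \tfrac{1}{2}\bigl(1 - \E[gh]\bigr).
\]
By definition $\Cov[r(T)] = \E[gh] - \E[g]\E[h]$, so adding $\tfrac{1}{2}\Cov[r(T)]$ to the above cancels the $\E[gh]$ contribution and yields
\[
\Inf_i[f] + \tfrac{1}{2}\Cov[r(T)] \;=\; \tfrac{1}{2}\bigl(1 - \E[g]\E[h]\bigr).
\]
On the other hand $\E[f] = \tfrac{1}{2}(\E[g] + \E[h])$ and $\E[f^2] = 1$, so
\[
\Var[f] \;=\; 1 - \tfrac{1}{4}(\E[g] + \E[h])^2.
\]

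Substituting these expressions, the claimed inequality $\Inf_i[f] + \tfrac{1}{2}\Cov[r(T)] \geq \tfrac{1}{2}\Var[f]$ becomes, after clearing the factor $\tfrac{1}{2}$,
\[
1 - \E[g]\E[h] \;\geq\; 1 - \tfrac{1}{4}(\E[g]+\E[h])^2,
\]
i.e.\ $(\E[g]+\E[h])^2 \geq 4\E[g]\E[h]$, which is exactly $(\E[g]-\E[h])^2 \geq 0$ and holds trivially. There is no real obstacle: once one parametrizes everything through $\E[g],\E[h],\E[gh]$, the lemma is a one-line algebraic identity, and the only ``insight'' is to recognize that $\Cov[r(T)]$ is precisely the correction term that absorbs the $\E[gh]$ cross-term in $\Inf_i[f]$. (An alternative derivation through Proposition~\ref{prop:commonfact}, by summing $\widehat{f}(S)^2 + \widehat{f}(S\cup\{i\})^2 = \tfrac{1}{2}(\widehat{g}(S)^2 + \widehat{h}(S)^2)$ over $S \not\ni i$ and isolating the $S=\emptyset$ term, leads to the same conclusion, but seems strictly more cumbersome than the direct computation above.)
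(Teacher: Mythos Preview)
Your proof is correct and takes essentially the same approach as the paper's: both write $\Inf_i[f] = \tfrac{1}{2}(1 - \E[gh])$, absorb the $\E[gh]$ term using $\Cov[g,h] = \E[gh] - \E[g]\E[h]$, and then reduce the remaining comparison with $\Var[f] = 1 - \bigl(\tfrac{\E[g]+\E[h]}{2}\bigr)^2$ to the inequality $ab \leq \bigl(\tfrac{a+b}{2}\bigr)^2$ (equivalently $(a-b)^2 \geq 0$). The only difference is cosmetic ordering of the algebra.
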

\begin{proof}
Let $g$ be the function computed by $T$'s left subtree and $h$ be the function computed by $T$'s right subtree, so that $f(x) = g(x)$ if $x_i = 1$, and $f(x) = h(x)$ if $x_i = -1$. Then $ \hat{f}(\emptyset)^2 = \left(\frac{\hat{g}(\emptyset) + \hat{h}(\emptyset)}{2}\right)^2$.
As a result,
\begin{align*}
\Inf_i[f]
&= \Pr[f(\x) \neq f(\x^{\oplus i})] \\
&= \Pr[g(\x) \neq h(\x)] \tag{because $g$ and $h$ don't depend on $x_i$}\\
& = \frac{1}{2} - \frac{1}{2} \E[g(\x) h(\x)]\\
& = \frac{1}{2} - \frac{1}{2} \hat{g}(\emptyset) \hat{h}(\emptyset)- \frac{1}{2}\Cov[g, h]\\
& \geq \frac{1}{2} - \frac{1}{2}\left(\frac{\hat{g}(\emptyset) + \hat{h}(\emptyset)}{2}\right)^2 - \frac{1}{2}\Cov[g, h]\tag{using $ab \leq \left(\frac{a + b}{2}\right)^2$}\\
&= \frac{1}{2}\left(1 - \hat{f}(\emptyset)^2\right) - \frac{1}{2}\Cov[g, h]\\
& = \frac{1}{2}\Var[f]- \frac{1}{2}\Cov[g, h].
\end{align*}
Because $\Cov[r(T)] = \Cov[g, h]$, this proves the lemma.
\end{proof}

We now use this to prove Lemma~\ref{lem:main}:
\begin{proof}[Proof of Lemma~\ref{lem:main}]
We prove this by structural induction on the tree $T$, based on whether $x_i$ is at the root of $T$. The lemma is clearly true if $i$ doesn't appear in $T$, so we will assume that it does.
\paragraph{Base case:} In this case, the root of $T$ is $x_i$.  By the protocol above, $x_i$ will \emph{always} be on the path $P$ unless $\bigx = \emptyset$, i.e. the path is empty.  Thus, the probability that $x_i$ is outputted is $1 - \hat{f}(\emptyset)^2 = \Var[f]$.  By Lemma~\ref{lem:inf}, we have that $2 \cdot \Inf[f] \geq \Var[f] - \Cov[r(T)] = p_i(T) - \Cov[r(T)]$.  Because $x_i$ is at the root, it can appear nowhere else in $T$. This means that $\Cov_i[T] = \Cov[r(T)]$, which concludes the base case.

\paragraph{Inductive step:}

In this case, the root of $T$ is not $x_i$, meaning that $x_i$ is queried in one (or both) of $T$'s subtrees.  Let $T_0$ be the left subtree of $T$ and $T_1$ its right subtree, and assume without loss of generality that the root of $T$ is $x_n$.  We will show the following pair of simple equalities:
\begin{enumerate}
\item $\Inf_i[f] = \frac{1}{2}\cdot \left(\Inf_i[g] +\Inf_i[h]\right)$, and \label{item:inf}
\item $p_i(T) = \frac{1}{2} \cdot \left(p_i(T_0) + p_i(T_1)\right)$. \label{item:probs}
\end{enumerate}
Equality~\ref{item:inf} follows directly from Proposition~\ref{prop:obviousinfluencefact}.  Before proving Equality~\ref{item:probs}, let's see how they imply the lemma.
\begin{align}
2\cdot \Inf_i[f]
&= \Inf_i[g] + \Inf_i[h] \nonumber\\
&\geq \frac{1}{2}\left(p_i(T_0) + p_i(T_1) - \Cov_i[T_0] - \Cov_i[T_1]\right)\nonumber\\
& = p_i(T) - \frac{1}{2}\left(\Cov_i[T_0] + \Cov_i[T_1]\right),\label{eq:applyinduction}
\end{align}
where the second line follows from applying the inductive hypothesis to $g$ and $h$.  Since each vertex $v$ in $T_0$ (or $T_1$) is one edge farther from the root in $T$ than it is in $T_0$ (or $T_1$), we get that
$\Cov_i[T] = \frac{1}{2} \Cov_i[T_0] + \frac{1}{2} \Cov_i[T_1]$.  Note that $\Cov[r(T)]$ doesn't contribute anything to $\Cov_i[T]$ because $x_i$ is not at the root of $T$.  Plugging this equality into Equation~\eqref{eq:applyinduction} yields the lemma.

Now, we prove Equality~\ref{item:probs}.  It will be convenient for us to define the modified protocol $\bigp'$:

\medskip
\fbox{\parbox{14.5cm}{
\begin{center}
\textbf{$\bigp'(T, S)$:}
\end{center}
\begin{enumerate}
\item If $S \neq \emptyset$ and $S \neq \{j\}$, where $x_j$ is $T$'s root variable, then run $\bigp(T, S)$.
\item Otherwise:
\begin{enumerate}
\item If $S = \{j\}$, output the characters~$1,\bot$.
\item With probability proportional to $\widehat{g}(\emptyset)^2$,  run $\bigp(T_0, \emptyset)$.
\item With probability proportional to $\widehat{h}(\emptyset)^2$,  run $\bigp(T_1, \emptyset)$.
\end{enumerate}
\end{enumerate}
}}
\medskip

Note that $\bigp'$ always calls $\bigp$ as a subroutine.  When $S \neq \emptyset, \{j\}$, then $\bigp'(T, S)$ is identical to $\bigp(T, S)$.  On the other hand, when $S$ equals $\emptyset$ or $\{j\}$, then $\bigp'(T, S)$ outputs exactly what $\bigp(T,S)$ would output, but then it calls either $\bigp(T_0, \emptyset)$ or $\bigp(T_1, \emptyset)$.  These two will immediately terminate, so $\bigp'$ has the same output behavior as $\bigp$.  Thus, to show that $p_i(T) = \frac{1}{2} \cdot \left(p_i(T_0) + p_i(T_1)\right)$, it suffices to show that the probability that the path output by $\bigp'(T, \cdot )$ contains index $i$ is $\frac{1}{2} \cdot \left(p_i(T_0) + p_i(T_1)\right)$.

We will show that the probability $\bigp'(T, \cdot)$  makes a call to $\bigp(T_0, \cdot)$ versus $\bigp(T_1, \cdot)$ is exactly~$\frac{1}{2}$.  Next, we will show that the sets it calls $\bigp(T_0, \cdot)$ with are distributed as $\widehat{g}^2$, and similarly for $\bigp(T_1, \cdot)$, so that the recursion works.

Without loss of generality, assume that $x_n$ is the root variable of $T$.  Let $S \subseteq [n-1]$ be any set.  The protocol $\bigp'(T, \bigx)$  can only call $\bigp(T_0, S)$ when $\bigx$ is either $S$ or $S \cup\{n\}$, which happens with probability $\widehat{f}(S)^2 + \widehat{f}(S\cup\{n\})^2$.  By Proposition~\ref{prop:commonfact}, $\widehat{f}(S)^2 + \widehat{f}(S\cup\{n\})^2 = \frac{1}{2}\left(\widehat{g}(S)^2 + \widehat{h}(S)^2\right)$.  In either of these two cases, $\bigp(T_0, S)$ is called with probability proportional to $\widehat{g}(S)^2$, and $\bigp(T_1, S)$ is called with probability proportional to $\widehat{h}(S)^2$.  Thus, the probability that $\bigp(T_0, S)$ is called is
\begin{equation*}
\frac{1}{2}\left(\widehat{g}(S)^2 + \widehat{h}(S)^2\right)
\cdot \frac{\widehat{g}(S)^2}{\widehat{g}(S)^2 + \widehat{h}(S)^2}
= \frac{ \widehat{g}(S)^2}{2}.
\end{equation*}
\noindent
Summing over all sets $S$, the probability that $\bigp(T_0, \cdot)$ is called is exactly $1/2$, and conditioned on this occurring, the probability that $\bigp(T_0, S)$ is called is exactly $\widehat{g}(S)^2$.  A similar argument holds with $T_1$ in place of $T_0$ and $h$ in place of $g$.

Thus, when $\bigp'(T, \bigx)$ calls $\bigp(T_0, \cdot)$, the input to the recursive call is distributed as $\widehat{g}^2$, meaning that the path constructed in the recursive call contains $x_i$ with probability $p_i(T_0)$.  Similarly, when $\bigp'(T, \bigx)$ calls $\bigp(T_1, \cdot)$, the path constructed in the recursive call contains $x_i$ with probability $p_i(T_1)$.  Combining these, $p_i(T) = \frac{1}{2}\left(p_i(T_0) + p_i(T_1)\right).$
\end{proof}

\subsection{A covariance bound for read-$k$ decision trees}\label{sec:readkcov}

In this section, we prove Lemma \ref{thm:covbound}.

\begin{lemma}[Lemma~\ref{thm:covbound} restated.]\label{lem:readkcovlemma}
Let $f:\{-1, 1\}^n \rightarrow \{-1, 1\}$ be computed by a read-$k$ decision tree $T$.  Then $\Cov[T] \leq (k-1)\cdot\Var[f]$.
\end{lemma}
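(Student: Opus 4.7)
The plan is to proceed by strong induction on the size of $T$. Write $x_i$ for the root variable, $T_0, T_1$ for the left and right subtrees, and $g, h$ for the functions they compute; let $k_0, k_1$ be the read-counts of $T_0, T_1$, let $V_0, V_1$ be the variables they query, and let $r_s(j)$ denote the number of times $x_j$ is queried in $T_s$. The base case, when $T$ is a leaf, is trivial since both sides vanish.

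For the inductive step, I would combine the recursive identity $\Cov[T] = \Cov[g,h] + \tfrac{1}{2}(\Cov[T_0] + \Cov[T_1])$ with the variance decomposition $\Var[f] = \tfrac{1}{2}(\Var[g] + \Var[h]) + \tfrac{1}{4}(\E[g] - \E[h])^2$, and apply the inductive hypothesis $\Cov[T_s] \leq (k_s - 1)\Var[\cdot]$ to each subtree. This reduces the desired inequality $\Cov[T] \leq (k-1)\Var[f]$ to the root-level estimate
\[
\Cov[g,h] \;\leq\; \tfrac{k - k_0}{2}\Var[g] + \tfrac{k - k_1}{2}\Var[h] + \tfrac{k-1}{4}(\E[g] - \E[h])^2. \qquad (\star)
\]

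When $V_0 \cap V_1 = \emptyset$, we have $\Cov[g,h] = 0$ and $k_0, k_1 \leq k$, so $(\star)$ is immediate. Otherwise, the crucial structural input is that every shared variable $j$ satisfies $r_0(j) + r_1(j) \leq k$; since both terms are at least $1$, neither subtree reads a shared variable more than $k-1$ times. In particular, if $k_0 = k$ then this maximum must be realized by some variable in $V_0 \setminus V_1$. My approach is to project $g$ and $h$ onto the shared variable set $V = V_0 \cap V_1$, writing $g_*(x_V) = \E_{x_{V_0 \setminus V}}[g]$ and similarly $h_*$. Because the orthogonal residuals have Fourier mass only on subsets containing a non-shared variable, $\Cov[g,h] = \Cov[g_*, h_*]$, and Cauchy-Schwarz gives $\Cov[g,h] \leq \sqrt{\Var[g_*]\Var[h_*]}$.

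The main obstacle is closing $(\star)$ when $k_0 = k$ or $k_1 = k$: the corresponding coefficient on the right vanishes, and a naive AM-GM bound on $\sqrt{\Var[g_*]\Var[h_*]}$ is too loose. I plan to overcome this using a weighted AM-GM inequality whose parameters are tuned to $(k_0, k_1)$, together with a quantitative lower bound on $\Var[g] - \Var[g_*]$ that exploits the fact that the high-read variable lies outside $V$. If this direct route is not sharp enough, my fallback is to strengthen the inductive hypothesis to also track an auxiliary shared-variable covariance quantity, so that the induction propagates cleanly without any slack going unaccounted.
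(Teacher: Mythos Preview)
Your primary route via the inequality $(\star)$ does not go through: $(\star)$ is false in general. Take $k=2$ and let the root variable be $x_i$; let $T_0$ query $b$ at its root and then query $a$ on both branches (outputting $1$ on the $b=1$ side and outputting $a$ on the $b=-1$ side), so $g=\mathrm{AND}(a,b)$ with $a$ read twice; define $T_1$ symmetrically with a fresh variable $c$ in place of $a$, so $h=\mathrm{AND}(c,b)$ with $c$ read twice. Then $k_0=k_1=k=2$, the shared set is $V=\{b\}$, $\E[g]=\E[h]=\tfrac12$, and $\Cov[g,h]=\widehat g(\{b\})\,\widehat h(\{b\})=\tfrac14$. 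The right-hand side of $(\star)$ is $0+0+\tfrac{1}{4}\cdot 0=0$, yet the left-hand side is $\tfrac14$. No weighted AM--GM or bound on $\Var[g]-\Var[g_*]$ can rescue this, since the right side of $(\star)$ vanishes identically in this regime. The underlying issue is that your inductive hypothesis $\Cov[T_s]\le(k_s-1)\Var[\,\cdot\,]$ throws away exactly the slack you need: in the example $\Cov[T_0]=\Cov[T_1]=0$, but your bound records only $\le\tfrac34$, and that lost $\tfrac34$ on each side is what should have paid for the root covariance.

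Your fallback instinct, to strengthen the inductive hypothesis, is correct and is precisely what the paper does. Rather than tracking only the global maximum read count, the paper proves the sharper statement
\[
\Cov[T]\;\le\;\sum_{S\neq\emptyset}\bigl(m_T(S)-1\bigr)\,\widehat f(S)^2,
\]
where $m_T(S)$ is the maximum over $i\in S$ of the number of times $x_i$ is queried in $T$. This refinement makes the induction close cleanly: one bounds $2\Cov[g,h]\le\sum_{\emptyset\neq S\subseteq J}\widehat g(S)^2+\widehat h(S)^2$ (with $J=V_0\cap V_1$), and the key structural fact becomes $m_{T_0}(S)\le m_T(S)-1$ whenever $S\subseteq J$, since every variable of $S$ is also queried in $T_1$. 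That ``$-1$'' is exactly what absorbs the covariance term. Your vague ``auxiliary shared-variable covariance quantity'' should be replaced by this per-set read-count weighting; once you do that, the argument is short and the obstacles you flagged disappear.
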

Combining this with Lemma~\ref{thm:covthm} and Lemma~\ref{lem:nozero} yields our FEI bound for read-$k$ decision trees:
\begin{theorem}
If $f:\{-1, 1\}^n \rightarrow \{-1, 1\}$ is computable by a read-$k$ decision tree, then $\ent[\widehat{f}^2] \leq (2 +(2k+2)\cdot\log_2 3)\cdot \Inf[f]$.
\end{theorem}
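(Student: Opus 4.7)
The plan is to assemble the theorem from three ingredients already established in the paper: the almost-prefix-free decision tree protocol analyzed in Lemma~\ref{thm:covthm}, the covariance bound of Lemma~\ref{lem:readkcovlemma}, and the entropy-from-protocol conversion of Lemma~\ref{lem:nozero}. No new combinatorics is needed; the whole argument is a short chain of inequalities.

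First, I would invoke Lemma~\ref{thm:covthm} to obtain an almost prefix-free protocol $\bigp$ over alphabet $\Sigma$ of size $3$ for which $\E[|\bigp(\bigx)|] \leq 4\cdot\Inf[f] + 2\cdot\Cov[T]$. Next, since $f$ is computed by a read-$k$ decision tree, Lemma~\ref{lem:readkcovlemma} gives $\Cov[T] \leq (k-1)\cdot\Var[f]$. Combining these and using the standard Fourier identity $\Var[f] \leq \Inf[f]$ (which was recorded in Section~\ref{sec:defs} from the formulas $\Var[f]=\sum_{S\neq\emptyset}\widehat{f}(S)^2$ and $\Inf[f]=\sum_{S\neq\emptyset}|S|\widehat{f}(S)^2$) yields
\begin{equation*}
\E[|\bigp(\bigx)|] \;\leq\; 4\cdot\Inf[f] + 2(k-1)\cdot\Var[f] \;\leq\; (2k+2)\cdot\Inf[f].
\end{equation*}

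Finally, I would apply Lemma~\ref{lem:nozero} with $|\Sigma|=3$ to convert this length bound into an entropy bound. Since $\bigp$ is almost prefix-free and $\bigx\sim\widehat{f}^2$,
\begin{equation*}
\ent[\widehat{f}^2] \;\leq\; \log_2 3 \cdot \E[|\bigp(\bigx)|] + 2\cdot\Inf[f] \;\leq\; \log_2 3 \cdot (2k+2)\cdot\Inf[f] + 2\cdot\Inf[f] \;=\; \bigl(2 + (2k+2)\log_2 3\bigr)\cdot\Inf[f],
\end{equation*}
which is the claim. There is no real obstacle here: every step is a direct substitution of a previously proved lemma, and the only ``trick'' is remembering to pay $\log_2|\Sigma|=\log_2 3$ per character because the protocol uses a ternary alphabet (bits together with the terminator $\bot$), plus the additive $2\cdot\Inf[f]$ from Lemma~\ref{lem:nozero} for handling the empty-string codeword at $\bigx=\emptyset$.
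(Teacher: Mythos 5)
Your proposal is correct and follows exactly the paper's own route: Lemma~\ref{thm:covthm} for the protocol length $4\cdot\Inf[f]+2\cdot\Cov[T]$, Lemma~\ref{lem:readkcovlemma} with $\Var[f]\leq\Inf[f]$ to get the $(2k+2)\cdot\Inf[f]$ bound of Theorem~\ref{thm:readkprot}, and Lemma~\ref{lem:nozero} with $|\Sigma|=3$ to convert to the stated entropy bound. No differences worth noting.
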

It is not at all clear whether our upper bound  in Lemma~\ref{thm:covbound} is tight.  Potentially, this bound could be replaced with $\Cov[T] \leq \log_2 k \cdot \Var[f]$.  The tight example of this was presented earlier: let $T$ be the tree given in Figure~\ref{fig:badtree}, only with $l$ layers of dummy variables rather than just two.  Furthermore, suppose that $T'$ is itself read-once.  It is easy to see that $T$ is read-$2^l$ and has tree-covariance $\Cov[T] = l\cdot \Var[f]$.  Thus, in this case, $\Cov[T] = \log 2^l \cdot \Var[f]$.

We will prove Lemma~\ref{thm:covbound} by structural induction on $T$.  As is often the case, we will need to strengthen the inductive hypothesis for the induction to go through.  The reason for this is that the read-$k$ decision tree definition only keeps track of the maximum number of times any variable appears in $T$, whereas we require a more fine-grained accounting of the number of times each variable appears.  For a nonempty subset $S \subseteq [n]$, define $m_T(S)$ to be the maximum over $i \in S$ of the number of times $x_i$ appears in the tree $T$.  For example, if $T$ is read-$k$ then $m_T([n]) \leq k$.  We will prove the following lemma:

\begin{lemma}\label{lem:induct}
Let $T$ be a decision tree which computes $f:\{-1, 1\}^n \rightarrow \{-1, 1\}$.  Then
\begin{equation*}
\Cov[T] \leq \sum_{S \neq \emptyset} (m_{T}(S) - 1) \cdot \hat{f}(S)^2.
\end{equation*}
\end{lemma}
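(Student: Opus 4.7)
The plan is to prove Lemma~\ref{lem:induct} by structural induction on $T$. The base case, when $T$ is a leaf, is immediate since $\Cov[T] = 0$ and the right-hand side is nonnegative. For the inductive step, let $x_i$ be $T$'s root variable, with left and right subtrees $T_0, T_1$ computing $g$ and $h$. The recursive identity $\Cov[T] = \Cov[g,h] + \tfrac12(\Cov[T_0] + \Cov[T_1])$, the inductive hypothesis applied to $T_0$ and $T_1$, and the Fourier formula $\Cov[g,h] = \sum_{S \neq \emptyset}\widehat{g}(S)\widehat{h}(S)$ reduce the lemma to the inequality
\begin{equation*}
\sum_{S \neq \emptyset} \widehat{g}(S)\widehat{h}(S) + \tfrac12\sum_{S \neq \emptyset}(m_{T_0}(S) - 1)\widehat{g}(S)^2 + \tfrac12 \sum_{S \neq \emptyset}(m_{T_1}(S) - 1)\widehat{h}(S)^2 \leq \sum_{S \neq \emptyset}(m_T(S) - 1)\widehat{f}(S)^2.
\end{equation*}

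I would then simplify the right-hand side by pairing $S \subseteq [n]\setminus\{i\}$ with $S \cup \{i\}$. Since $x_i$ appears only at $T$'s root, $m_T(\{i\}) = 1$ (so the $S = \{i\}$ contribution vanishes), and $m_T(S \cup \{i\}) = m_T(S)$ whenever $\widehat{f}(S \cup \{i\}) \neq 0$. Applying Proposition~\ref{prop:commonfact} to combine $\widehat{f}(S)^2 + \widehat{f}(S \cup \{i\})^2 = \tfrac12(\widehat{g}(S)^2 + \widehat{h}(S)^2)$, the right-hand side becomes $\tfrac12 \sum_{S}(m_T(S) - 1)(\widehat{g}(S)^2 + \widehat{h}(S)^2)$, with the sum over non-empty $S \subseteq [n]\setminus\{i\}$. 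The desired inequality thus reduces, term-by-term in $S$, to showing
\begin{equation*}
(m_T(S) - m_{T_0}(S))\widehat{g}(S)^2 - 2\widehat{g}(S)\widehat{h}(S) + (m_T(S) - m_{T_1}(S))\widehat{h}(S)^2 \geq 0.
\end{equation*}

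The main obstacle is establishing this quadratic form inequality. Viewed as a form in $(\widehat{g}(S), \widehat{h}(S))$ it is positive semidefinite precisely when $(m_T(S) - m_{T_0}(S))(m_T(S) - m_{T_1}(S)) \geq 1$, and the trivial bound $m_T(S) \geq \max(m_{T_0}(S), m_{T_1}(S))$ does not suffice. The key observation is that the cross term can only be nonzero when both $\widehat{g}(S)$ and $\widehat{h}(S)$ are nonzero, which forces every $j \in S$ to appear in \emph{both} $T_0$ and $T_1$. Hence, if $j^* \in S$ achieves the maximum appearance count $m_{T_0}(S)$ in $T_0$, its total appearance count in $T$ (the sum of its counts in $T_0$ and $T_1$) is at least $m_{T_0}(S) + 1$, so $m_T(S) \geq m_{T_0}(S) + 1$; symmetrically $m_T(S) \geq m_{T_1}(S) + 1$, yielding the required PSD condition. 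If instead one of $\widehat{g}(S), \widehat{h}(S)$ vanishes, the inequality collapses to $m_T(S) \geq m_{T_0}(S)$ (or $m_T(S) \geq m_{T_1}(S)$), which is immediate because every appearance of a variable in a subtree is also an appearance in $T$. This completes the induction.
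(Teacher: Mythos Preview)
Your proposal is correct and is essentially the same argument as the paper's. Both proceed by structural induction, invoke Proposition~\ref{prop:commonfact} to rewrite the right-hand side in terms of $\widehat{g}(S)^2+\widehat{h}(S)^2$, and rely on the same key combinatorial observation: when the cross term $\widehat{g}(S)\widehat{h}(S)$ is nonzero, every $j\in S$ must appear in both subtrees, so $m_T(S)\geq m_{T_0}(S)+1$ and $m_T(S)\geq m_{T_1}(S)+1$. The only cosmetic difference is packaging: the paper introduces the set $J$ of coordinates shared by $T_0$ and $T_1$, bounds $2\widehat{g}(S)\widehat{h}(S)\leq\widehat{g}(S)^2+\widehat{h}(S)^2$ for $S\subseteq J$, and then absorbs the resulting ``$+1$'' into the $(m_T(S)-2)$ coming from $m_{T_b}(S)\leq m_T(S)-1$; you instead write a single quadratic form per $S$ and check it is positive semidefinite. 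These are equivalent, since your condition $\widehat{g}(S)\widehat{h}(S)\neq 0$ is exactly what forces $S\subseteq J$, and your PSD check $ab\geq 1$ with $a,b\geq 1$ is just AM--GM.
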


Note that if $T$ is read-$k$, the right-hand side is at most $(k-1)\cdot\sum_{S\neq \emptyset}\widehat{f}(S)^2 = (k-1)\cdot\Var[f]$, the bound we are looking for.

\begin{proof}[Proof of Lemma~\ref{lem:induct}]
We prove this by structural induction on the tree $T$.  The base case we consider is when $T$ queries a single variable.
\paragraph{Base case:}
In this case, the left and right subtrees are constant functions, so their covariance is zero.  For the sum on the right-hand side, any $S$ for which $\widehat{f}(S)$ is nonzero must consist of variables queried by $T$, in which case $(m_T(S) -1 ) \geq 0$.  As a result, the right-hand side is always at least~$0$.

\paragraph{Inductive step:} Suppose the root variable of $T$ is $x_n$.  Let $T_0$ and $T_1$ be the left and right subtrees of $T$, respectively.  For convenience, we will upper-bound $2\cdot \Cov[T]$, which can be written as
\begin{equation*}
2\cdot \Cov[T] = 2\cdot \Cov[g, h] + \Cov[T_0] + \Cov[T_1].
\end{equation*}

We will begin with the first term on the right-hand side.  Let $J$ be the set of coordinates which appear in both $T_0$ and $T_1$.  Because $x_n$ is the root variable, it cannot appear in either $T_0$ or $T_1$, so $J$ is a subset of $[n-1]$.
Then
\begin{align*}
2\cdot \Cov[g,h]
& = \sum_{S\neq\emptyset} 2\cdot\widehat{g}(S)\widehat{h}(S)\\
& = \sum_{\emptyset \neq S \subseteq J} 2\cdot\widehat{g}(S) \widehat{h}(S)\\
& \leq \sum_{\emptyset \neq S \subseteq J}\widehat{g}(S)^2 + \widehat{h}(S)^2,
\end{align*}
where the last line holds because $2ab \leq a^2 + b^2$.

Now we focus on the second term.  Applying the inductive hypothesis to $g$ and $h$ yields
\begin{equation}
\Cov[T_0] + \Cov[T_1] \leq
\sum_{\emptyset \neq S \subseteq  [n-1]}
(m_{T_0}(S)-1) \cdot\widehat{g}(S)^2 + (m_{T_1}(S)-1) \cdot\widehat{h}(S)^2.\label{eq:t0t1expansion}
\end{equation}

For any $S$ in the above sum, we have that $m_{T_0}(S) \leq m_T(S)$.  This is because $T_0$ is a subtree of $T$.  However, when $S \subseteq J$ we get the following improved bound: $m_{T_0}(S)\leq m_T(S)-1$.  This holds because every variable in $S$ is queried at least once in $T_1$, and so it must be queried in $T_0$ at least one fewer time than in the whole of $T$.  Similarly, all of these inequalities hold when $T_0$ is replaced with $T_1$.  Rewriting Equation~\ref{eq:t0t1expansion},
\begin{equation*}
\Cov[T_0] + \Cov[T_1]
\leq \sum_{\emptyset\neq S \subseteq J} (m_{T}(S) - 2)\cdot \left(\widehat{g}(S)^2 + \widehat{h}(S)^2\right)
+ \sum_{\emptyset\neq S \nsubseteq J} (m_{T}(S)-1) \cdot \left(\widehat{g}(S)^2 + \widehat{h}(S)^2\right)
\end{equation*}
Now, if we add $2\cdot\Cov[g, h]$ to this, we see that it will add $1$ to the coefficient of $\widehat{g}(S)^2$ and $\widehat{h}(S)^2$ exactly when $\emptyset \neq S \subseteq J$.  As a result,
\begin{equation*}
2\cdot\Cov[g, h] + \Cov[T_0] + \Cov[T_1]
\leq \sum_{\emptyset \neq S \subseteq [n-1]} (m_{T}(S)-1)\cdot \left(\widehat{g}(S)^2 + \widehat{h}(S)^2\right).
\end{equation*}
The left-hand side is $2\cdot\Cov[T]$.  As for the right-hand side, applying Proposition~\ref{prop:commonfact} shows that it is equal to
\begin{equation*}
2\cdot \sum_{\emptyset \neq S \subseteq [n-1]}
(m_{T}(S)-1)\cdot \left(\widehat{f}(S)^2 + \widehat{f}(S \cup \{n\})^2\right).
\end{equation*}
We would be done, except $\widehat{f}(S\cup\{n\})^2$ should have $m_T(S\cup\{n\})$ as its coefficient, not $m_T(S)$.  However, $m_T(S) \leq m_T(S \cup \{n\})$ always, so we can perform this replacement.  This yields the lemma.
\end{proof}

\section{A composition theorem for protocols}\label{sec:composition}
In this section we prove Theorem \ref{thm:protocolcomp}.  Theorem \ref{thm:protocolcomp} concerns several different functions and their spectral distributions defined with respect to different product distributions.  We assume here familiarity with Fourier analysis for product distributions over the Boolean cube (see \cite{OD13} for an introduction) and briefly review some basic facts and notation used in the proof.

For a Boolean function $f:\{-1,1\}_\mu^n\to \{-1,1\}$, where $\mu=\langle \mu_1,\cdots,\mu_n\rangle$ is a sequence of biases, we think of $\{-1,1\}_\mu^n$ as endowed with the product distribution that sets each bit independently in $\{-1,1\}$ with expectation $\E_\mu[x_i]=\mu_i$ and $\Var_\mu[x_i]=1-\mu_i^2$.  Then the $\mu$-biased Fourier decomposition of $f$ is
$$f = \sum_{S\subseteq [n]} \widetilde{f}(S)\phi_S^{\mu}$$
where $$\phi_S^{\mu}(x) = \prod_{i\in S} \frac{x_i-\mu_i}{\Var_\mu[x_i]},$$
and  $\widetilde{f}(S) = \E_\mu[f \cdot \prod_i{i\in S}].$
Thus, a spectral sample from $\widetilde{h}^2$ is distributed so that each $\bigy$ appears with probability $\widetilde{h}(\bigy)^2.$

Now we proceed to prove Theorem \ref{thm:protocolcomp}.  Let $P_f$ be a $C$-good protocol for $f$ under $\eta$ and $P_1, \ldots, P_k$ be $C$-good protocols for $g_1, \ldots, g_k$ under $\mu$.  Recall that these protocols are prefix-free.
Now, consider a spectral sample $\bigy \sim \widetilde{h}^2$ and the following protocol $P_h(\bigy)$:

\medskip
\fbox{\parbox{14.5cm}{
\begin{enumerate}
\item Let $S \subseteq [k]$ be the set containing those $i \in [k]$ such that $\bigy_i \neq \emptyset$.
\item Output $P_f(S)$.
\item For each $i \in S$, output $P_i(\bigy_i)$.
\end{enumerate}
}}
\medskip

Because the subprotocols are prefix-free, $P_h(\bigy)$ is a prefix-free encoding of $\bigy$.  
This is because if one scans the output of $P_h(\bigy)$ from left-to-right, the first prefix which could be output by $P_f(\cdot)$ must actually be the output of $P_f(S)$.  This gives a description of the set $S$, from which one can recover $\bigy_1, \ldots, \bigy_k$ by a similar process.  We will show that if $P_f$ and $P_1, \ldots, P_k$ are efficient, then $P_h$ is efficient as well.  To begin, we will need the following pair of claims:

\begin{claim}\label{claim:gdist}
Conditioned on $\bigy_i \neq \emptyset$, $\bigy_i$ is distributed as $\widetilde{g_i}^2\setminus \emptyset$.
\end{claim}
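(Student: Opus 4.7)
The plan is to reduce the claim to the standard composition formula for the $\mu$-biased Fourier coefficients of $h$. Decomposing any set $T$ of coordinates of $h$ as $T = T_1 \circ \cdots \circ T_k$, where $T_i$ is supported on the coordinates of $g_i$, and letting $S = \{i \in [k] : T_i \neq \emptyset\}$, the identity I will establish is
\[
\widetilde{h}(T)^2 \;=\; \widetilde{f}(S)^2 \prod_{i \in S} \frac{\widetilde{g_i}(T_i)^2}{1-\eta_i^2}.
\]
This follows by substituting the $\mu$-biased Fourier expansion of each $g_i$ into the $\eta$-biased Fourier expansion of $f$, using two facts: $g_i - \eta_i$ has no constant-term component in its $\mu$-expansion (so any $T$ with $T_i = \emptyset$ for some $i \in S$ contributes $0$), and the $\sqrt{1-\eta_i^2}$ normalization inside $\phi_S^\eta$ pairs cleanly with the $\mu$-biased coefficients of $g_i(x^i) - \eta_i$.

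Given this formula, I compute $\Pr[\bigy_i = U_i]$ for any nonempty $U_i$ by fixing $T_i = U_i$ and marginalizing over the choices of $T_j$, $j \neq i$. Grouping the sum by $S \ni i$ (the set of indices $j$ with $T_j \neq \emptyset$), each $j \in S \setminus \{i\}$ contributes
\[
\sum_{T_j \neq \emptyset} \frac{\widetilde{g_j}(T_j)^2}{1-\eta_j^2} \;=\; \frac{\Var_\mu[g_j]}{1-\eta_j^2} \;=\; 1
\]
by Parseval (and the fact that $g_j$ is $\pm 1$-valued with mean $\eta_j$), while each $j \notin S$ contributes a trivial factor of $1$. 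Hence
\[
\Pr[\bigy_i = U_i] \;=\; \frac{\widetilde{g_i}(U_i)^2}{1-\eta_i^2} \sum_{S \ni i} \widetilde{f}(S)^2.
\]

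Summing over all nonempty $U_i$ and applying Parseval once more gives $\Pr[\bigy_i \neq \emptyset] = \sum_{S \ni i} \widetilde{f}(S)^2$, so
\[
\Pr[\bigy_i = U_i \mid \bigy_i \neq \emptyset] \;=\; \frac{\widetilde{g_i}(U_i)^2}{1-\eta_i^2},
\]
which is precisely the distribution $\widetilde{g_i}^2 \setminus \emptyset$. The main technical hurdle is careful tracking of the normalization factors in the biased Fourier composition identity; once that identity is in hand, the marginalization is a direct application of Parseval. A probabilistic reformulation of the identity is that one can sample $\bigy \sim \widetilde{h}^2$ in two stages—first draw $S \sim \widetilde{f}^2$ under bias $\eta$, then independently draw $\bigy_i \sim \widetilde{g_i}^2 \setminus \emptyset$ for each $i \in S$ and set $\bigy_j = \emptyset$ for $j \notin S$—which makes the claim transparent and will also be directly useful when analyzing the length of $P_h$.
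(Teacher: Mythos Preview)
Your proposal is correct and follows essentially the same route as the paper. The key identity you derive,
\[
\widetilde{h}(T)^2 \;=\; \widetilde{f}(S)^2 \prod_{i \in S} \frac{\widetilde{g_i}(T_i)^2}{1-\eta_i^2},
\]
is exactly the paper's Proposition~\ref{prop:fourier} (squared, with $\sigma_i^2 = 1-\eta_i^2$); from there the paper observes that, conditioned on any fixed values of the $\bigy_j$ for $j\neq i$, the product form forces $\bigy_i$ to be distributed as $\widetilde{g_i}^2\setminus\emptyset$, whereas you reach the same conclusion by explicitly marginalizing with Parseval---a cosmetic difference only.
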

\begin{claim}\label{claim:fdist}
The set $S$ is distributed as $\widetilde{f}^2 \setminus \emptyset$.
\end{claim}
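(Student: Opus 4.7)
The key step is to derive a clean closed-form for $\widetilde{h}(\bigy)$ in terms of $\widetilde{f}(S)$ (under the $\eta$-bias) and the $\widetilde{g_i}(\bigy_i)$'s (under the $\mu$-bias), after which the claim reduces to Parseval. I would start from the $\eta$-biased Fourier expansion of $f$,
\[
f(z_1,\ldots,z_k) \;=\; \sum_{T \subseteq [k]} \widetilde{f}(T)\prod_{i \in T} \frac{z_i - \eta_i}{\sqrt{1-\eta_i^2}},
\]
substitute $z_i = g_i(x^i)$, and then expand each factor in its $\mu$-biased Fourier basis. Because $\E_\mu[g_i] = \eta_i = \widetilde{g_i}(\emptyset)$, subtracting $\eta_i$ exactly annihilates the $U=\emptyset$ summand in the $\mu$-expansion of $g_i$, yielding
\[
\frac{g_i(x^i) - \eta_i}{\sqrt{1-\eta_i^2}} \;=\; \sum_{U \neq \emptyset} \frac{\widetilde{g_i}(U)}{\sqrt{1-\eta_i^2}}\,\phi_U^\mu(x^i).
\]

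Second, since the $g_i$'s live on disjoint blocks of variables, the product over $i \in T$ multiplies out into a sum of $\mu$-biased characters $\phi_{\bigy}^\mu = \prod_i \phi_{\bigy_i}^\mu$ indexed by tuples $\bigy = (\bigy_1,\ldots,\bigy_k)$ with $\bigy_i \neq \emptyset$ precisely when $i \in T$, i.e.\ exactly those $\bigy$ with $S(\bigy) = T$. Collecting like characters and invoking the uniqueness of the $\mu$-biased Fourier expansion, I can read off
\[
\widetilde{h}(\bigy) \;=\; \widetilde{f}\bigl(S(\bigy)\bigr)\,\prod_{i \in S(\bigy)} \frac{\widetilde{g_i}(\bigy_i)}{\sqrt{1-\eta_i^2}},
\qquad S(\bigy) := \{i : \bigy_i \neq \emptyset\}.
\]

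Third, I would compute, for any fixed $T \subseteq [k]$,
\[
\Pr_{\bigy \sim \widetilde{h}^2}\!\bigl[S(\bigy) = T\bigr] \;=\; \sum_{\bigy:\,S(\bigy) = T} \widetilde{h}(\bigy)^2 \;=\; \widetilde{f}(T)^2 \prod_{i \in T} \frac{1}{1-\eta_i^2}\sum_{U \neq \emptyset} \widetilde{g_i}(U)^2,
\]
where the factorization uses the product form of the formula above together with the fact that for $i \notin T$ the only admissible choice is $\bigy_i = \emptyset$. Parseval applied to the $\pm 1$-valued function $g_i$ gives $\sum_{U \neq \emptyset} \widetilde{g_i}(U)^2 = 1 - \widetilde{g_i}(\emptyset)^2 = 1 - \eta_i^2$, so each inner factor is exactly $1$ and hence $\Pr[S(\bigy) = T] = \widetilde{f}(T)^2$. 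Since $\bigy = \emptyset$ iff $S(\bigy) = \emptyset$, the protocol's conditioning $\bigy \sim \widetilde{h}^2 \setminus \emptyset$ exactly corresponds to conditioning on $S \neq \emptyset$, delivering $S \sim \widetilde{f}^2 \setminus \emptyset$ as claimed.

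The only real subtlety is bookkeeping in the expansion step: tracking which product distribution is active at each layer, and noticing that centering $g_i$ by its $\mu$-expectation is precisely what removes the $U = \emptyset$ term and enforces the bijective correspondence between the outer index $T$ and the support pattern $S(\bigy)$. Once the formula for $\widetilde{h}(\bigy)$ is in hand, everything else is a near-mechanical Parseval computation, and essentially the same derivation simultaneously verifies Claim~\ref{claim:gdist} (conditional on $S$, each $\bigy_i$ for $i \in S$ is an independent sample from $\widetilde{g_i}^2 \setminus \emptyset$).
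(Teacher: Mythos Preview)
Your proposal is correct and takes essentially the same approach as the paper: derive the product formula $\widetilde{h}(\bigy)=\widetilde{f}(S(\bigy))\prod_{i\in S(\bigy)}\widetilde{g_i}(\bigy_i)/\sigma_i$ (the paper isolates this as a separate proposition), then sum $\widetilde{h}(\bigy)^2$ over all $\bigy$ with a fixed support pattern and apply Parseval so that each factor collapses to $1$. The only cosmetic difference is that you obtain the formula by substituting and reading off coefficients via uniqueness of the Fourier expansion, whereas the paper computes $\E_\mu[h\cdot\phi_{\bigy}^\mu]$ directly; these are the same calculation.
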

The proofs of these claims, as well as the basic Fourier analytic facts used to prove them, may be found in the Appendix \ref{sec:biasedprelims}.  We now prove the composition theorem for $C$-good protocols.
\begin{lemma}\label{lem:bigfatlemma}
If $P_i$ is a $C$-good protocol for each $g_i$ and $P_f$ is a $C$-good protocol for  $f$, then $P_h$ is a $C$-good protocol for $h$.
\end{lemma}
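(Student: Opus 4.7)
The plan is to expand $\E[|P_h(\bigy)|]$ using the recursive structure of $P_h$, apply $C$-goodness of each subprotocol to the appropriate conditional distributions (using Claims~\ref{claim:gdist} and~\ref{claim:fdist}), and then verify that all the extra terms bookkeep into the $C$-goodness bound for $h$.

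Concretely, I would first write
\[
\E[|P_h(\bigy)|] \;=\; \E[|P_f(S)|] + \sum_{i \in [k]} \Pr[i \in S]\cdot \E\bigl[|P_i(\bigy_i)| \,\big|\, i \in S\bigr],
\]
using linearity and the observation that $P_i(\bigy_i)$ is only appended when $i \in S$. By Claim~\ref{claim:fdist}, $S \sim \widetilde{f}^2\setminus\emptyset$, so $C$-goodness of $P_f$ under the bias $\eta$ gives the bound on $\E[|P_f(S)|]$ with bias terms $\log\frac{1}{1-\eta_i^2}$ and a final $\log\Var_\eta[f]$. By Claim~\ref{claim:gdist}, conditioned on $i \in S$ the sample $\bigy_i$ is distributed as $\widetilde{g_i}^2\setminus\emptyset$, so $C$-goodness of $P_i$ under $\mu$ bounds the inner expectations.

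Next I would collect the resulting five types of terms and check each one separately. (i) The size contributions combine cleanly because $|\bigy| = \sum_i |\bigy_i|$ with $|\bigy_i|=0$ whenever $i\notin S$; so $\Pr[i\in S]\cdot\E[|\bigy_i|\mid i\in S] = \E[|\bigy_i|]$, giving
\[
C\bigl(\E[|S|]-1\bigr) + \sum_i \Pr[i\in S]\cdot C\bigl(\E[|\bigy_i|\mid i\in S]-1\bigr) = C\bigl(\E[|\bigy|]-1\bigr),
\]
as the two $-1$ contributions cancel against each other. (ii) The $\eta$-bias term $\sum_i \Pr[i\in S]\log\frac{1}{1-\eta_i^2}$ from $P_f$ cancels exactly against the sum $\sum_i \Pr[i\in S]\log\Var_\mu[g_i]$ coming from the $\log\Var_\mu[g_i]$ terms of the $P_i$'s, using the definition $\eta_i = \E_\mu[g_i]$ and hence $1-\eta_i^2 = \Var_\mu[g_i]$. (iii) The $\mu$-bias terms from each $P_i$ combine via the tower property into $\sum_j \Pr[j\in \bigy]\log\frac{1}{1-\mu_j^2}$, since for any variable $j$ owned by $g_i$ we have $\Pr[j\in \bigy] = \Pr[i\in S]\cdot\Pr[j\in\bigy_i\mid i\in S]$. (iv) The leftover $\log\Var_\eta[f]$ term from $P_f$ matches the desired $\log\Var_\mu[h]$ because the output distribution of each $g_i$ under $\mu$ is an $\eta_i$-biased $\pm 1$ coin, so $\Var_\mu[h] = \Var_\eta[f]$.

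The proof is therefore essentially bookkeeping once the two distributional claims are in hand. The only part that requires a moment's care is the cancellation in step (ii): one must notice that the definition of a $C$-good protocol has been tuned precisely so that the $\log\Var$ term of one level meets the $\log\frac{1}{1-\eta_i^2}$ bias term of the level above it and annihilates, leaving exactly the quantities relevant to $h$. I expect this cancellation, rather than any technical inequality, to be the conceptual heart of the argument.
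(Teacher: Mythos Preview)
Your proposal is correct and follows essentially the same approach as the paper: decompose $\E[|P_h(\bigy)|]$ into the $P_f$ contribution plus the conditional $P_i$ contributions, invoke Claims~\ref{claim:fdist} and~\ref{claim:gdist} to identify the relevant distributions, apply $C$-goodness of each subprotocol, and then observe the telescoping via $1-\eta_i^2=\Var_\mu[g_i]$ and $\Var_\eta[f]=\Var_\mu[h]$. Your explicit grouping of the bookkeeping into types (i)--(iv) is, if anything, slightly more transparent than the paper's presentation, but the argument is the same.
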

\begin{proof}
The expected output size of the protocol is
\begin{align*}
\E[\vert P_h(\bigy)\vert]
&= \E\left[\vert P_f(S) \vert + \sum_{i \in S} \vert P_i(\bigy_i)\vert\right]\\
&= \E\left[\vert P_f(S) \vert + \sum_{i =1}^k \bone[\bigy_i \neq \emptyset] \cdot\vert P_i(\bigy_i)\vert\right].
\end{align*}

First, we upper bound the second term in the expectation.  For a fixed $i$,
\begin{equation}
\E\big[\bone[\bigy_i \neq \emptyset] \cdot |P_i(\bigy_i)|\big]
= \Pr[\bigy_i \neq \emptyset] \cdot \E\left[|P_i(\bigy_i)|\big\vert \bigy_i \neq \emptyset\right].\label{eq:protocolgi}
\end{equation}
From Claim~\ref{claim:gdist}, $\bigy_i$ conditioned on $\bigy_i \neq \emptyset$ is distributed as $\widetilde{g_i}^2\setminus \emptyset$.  Thus, as $P_i$ is a $C$-good protocol for $g_i$, we may upper bound $\E\left[|P_i(\bigy_i)|\big\vert \bigy_i \neq \emptyset\right]$ with the expression in the definition of a $C$-good protocol, except where that definition uses an $\bigy$, we have instead $\bigy_i \big\vert (\bigy_i \neq \emptyset)$.  Note that $\Pr[\bigy_i \neq \emptyset] \cdot \E\left[\vert \bigy_i \vert \big\vert \bigy_i \neq \emptyset\right] = \E[|\bigy_i|]$ and that $\Pr[\bigy_i \neq \emptyset] \cdot \Pr[j \in \bigy_i \big \vert \bigy_i \neq \emptyset] = \Pr[j \in \bigy_i]$.  As a result, the upper bound we get on Equation~\ref{eq:protocolgi} is
\begin{equation*}
C \cdot \left(\E[|\bigy_i|] - \Pr[\bigy_i \neq \emptyset]\right)
+ \sum_j \Pr[j \in \bigy_i] \cdot \log\frac{1}{\Var_\mu[x_j]} + \Pr[\bigy_i \neq \emptyset] \cdot \log \Var_\mu[g_i].
\end{equation*}
Note that $\bigy_i \neq \emptyset$ exactly when $i \in S$.  As a result, summing this over all $i \in [k]$ yields
\begin{align}
\E\left[ \sum_{i \in S} \vert P_i(\bigy_i)|\right]\leq~
&C \cdot \left(\E[|\bigy|] - \E[|S|]\right)\nonumber\\
&+ \sum_{j\in[n]} \Pr[j \in \bigy] \cdot \log\frac{1}{\Var_\mu[x_j]} + \sum_{i\in[k]} \Pr[i \in S] \cdot \log \Var_\mu[g_i].\label{eq:rightupper}
\end{align}

For the first term in the expectation, we know by Claim~\ref{claim:fdist} that the random variable $S$ defined in the protocol is distributed according to~$\widetilde{f}^2\setminus \emptyset$.  Thus, because $P_f$ is a $C$-good protocol,
\begin{equation*}
\E[P_f(S)] \leq C \cdot (\E[\vert S \vert] - 1) + \sum_{i \in [k]} \Pr[i \in S] \cdot \log\frac{1}{\Var_\eta[y_i]} + \log \Var_\eta[f].
\end{equation*}
Note that $\Var_\eta[y_i] = \Var_\mu[g_i]$ and $\Var_\eta[f] = \Var_\mu[h]$.  As a result, adding these together yields
\begin{equation*}
\E[|P(\bigy)|] \leq C \cdot (\E[|\bigy|] - 1) + \sum_{j \in [n]}\Pr[j \in \bigy] \cdot \log \frac{1}{\Var_\mu[x_j]} + \log \Var_\mu[h],
\end{equation*}
which yields the theorem.
\end{proof}

\bibliographystyle{alpha}
\bibliography{wright}

\appendix
\section{Decision tree proofs}\label{sec:readk-prelim}

We will repeatedly use the following proposition, which relates the Fourier coefficients of $f$ to the Fourier coefficients of its subfunctions $g$ and $h$.

\begin{proposition}
Let $f$ be computed by a decision tree $T$ whose left and right subfunctions are $g$ and $h$, respectively.  If $x_i$ is at the root of $T$ and $S$ is any subset of $[n] \setminus \{i\}$, then
\begin{equation*}
\widehat{f}(S)^2 + \widehat{f}(S \cup \{i\})^2 = \frac{1}{2}\left(\widehat{g}(S)^2 + \widehat{h}(S)^2\right).
\end{equation*}
\end{proposition}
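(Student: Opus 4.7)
The plan is to express $f$ directly in terms of $g$ and $h$ using the fact that $x_i$ is queried at the root, then read off the two relevant Fourier coefficients and simplify. The main observation is that $g$ and $h$ are both functions of the coordinates in $[n]\setminus\{i\}$ only, so we may write
\begin{equation*}
f(x) \;=\; \tfrac{1+x_i}{2}\,g(x) + \tfrac{1-x_i}{2}\,h(x) \;=\; \tfrac{g(x)+h(x)}{2} \;+\; x_i\cdot\tfrac{g(x)-h(x)}{2}.
\end{equation*}

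Now I would expand both $\frac{g+h}{2}$ and $\frac{g-h}{2}$ in the Fourier basis $\{\chi_S : S\subseteq[n]\setminus\{i\}\}$ (neither piece depends on $x_i$). For any $S\subseteq[n]\setminus\{i\}$, multiplying $\chi_S$ by $x_i$ produces $\chi_{S\cup\{i\}}$, so matching coefficients gives
\begin{equation*}
\widehat{f}(S) \;=\; \frac{\widehat{g}(S)+\widehat{h}(S)}{2},\qquad
\widehat{f}(S\cup\{i\}) \;=\; \frac{\widehat{g}(S)-\widehat{h}(S)}{2}.
\end{equation*}

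The final step is a one-line parallelogram-identity computation: squaring and adding yields
\begin{equation*}
\widehat{f}(S)^2 + \widehat{f}(S\cup\{i\})^2
\;=\; \frac{(\widehat{g}(S)+\widehat{h}(S))^2 + (\widehat{g}(S)-\widehat{h}(S))^2}{4}
\;=\; \frac{\widehat{g}(S)^2 + \widehat{h}(S)^2}{2},
\end{equation*}
which is exactly the claim. There is no real obstacle here; the entire content of the proposition is the decomposition of $f$ in the first display, and everything else is algebra. The only thing worth double-checking is that $S\subseteq[n]\setminus\{i\}$ ensures $\chi_S$ does not already contain $x_i$, so that the coefficient identification in the second display is unambiguous — which is exactly the hypothesis given.
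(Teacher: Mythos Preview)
Your proof is correct and follows essentially the same approach as the paper: decompose $f = \frac{1+x_i}{2}g + \frac{1-x_i}{2}h$, read off $\widehat{f}(S) = \frac{1}{2}(\widehat{g}(S)+\widehat{h}(S))$ and $\widehat{f}(S\cup\{i\}) = \frac{1}{2}(\widehat{g}(S)-\widehat{h}(S))$, then square and add. The paper's version is slightly terser, but the argument is identical.
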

\begin{proof}
Write $f$ as
\begin{equation*}
f = \left(\frac{1 + x_i}{2}\right) g + \left(\frac{1- x_i}{2}\right) h.
\end{equation*}
For any $S \subseteq [n]\setminus \{i\}$, $\widehat{f}(S) = \frac{1}{2}(\widehat{g}(S) + \widehat{h}(S))$ and $\widehat{f}(S \cup \{i\}) = \frac{1}{2}(\widehat{g}(S) - \widehat{h}(S))$.  As a result,
\begin{equation*}
\widehat{f}(S)^2 + \widehat{f}(S \cup \{i\})^2 = \frac{1}{2}\left(\widehat{g}(S)^2 + \widehat{h}(S)^2\right).\qedhere
\end{equation*}
\end{proof}

We will also use the following proposition, which relates the influences of $f$ to the influences of its subfunctions.

\begin{proposition}
Assume the setup from Proposition~\ref{prop:commonfact}.  Then for a coordinate $j \neq i$,
\begin{equation*}
\Inf_j[f] = \frac{1}{2}\cdot(\Inf_j[g] + \Inf_j[h]).
\end{equation*}
\end{proposition}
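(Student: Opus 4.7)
The plan is to give a one-line conditioning argument that partitions the sample space according to the value of the root variable $x_i$. The starting point is the definition $\Inf_j[f] = \Pr_{\x}[f(\x) \neq f(\x^{\oplus j})]$, where $\x$ is uniform on $\{-1,1\}^n$. Since $j \neq i$, flipping the $j$-th coordinate leaves $x_i$ unchanged, so the decision-tree structure can be applied separately to each value of $x_i$.

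The key observation is that by the decision-tree definition, $f(\x) = g(\x)$ whenever $x_i = 1$, and $f(\x) = h(\x)$ whenever $x_i = -1$, where $g$ and $h$ are functions only of the coordinates $[n]\setminus\{i\}$. Because $\x^{\oplus j}$ has the same $i$-th coordinate as $\x$, the same identities hold with $\x^{\oplus j}$ in place of $\x$. Conditioning on $x_i$, which is uniform on $\{\pm 1\}$ and independent of the remaining coordinates, I would write
\begin{align*}
\Inf_j[f]
&= \tfrac{1}{2}\Pr_{\x}[g(\x) \neq g(\x^{\oplus j}) \mid x_i = 1]
+ \tfrac{1}{2}\Pr_{\x}[h(\x) \neq h(\x^{\oplus j}) \mid x_i = -1]\\
&= \tfrac{1}{2}\Inf_j[g] + \tfrac{1}{2}\Inf_j[h],
\end{align*}
where the second equality uses that after conditioning on $x_i$ the remaining coordinates are still uniform, and that $g, h$ ignore the coordinate $x_i$ entirely.

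A purely Fourier-analytic alternative is also available: expand $\Inf_j[f] = \sum_{S\ni j}\widehat{f}(S)^2$, split the sum according to whether $i \in S$, and pair the terms using Proposition~\ref{prop:commonfact}. Since $g$ and $h$ do not depend on $x_i$, their Fourier support avoids $i$, so each paired identity contributes exactly $\tfrac{1}{2}(\widehat{g}(S)^2 + \widehat{h}(S)^2)$ with $j \in S \not\ni i$; summing yields $\tfrac{1}{2}(\Inf_j[g] + \Inf_j[h])$.

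There is essentially no obstacle here, as the statement is a basic structural fact about how influence decomposes at the root of a decision tree. I would present the probabilistic argument, since it is conceptually transparent, one line long, and avoids invoking Proposition~\ref{prop:commonfact}.
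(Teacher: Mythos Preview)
Your proposal is correct and takes essentially the same approach as the paper: the paper's proof is exactly the one-line conditioning argument you describe, writing $\Inf_j[f] = \Pr[f(\x)\neq f(\x^{\oplus j})] = \tfrac{1}{2}\Pr[g(\x)\neq g(\x^{\oplus j})] + \tfrac{1}{2}\Pr[h(\x)\neq h(\x^{\oplus j})]$.
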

\begin{proof}
\begin{align*}
\Inf_j[f]
&= \Pr[f(\x) \neq f(\x^{\oplus j})] \\
&= \frac{1}{2} \Pr[g(\x) \neq g(\x^{\oplus j})] + \frac{1}{2} \Pr[h(\x) \neq h(\x^{\oplus j})]
= \frac{1}{2}\left( \Inf_j[g] + \Inf_j[h]\right).\qedhere
\end{align*}
\end{proof}

\section{Proof of Lemma~\ref{lem:nozero}}\label{sec:nozero}

In this section, we give a proof of Lemma~\ref{lem:nozero}, which was implicit in~\cite{OWZ11}; the proof we give here, included for completeness, is essentially the same.
First, we have the following lemma:
\begin{lemma}\label{lem:edgeiso}
Let $f:\{-1, 1\}^n \rightarrow \{-1, 1\}$ and write $\hat{f}(\emptyset)^2 = 1-\eps$.  Then $2 \cdot \Inf[f] \geq h(\eps)$, where $h(\cdot)$ is the binary entropy function.
\end{lemma}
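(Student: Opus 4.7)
The plan is to reduce the claim to the hypercube edge-isoperimetric inequality. Set $p = \Pr_{\bx}[f(\bx) = -1]$; by replacing $f$ with $-f$ (which leaves both $\Inf[f]$ and $\widehat{f}(\emptyset)^2$ unchanged) we may assume $p \leq 1/2$. Since $\widehat{f}(\emptyset) = 1 - 2p$, we have $\varepsilon = 1 - (1 - 2p)^2 = 4p(1-p)$, so the lemma becomes $2\Inf[f] \geq h(4p(1-p))$.

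Next I would apply Harper's edge-isoperimetric inequality to $A = f^{-1}(-1) \subseteq \{-1,1\}^n$. Since $|A| = 2^n p \leq 2^{n-1}$, its edge boundary satisfies $|\partial A| \geq |A|\log_2(2^n/|A|)$, and because $\Inf[f] = 2|\partial A|/2^n$, this yields
\begin{equation*}
\Inf[f] \;\geq\; 2p \log_2(1/p).
\end{equation*}
It therefore suffices to prove the purely one-variable inequality $4p\log_2(1/p) \geq h(4p(1-p))$ for $p \in (0, 1/2]$.

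This final step is a routine calculus exercise, which I would handle as follows. The elementary bound $(1-x)\ln\frac{1}{1-x} \leq x$ (both sides vanish at $x = 0$ and the right-hand side has the larger derivative) gives the entropy estimate $h(x) \leq x\log_2(e/x)$. Applied with $x = 4p(1-p)$ and then divided through by $4p$, the one-variable inequality reduces to
\begin{equation*}
p\log_2(1/p) - (1-p)\log_2(1/(1-p)) \;\geq\; (1-p)(\log_2 e - 2).
\end{equation*}
The right-hand side is non-positive (since $\log_2 e < 2$), while the left-hand side $\psi(p)$ is non-negative on $[0, 1/2]$: it vanishes at the endpoints $p = 0$ and $p = 1/2$, and its derivative $\log_2\frac{1}{p(1-p)} - 2\log_2 e$ flips sign exactly once on $(0, 1/2)$, making $\psi$ unimodal there. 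The only real obstacle is the algebraic bookkeeping needed to align the edge-isoperimetric lower bound on $\Inf[f]$ with the binary entropy $h(\varepsilon)$; no heavier tool (log-Sobolev, KKL, etc.) appears necessary.
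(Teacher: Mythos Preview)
Your proof is correct and follows essentially the same route as the paper's: both combine an elementary upper bound on $h(\eps)$ with the edge-isoperimetric inequality (the paper defers this step to Proposition~2 of~\cite{OWZ11}, whereas you invoke Harper's theorem explicitly and do the calculus by hand). The only cosmetic difference is that the paper parametrizes directly by $\eps=\Var[f]$ rather than $p=\Pr[f=-1]$, which lets it skip your final one-variable verification.
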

\begin{proof}
First, we may assume $\eps \neq 0, 1$, otherwise the result is trivial.  Now, $(1-\eps)\log\frac{1}{1-\eps} \leq \frac{1}{\ln 2}\eps \leq 2\eps$, so
\begin{equation*}
h(\eps) = \eps \log \frac{1}{\eps} + (1 - \eps)\log\frac{1}{1-\eps} \leq \eps \log \frac{1}{\eps} + 2\eps.
\end{equation*}
By Proposition 2 of~\cite{OWZ11}, the right-hand side is at most $2\cdot \Inf[f]$, and the lemma follows.
\end{proof}

\begin{lemma}[Restatement of Lemma~\ref{lem:nozero}]
Suppose there is an almost prefix-free protocol for $\bigx$ with length $B$ and alphabet $\Sigma$.  Then $\ent[\bigx] \leq \log_2 |\Sigma| \cdot B + 2\cdot \Inf[f]$.
\end{lemma}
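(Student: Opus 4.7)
The plan is to decompose the entropy $\ent[\bigx]$ according to whether $\bigx$ is empty, then apply Shannon's source coding theorem to the nonempty part and use Lemma~\ref{lem:edgeiso} to absorb the remaining binary entropy term into the influence.

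First, let $\eps = \Pr[\bigx \neq \emptyset] = \Var[f]$, and let $\bigy \sim \widehat{f}^2 \setminus \emptyset$, which is exactly the distribution of $\bigx$ conditioned on $\bigx \neq \emptyset$. Because the indicator $\mathbf{1}[\bigx \neq \emptyset]$ is a deterministic function of $\bigx$, the chain rule for Shannon entropy gives
\begin{equation*}
\ent[\bigx] = \ent(\mathbf{1}[\bigx \neq \emptyset]) + \ent[\bigx \mid \mathbf{1}[\bigx \neq \emptyset]] = h(\eps) + \eps \cdot \ent[\bigy],
\end{equation*}
since conditioning on $\mathbf{1}[\bigx \neq \emptyset] = 0$ contributes zero entropy.

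Next, I would observe that the restriction of $\bigp$ to nonempty subsets is a genuine prefix-free code: by hypothesis, the only failure of the prefix-free property is that $\bigp(\emptyset)$ is the empty string, which is of course a prefix of every other codeword; removing $\emptyset$ from the domain therefore restores true prefix-freeness. Shannon's source coding theorem (Theorem~\ref{thm:shannon}, part 1) then yields $\ent[\bigy] \leq \log_2|\Sigma| \cdot \E[|\bigp(\bigy)|]$. Moreover, since $\bigp(\emptyset)$ is empty, the total expected length $B = \E[|\bigp(\bigx)|]$ equals $\eps \cdot \E[|\bigp(\bigy)|]$, so multiplying through by $\eps$ gives $\eps \cdot \ent[\bigy] \leq \log_2|\Sigma| \cdot B$.

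Finally, since $\widehat{f}(\emptyset)^2 = 1 - \eps$, Lemma~\ref{lem:edgeiso} gives $h(\eps) \leq 2\cdot \Inf[f]$. Adding this to the previous inequality yields
\begin{equation*}
\ent[\bigx] = h(\eps) + \eps \cdot \ent[\bigy] \leq 2 \cdot \Inf[f] + \log_2|\Sigma| \cdot B,
\end{equation*}
which is exactly the claimed bound. There is no serious obstacle here; the only subtle point is verifying that the ``almost prefix-free'' defect at $\emptyset$ is harmless once we condition away the empty set, and everything else is a clean application of tools already in the paper.
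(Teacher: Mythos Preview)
Your proof is correct and follows essentially the same approach as the paper: decompose $\ent[\bigx]$ via the chain rule using the indicator of $\bigx = \emptyset$, apply Shannon's source coding theorem to the conditional (nonempty) distribution, and absorb the $h(\eps)$ term using Lemma~\ref{lem:edgeiso}. The only minor omission is that you do not separately handle the degenerate case $\eps = 0$ (where $\bigy$ is undefined), but this case is trivial since then $\ent[\bigx] = 0$.
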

\begin{proof}
Write $\hat{f}(\emptyset)^2 = 1- \eps$. If $\eps = 0$ then $\ent[\bigx] = 0$, so the lemma follows.  Otherwise, let $\bigy$ be the indicator that $\bigx = \emptyset$.  Then $\ent[\bigx \vert \bigy = 0] \leq  \log_2 |\Sigma|~B/\eps$ by the source coding theorem, as the protocol outputs $B/\eps$ characters on average conditioned on $\bigx$ being nonempty.
\begin{align*}
\ent[\bigx]
& = \ent[\bigx, \bigy]\\
& = \ent[\bigy] + \ent[\bigx \vert \bigy]\tag{conditional entropy}\\
& = \ent[\bigy] + (1-\eps) \cdot \ent[\bigx\vert \bigy = 1] + \eps \cdot \ent[\bigx\vert \bigy = 0]\\
& \leq \ent[\bigy] +\log_2|\Sigma|\cdot B \tag{using $\ent[\bigx \vert \bigy = 1] = 0$}
\end{align*}
Because $\bigy$ is a $(1-\eps)$-biased random bit, $\ent[\bigy] = h(\eps)$, where $h(\cdot)$ is the binary entropy function.  Thus, we may apply Lemma~\ref{lem:edgeiso} and get that $\ent[\bigx] \leq 2\cdot \Inf[f] + \log_2|\Sigma| \cdot B$.
\end{proof}

\section{Parallelizing the Protocol}\label{sec:parallel}

The performance of Shannon's code gives the following guarantee:
\begin{fact}\label{fact:shannon}
Let $\bigx^1, \ldots, \bigx^{t}$ be $t$ independent samples drawn from $\widetilde{f}^2\setminus \emptyset$ .  Then there is a prefix-free protocol $P_f^t$ for which
\begin{equation*}
\E\big[|P_f^t(\bigx^1, \ldots, \bigx^{t})|\big] \leq t\cdot \textbf{H}\left[\widetilde{f}^2\setminus \emptyset\right] +1.
\end{equation*}
\end{fact}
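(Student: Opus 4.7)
The plan is to treat the tuple $(\bigx^1, \ldots, \bigx^t)$ as a single random variable over the product domain $(2^{[n]} \setminus \{\emptyset\})^t$ and then directly invoke Shannon's source coding theorem (Theorem~\ref{thm:shannon}) on that random variable with the binary alphabet $\Sigma = \{0,1\}$.

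First, I would recall the additivity of Shannon entropy under independence: since $\bigx^1, \ldots, \bigx^t$ are i.i.d.\ samples from $\widetilde{f}^2 \setminus \emptyset$, we have
\begin{equation*}
\ent\bigl[(\bigx^1, \ldots, \bigx^t)\bigr] \;=\; \sum_{j=1}^t \ent[\bigx^j] \;=\; t \cdot \ent\bigl[\widetilde{f}^2 \setminus \emptyset\bigr].
\end{equation*}
This is a standard identity that follows from expanding the definition of Shannon entropy and using the fact that the joint distribution factorizes.

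Next, I would apply part~2 of Theorem~\ref{thm:shannon} to the random variable $(\bigx^1, \ldots, \bigx^t)$ over the domain $(2^{[n]} \setminus \{\emptyset\})^t$, with alphabet $\Sigma = \{0,1\}$ (so $\log_2 |\Sigma| = 1$). This yields a prefix-free code $P_f^t$ with
\begin{equation*}
\E\bigl[|P_f^t(\bigx^1, \ldots, \bigx^t)|\bigr] \;\leq\; \ent\bigl[(\bigx^1,\ldots,\bigx^t)\bigr] + 1 \;=\; t \cdot \ent\bigl[\widetilde{f}^2 \setminus \emptyset\bigr] + 1,
\end{equation*}
which is precisely the claimed bound.

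There is essentially no obstacle here; the fact is a textbook application of block coding, and the entire point of stating it as a separate fact is to make explicit that the additive $+1$ overhead gets amortized across $t$ independent samples, which is the feature needed for the parallelization argument sketched in Appendix~\ref{sec:parallel}. The only mild subtlety is ensuring that the domain of the code is well-defined (i.e., $\widetilde{f}(\emptyset)^2 < 1$ so that $\widetilde{f}^2 \setminus \emptyset$ is a genuine probability distribution), but this has already been assumed in the surrounding discussion of the FEI$^+$ conjecture.
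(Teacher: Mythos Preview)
Your proposal is correct and matches the paper's treatment exactly: the paper does not give a separate proof of this fact at all, introducing it simply with ``The performance of Shannon's code gives the following guarantee,'' i.e., as an immediate consequence of applying Theorem~\ref{thm:shannon} to the product variable $(\bigx^1,\ldots,\bigx^t)$ together with additivity of entropy under independence. There is nothing to add.
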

In other words, the average number of bits used per copy of $\bigx$ is $1/t$ more than the theoretical best.  In the limit as $t$ tends to $\infty$, the excess number of bits tends to~$0$.  Using this, we will show that the protocol from Section~\ref{sec:composition} may be analyzed as if the subprotocols are optimally efficient.  We will do this by showing an efficient protocol to communicate sets $\bigy^1, \ldots, \bigy^t \sim \widetilde{h}^2\setminus \emptyset$ which are chosen independently.  As before, use $\bigy^j_i$ to denote the restriction of $\bigy^j$ to the coordinates relevant to $g_i$.  We will assume that we have the efficient protocol $P_f^t$ guaranteed by Fact~\ref{fact:shannon}.  In addition, for each $i \in [k]$ and $m \in [t]$ we will use the protocol $P_i^m$ which Fact~\ref{fact:shannon} guarantees will efficiently communicate $m$ samples from $\widetilde{g_i}^2\setminus \emptyset$.  Now, consider the following protocol $P^t_h(\bigy^1, \ldots, \bigy^t)$:

\medskip
\fbox{\parbox{14.5cm}{
\begin{enumerate}
\item For each $j \in [t]$, let $S^j \subseteq [k]$ be the set containing those $i$ such that $\bigy_i^j \neq \emptyset$.
\item Output $P_f^t(S^1, \ldots, S^t)$.
\item For each $i \in [k]$:
\begin{enumerate}
\item Let $j_1, \ldots, j_m$ be the indices of the nonempty $\bigy_i^j$ (in order).
\item If $m \neq 0$, output $P_i^m(\bigy_i^{j_1}, \ldots, \bigy_i^{j_m})$.  Otherwise, output nothing.
\end{enumerate}
\end{enumerate}
}}
\medskip

The following lemma, which may be compared to Proposition 3.2 in \cite{OT13},
gives the performance of this protocol and suffices to recover their composition theorem for
entropy.
\begin{lemma}
Let $S$ be distributed as in the protocol from Section~\ref{sec:composition}.  In the limit as $t\rightarrow \infty$,
\begin{equation*}
\frac{1}{t} \cdot\E\left[|P_h^t(\bigy^1, \ldots, \bigy^t)|\right]
= \textbf{H}\left[\widetilde{f}^2\setminus \emptyset\right] + \sum_{i \in [k]} \Pr[i \in S]\cdot \textbf{H}\left[\widetilde{g_i}^2\setminus\emptyset\right].
\end{equation*}
\end{lemma}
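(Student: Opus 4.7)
The plan is to decompose $|P_h^t(\bigy^1,\ldots,\bigy^t)|$ into the $P_f^t$-piece plus the sum over $i\in[k]$ of the $P_i^{m_i}$-pieces, analyze each piece separately using the distributional claims~\ref{claim:fdist} and~\ref{claim:gdist}, and apply Fact~\ref{fact:shannon} (together with Shannon's matching lower bound) to each. The additive $+1$ slack in each invocation of Fact~\ref{fact:shannon} will be $O(k)$ in total, which vanishes after dividing by $t$; this is precisely why we needed to parallelize.

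For the first piece, since $\bigy^1,\ldots,\bigy^t$ are i.i.d.\ from $\widetilde{h}^2\setminus\emptyset$, Claim~\ref{claim:fdist} implies that $S^1,\ldots,S^t$ are i.i.d.\ draws from $\widetilde{f}^2\setminus\emptyset$. Hence the joint distribution has entropy $t\cdot\textbf{H}[\widetilde{f}^2\setminus\emptyset]$, and Fact~\ref{fact:shannon} (plus the standard source coding lower bound) yields
\begin{equation*}
t\cdot\textbf{H}[\widetilde{f}^2\setminus\emptyset] \;\leq\; \E\bigl[|P_f^t(S^1,\ldots,S^t)|\bigr] \;\leq\; t\cdot\textbf{H}[\widetilde{f}^2\setminus\emptyset]+1,
\end{equation*}
so after dividing by $t$ and sending $t\to\infty$, this piece contributes exactly $\textbf{H}[\widetilde{f}^2\setminus\emptyset]$.

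For the $i$th subprotocol piece, the number $m_i$ of indices $j$ with $\bigy_i^j\neq\emptyset$ is a $\text{Bin}(t,p_i)$ random variable with $p_i:=\Pr[i\in S]$. Conditioned on $m_i$, Claim~\ref{claim:gdist} together with independence of the $\bigy^j$'s tells us that the list $\bigy_i^{j_1},\ldots,\bigy_i^{j_{m_i}}$ consists of $m_i$ i.i.d.\ samples from $\widetilde{g_i}^2\setminus\emptyset$, so Fact~\ref{fact:shannon} and its matching lower bound give
\begin{equation*}
m_i\cdot\textbf{H}[\widetilde{g_i}^2\setminus\emptyset] \;\leq\; \E\bigl[|P_i^{m_i}|\bigm|m_i\bigr] \;\leq\; m_i\cdot\textbf{H}[\widetilde{g_i}^2\setminus\emptyset]+1.
\end{equation*}
Taking expectation over $m_i$ (using $\E[m_i]=t p_i$), dividing by $t$, and letting $t\to\infty$, the $i$th subprotocol contributes $\Pr[i\in S]\cdot\textbf{H}[\widetilde{g_i}^2\setminus\emptyset]$.

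Summing the $k$ subprotocol contributions and adding the $P_f^t$ contribution yields the claimed identity. There is no real obstacle here beyond carefully bookkeeping the $O(k)$ additive slack from Shannon's code, which is harmless in the $t\to\infty$ limit; the substantive content is entirely packaged in Claims~\ref{claim:fdist} and~\ref{claim:gdist}, which have already been established in the unparallelized setting.
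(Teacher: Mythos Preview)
Your proposal is correct and follows essentially the same approach as the paper's proof sketch: decompose into the $P_f^t$ piece and the $k$ subprotocol pieces, invoke Claims~\ref{claim:fdist} and~\ref{claim:gdist} to identify the relevant distributions, and apply Fact~\ref{fact:shannon} to each. Your bookkeeping of the additive slack as a flat $+1$ per subprotocol call (hence $O(k)/t\to 0$) is in fact a bit cleaner than the paper's, which instead argues that the per-sample excess $1/m_i$ has expectation tending to zero since $m_i\sim\mathrm{Bin}(t,p_i)$.
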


\begin{proof}[Proof sketch]
Fix a coordinate $i \in [k]$ and consider the number $m$ of nonempty $\bigy_i^j$s.  From Claim~\ref{claim:gdist}, we know that if $\bigy_i^j$ is nonempty, then it is distributed as $\widetilde{g_i}^2\setminus \emptyset$.  As a result, for a fixed value of $m$, Fact~\ref{fact:shannon} tells us that the expected number of bits that $P_i^m$ outputs per $\bigy_i^j$ is at most $1/m$ in excess of $\textbf{H}\left[\widetilde{g_i}^2\setminus\emptyset\right]$.  Now, $m$ is distributed as $\mathrm{Bin}(t, r)$, where $r$ is a probability independent of $t$.  Thus, by taking $t \rightarrow \infty$ the expectation of $1/m$ (when $m$ is nonzero) will tend towards~$0$.  As a result, we may assume that $\textbf{H}\left[\widetilde{g_i}^2\setminus\emptyset\right]$ bits are used in expectation to communicate each nonzero $\bigy_i^j$.  A similar argument shows that we may assume that $\textbf{H}[\widetilde{f}^2\setminus \emptyset]$ bits are used in expectation to communicate each $S^j$.

Aside from packaging the different sets together when calling the subprotocols, the protocol acts as $t$ independent copies of the protocol from Section~\ref{sec:composition}.  Let us focus on the case when $j = 1$.  Then the expected number of bits spent outputting the sets for which $j=1$ is
\begin{equation*}
\textbf{H}\left[\widetilde{f}^2\setminus \emptyset\right] + \sum_{i \in [k]} \Pr[i \in S^1]\cdot \textbf{H}\left[\widetilde{g_i}^2\setminus\emptyset\right].
\end{equation*}
As $S^1$ is distributed identically to $S$ in the protocol from Section~\ref{sec:composition}, we may replace the event $i \in S^1$ with $i \in S$.  Averaging this over all $j \in [t]$ yields the lemma.
\end{proof}

\section{Proofs of Claims}\label{sec:biasedprelims}
First, we recall several basic facts regarding $\mu$-biased Fourier analysis.
For $S\neq T$ and $S\neq \emptyset$, we have $\E_\mu[\phi_S^{\mu}] =0$ and
$\E_\mu[\phi_S^\mu \cdot \phi_T^\mu] =0$.  We also have Parseval's inequality, which states that
for $f:\{-1,1\}^n \to \R$, the equality $\sum_{S\subseteq [n]}\widetilde{f}(S)^2 = \E_\mu[f^2]$ holds.

We now prove the following proposition, from which the claims follow immediately.
\begin{proposition}\label{prop:fourier}
Given the setup of the first protocol,
\begin{equation*}
\widetilde{h}(\bigy) = \widetilde{f}(S)\prod_{i \in S}\frac{\widetilde{g}(\bigy_i)}{\sigma_i}.
\end{equation*}
\end{proposition}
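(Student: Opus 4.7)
The plan is to compute $\widetilde{h}(\bigy) = \E_\mu[h \cdot \phi_{\bigy}^\mu]$ directly by expanding $f$ in its $\eta$-biased Fourier basis and using independence of the blocks $x^1, \ldots, x^k$.

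First I would note that because $\bigy = \bigy_1 \circ \cdots \circ \bigy_k$ partitions according to the disjoint blocks of variables, the $\mu$-biased character factors as
\begin{equation*}
\phi_{\bigy}^\mu(x^1, \ldots, x^k) = \prod_{j=1}^k \phi_{\bigy_j}^\mu(x^j).
\end{equation*}
Let $\eta_i = \E_\mu[g_i]$ and $\sigma_i = \sqrt{\Var_\mu[g_i]} = \sqrt{1-\eta_i^2}$. Expanding $f$ in its $\eta$-biased Fourier decomposition and substituting the $g_i$'s for the inputs gives
\begin{equation*}
h = \sum_{T \subseteq [k]} \widetilde{f}(T) \prod_{i \in T} \frac{g_i(x^i) - \eta_i}{\sigma_i}.
\end{equation*}

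Next I would multiply by $\prod_j \phi_{\bigy_j}^\mu(x^j)$ and take the $\mu$-expectation. Since the blocks $x^1, \ldots, x^k$ are mutually independent under $\mu$, the expectation factors across $j$. For $j \notin T$ the factor is $\E_\mu[\phi_{\bigy_j}^\mu(x^j)]$, which is $1$ if $\bigy_j = \emptyset$ and $0$ otherwise. For $i \in T$ the factor is $\E_\mu\bigl[\tfrac{g_i(x^i)-\eta_i}{\sigma_i}\phi_{\bigy_i}^\mu(x^i)\bigr]$; writing $g_i - \eta_i = \sum_{R\neq\emptyset}\widetilde{g_i}(R)\phi_R^\mu$ and using orthogonality of the $\phi_R^\mu$'s, this equals $\widetilde{g_i}(\bigy_i)/\sigma_i$ if $\bigy_i \neq \emptyset$ and $0$ if $\bigy_i = \emptyset$.

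Combining these, a term indexed by $T$ survives only if (i) $\bigy_j = \emptyset$ for every $j \notin T$, and (ii) $\bigy_i \neq \emptyset$ for every $i \in T$. Both conditions together force $T = S := \{i : \bigy_i \neq \emptyset\}$, so exactly one term survives and yields
\begin{equation*}
\widetilde{h}(\bigy) = \widetilde{f}(S) \prod_{i \in S} \frac{\widetilde{g_i}(\bigy_i)}{\sigma_i},
\end{equation*}
as claimed. The main care point is just the bookkeeping of which $T$'s contribute; once one observes the $j \notin T$ factor kills any $T$ missing an index of $S$, and the $i \in T$ factor kills any $T$ strictly larger than $S$, the identity falls out immediately. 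Claims \ref{claim:gdist} and \ref{claim:fdist} then follow by squaring this identity, summing appropriately, and applying Parseval in the $\mu$- and $\eta$-biased senses, respectively.
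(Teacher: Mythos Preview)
Your proof is correct and follows essentially the same route as the paper's: expand $f$ in the $\eta$-biased basis, factor the $\mu$-expectation across the disjoint blocks, and observe that only the term $T=S$ survives. In fact you spell out the ``standard calculation'' that forces $T=S$ more explicitly than the paper does.
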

\begin{proof}
Let $\eta_i=\bfe_\mu[g_i]$ and $\sigma^2_i=\Var_\mu[g_i]$.  Let $S$ be as defined in the protocol.  Then
\begin{align*}
\widetilde{h}(\bigy)
& = \E_{\bx\sim \mu}[h(\bx)\cdot\phi_{\bigy}^\mu(\bx)]\\
& = \E_\bx\left[f(g_1(\bx), \ldots, g_k(\bx)) \cdot \prod_{j \in S} \phi_{\bigy_j}^\mu(\bx)\right]\\
& = \sum_{T \subseteq [k]}\widetilde{f}(T)
			\E_\bx\left[\phi_{T}^\eta(g_1(\bx), \ldots, g_k(\bx)) \cdot \prod_{j \in S} \phi_{\bigy_j}^\mu(\bx)\right]\\
& = \sum_{T \subseteq [k]}\widetilde{f}(T)
			 \E_\bx\left[\prod_{i\in T}\left(\frac{g_i(\x) -\eta_i}{\sigma_i}\right) \prod_{j \in S} \phi_{\bigy_j}^\mu(\bx)\right].
\end{align*}
A standard calculation shows that the expectation is nonzero only if $S = T$.  In this case, the expectation is equal to\
\begin{equation*}
\prod_{i \in S}\E_\bx\left[\left(\frac{g_i(\bx)-\eta_i}{\sigma_i}\right) \cdot \phi_{\bigy_i}^\mu(\bx)\right]
= \prod_{i \in S}\frac{\widetilde{g}(\bigy_i)}{\sigma_i},
\end{equation*}
where the equality holds because $\bigy_i$ is nonempty, so the shift by $\eta_i$ doesn't affect the calculation.  The proposition now follows.
\end{proof}

Now we prove the claims:
\begin{proof}[Proof of Claim~\ref{claim:gdist}]
Condition $\bigy$ on $\bigy_i \neq \emptyset$ and on any values for $\bigy_1, \ldots, \bigy_{i-1}, \bigy_{i+1}, \ldots, \bigy_{k}$.  Then by Proposition~\ref{prop:fourier}, $\bigy_i$ is distributed as $\widetilde{g_i}^2\setminus \emptyset$.  As this holds conditioned on any values for the $\bigy_j$'s, $j \neq i$, this also holds conditioned only on $\bigy_i \neq \emptyset$.
\end{proof}

\begin{proof}[Proof of Claim~\ref{claim:fdist}]
First, because $f$ and $h$ have the same mean, they also have the same variance, i.e.
\begin{equation*}
\sum_{\bigy\neq\emptyset}\widetilde{h}^2(\bigy) = \sum_{S\neq\emptyset}\widetilde{f}^2(S).
\end{equation*}
Next, fix a particular value of $S \subseteq [k]$, $S \neq \emptyset$.  The sets $\bigy$ for which the protocol selects this particular $S$ are those for which $\bigy_i \neq \emptyset \iff i \in S$.  Then the probability $S$ is selected is just the sum over these sets:
\begin{align*}
\sum_{\bigy:\bigy_i \neq \emptyset \iff i \in S}\widetilde{h}(\bigy)^2
& = \sum_{\bigy:\bigy_i \neq \emptyset \iff i \in S}
		\widetilde{f}(S)^2\prod_{i \in S}\frac{\widetilde{g}(\bigy_i)^2}{\sigma_i^2}\tag{by Proposition~\ref{prop:fourier}}\\
& = \widetilde{f}(S)^2\prod_{i \in S}\sum_{\bigy_i \neq \emptyset}\frac{\widetilde{g}(\bigy_i)^2}{\sigma_i^2}\\
& = \widetilde{f}(S)^2.
\end{align*}
Combining these two facts yields the claim.
\end{proof}

\section{Small influence counterexample}\label{app:biginf}

Suppose we could prove Theorem~\ref{thm:depthd} without the restriction on the function's total influence, i.e. the following statement:
\begin{conjecture}\label{conj:lowinf}
Suppose $f:\{-1, 1\}^n \rightarrow \{-1, 1\}$ is computable by a decision tree with expected depth $d$, and let $\bigx \sim \widehat{f}^2$.  Then $\ent[\bigx] \leq C\cdot d \cdot \Inf[f]$, for some absolute constant $C$.
\end{conjecture}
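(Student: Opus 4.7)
The plan is to sharpen the tree-covariance bound that drives Theorem~\ref{thm:readkprot}. Recall that Lemma~\ref{thm:covthm} yields an almost prefix-free protocol of length $4\Inf[f]+2\Cov[T]$, so if one could replace the trivial $\Cov[T]\le d$ by a multiplicative bound $\Cov[T]\le C'\,d\cdot\Inf[f]$, then Lemma~\ref{lem:nozero} together with the observation that $d\ge 1$ would immediately give $\ent[\widehat{f}^2]\le C\,d\cdot\Inf[f]$, proving the conjecture.

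First, I would establish the pointwise inequality $\Cov[v]\le \Var[f_v]$ for every internal node $v$, by Cauchy--Schwarz combined with the variance recurrence $\Var f_v=\tfrac12(\Var g_v+\Var h_v)+\tfrac14(\E g_v-\E h_v)^2$. After summing with the $2^{-d(v)}$ weights this reduces the task to proving
\[
\sum_{v\in T} 2^{-d(v)}\,\Var[f_v] \ \le\ O\!\bigl(d\cdot\Inf[f]\bigr).
\]

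Next, I would attempt to amortize variance along root-to-leaf paths. Writing $x_i=l(v)$, the ``gap'' term satisfies $(\E g_v-\E h_v)^2 = 4\widehat{f_v}(\{i\})^2\le 4\Inf_{i}[f_v]$, and the global identity $\Inf[f]=\sum_{v}2^{-d(v)}\Inf_{l(v)}[f_v]$ ties these gap terms directly to $\Inf[f]$. Iterating the variance recurrence, each $\Var[f_v]$ can be expanded as a weighted sum of gap terms $(\E g_u-\E h_u)^2$ over nodes $u$ at or below $v$, and the hope is that double counting these contributions, weighted by $2^{-d(v)}$, collapses to a constant times $d\cdot\Inf[f]$.

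The main obstacle is precisely this amortization step. The bad tree in Figure~\ref{fig:badtree} shows that when a subtree $T'$ is duplicated across many copies rooted below a trivial initial stub, the internal nodes near the root can have $\Var[f_v]$ close to $1$ while $\Inf[f]$ is tiny; the factor $2^{-d(v)}$ barely keeps $\sum 2^{-d(v)}\Var[f_v]$ bounded by $d$ rather than by $d\cdot\Inf[f]$. Overcoming this seems to require extracting additional structure from the variance recurrence beyond the per-node inequality $\Cov[v]\le \Var[f_v]$---either a global cancellation argument that detects when high-variance subfunctions recur across many levels, or a different encoding of $\bigx$ that pays strictly zero bits in the regime where $\Inf[f]$ is much smaller than $d$. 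I expect that the route through the protocol analysis stalls here and that bridging this gap is the decisive step in proving the statement.
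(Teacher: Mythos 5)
There is a genuine gap, and it is not a local one: the statement you are trying to prove is stated in the paper as Conjecture~\ref{conj:lowinf}, and the paper offers no proof of it. On the contrary, Appendix~\ref{app:biginf} (Proposition~\ref{prop:dtreduction}) shows via the hiding construction of Figure~\ref{fig:depthd} that this conjecture implies the full FEI conjecture for every $f$ with $\Inf[f]\geq\log n$: one buries $f$ below an AND of $\log_2 n$ fresh variables to get a tree of expected depth~$3$, and the conjectured bound for that tree unwinds to $\ent[\widehat{f}^2]\leq O(C)\cdot\Inf[f]$. So any argument along the lines you sketch must implicitly carry the weight of FEI itself; the paper only proves the conditional version, Theorem~\ref{thm:depthd}, where the hypothesis $\Inf[f]\geq 1$ lets one absorb the additive $2d$ from Theorem~\ref{thm:readkprot} into $d\cdot\Inf[f]$.

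Concretely, your reduction to the strengthened covariance bound $\Cov[T]\leq C'\,d\cdot\Inf[f]$ is a correct reformulation of what would suffice (and the per-node step $\Cov[v]\leq\Var[f_v]$ is fine by Cauchy--Schwarz together with the variance recurrence), but the decisive amortization step is exactly where the difficulty of FEI reappears: in the padded examples the subfunction variances $\Var[f_v]$ near the bottom of the stub are close to $1$ while $\Inf[f]$ is of order $(\log n+\Inf[f'])/n$, so the intermediate target $\sum_v 2^{-d(v)}\Var[f_v]\leq O(d\cdot\Inf[f])$ is itself an FEI-strength statement rather than something extractable from the recurrence by double counting. You acknowledge this yourself (``I expect that the route \ldots stalls here''), which is the honest assessment; but it means the proposal is a research plan for an open problem, not a proof, and it should not be presented as establishing the statement. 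If your goal is what the paper actually proves, you should add the hypothesis $\Inf[f]\geq 1$ and invoke Lemma~\ref{thm:covthm}, Proposition~\ref{prop:depthdcov}, and Lemma~\ref{lem:nozero} as in Theorem~\ref{thm:depthd}.
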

This appears to be a weaker conjecture than the FEI conjecture.  However, in this section we will show that this statement implies the FEI conjecture, at least for functions with sufficiently large influence.
\begin{proposition}\label{prop:dtreduction}
Suppose Conjecture~\ref{conj:lowinf} were true for some constant $C$.  Let $f:\{-1, 1\}^n \rightarrow \{-1, 1\}$, and let $\bigx \sim \widehat{f}^2$.  If $\Inf[f] \geq \log(n)$, then $\ent[\bigx] \leq C' \cdot \Inf[f]$, where $C'$ is some other absolute constant.
\end{proposition}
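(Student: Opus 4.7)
The plan is to apply Conjecture~\ref{conj:lowinf} not to $f$ itself (which need not admit any shallow decision tree) but to an auxiliary Boolean function $F$ that is carefully designed so that (i) $F$ admits a decision tree of constant expected depth, and (ii) the Fourier structure of $F$ is a simple rescaling of that of $f$. Conjecture~\ref{conj:lowinf} applied to $F$ will then immediately translate into a linear FEI bound for $f$. Concretely, set $k = \lceil \log n \rceil$, introduce fresh bits $x_1, \ldots, x_k$, and define $F : \{-1, 1\}^{k+n} \to \{-1, 1\}$ by $F(x, y) = f(y)$ if $x_1 = \cdots = x_k = 1$, and $F(x, y) = 1$ otherwise.

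The natural decision tree for $F$ queries $x_1, \ldots, x_k$ in order, outputs $1$ as soon as any $x_i = -1$ is seen, and otherwise runs some fixed depth-$n$ decision tree for $f$ on $y$. Its expected depth is at most $\sum_{i=1}^k 2^{-(i-1)} + 2^{-k}\cdot n \leq 2 + n/2^k \leq 3$, so Conjecture~\ref{conj:lowinf} yields $\ent[\widehat{F}^2] \leq 3C \cdot \Inf[F]$.

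Next I will relate $\Inf[F]$ and $\ent[\widehat{F}^2]$ back to the corresponding quantities for $f$. Expanding $F = 1 - (1 - f(y)) \prod_{i=1}^k \tfrac{1 + x_i}{2}$ and reading off the Fourier coefficients gives the clean identity $\widehat{F}(S, T) = \widehat{f}(T)/2^k$ for every $T \neq \emptyset$ and every $S \subseteq [k]$. Combining this identity with a short case analysis for the coefficients with $T = \emptyset$, I expect the following two bounds to fall out:
\[
\Inf[F] \;\leq\; \frac{\Inf[f] + 2k}{2^k} \;\leq\; \frac{5\,\Inf[f]}{2^k}
\qquad\text{and}\qquad
\ent[\widehat{F}^2] \;\geq\; \frac{1}{2^k}\sum_{T \neq \emptyset} \widehat{f}(T)^2 \log \frac{1}{\widehat{f}(T)^2},
\]
where the influence bound uses $k \leq \log n + 1 \leq \Inf[f] + 1$ and the entropy lower bound simply discards the nonnegative $T = \emptyset$ contributions together with the extra $\log 4^k$ inside the logarithm.

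Chaining these bounds through Conjecture~\ref{conj:lowinf} and clearing the common factor of $2^{-k}$ then gives $\sum_{T \neq \emptyset}\widehat{f}(T)^2 \log \tfrac{1}{\widehat{f}(T)^2} \leq 15 C\,\Inf[f]$, and adding the empty-set term $\widehat{f}(\emptyset)^2 \log \tfrac{1}{\widehat{f}(\emptyset)^2} \leq 1 \leq \Inf[f]$ (using $\Inf[f] \geq \log n \geq 1$) yields $\ent[\widehat{f}^2] \leq C'\Inf[f]$ for $C' = O(C)$. I do not foresee a genuine obstacle here: the only point of care is making sure that the hypothesis $\Inf[f] \geq \log n$ is deployed in the right places to absorb both the additive $2k$ in the bound on $\Inf[F]$ and the empty-set slack at the end into an $O(\Inf[f])$ term.
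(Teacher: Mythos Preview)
Your proposal is correct and follows essentially the same construction as the paper: hide $f$ behind an AND of $k \approx \log n$ fresh bits, observe that the resulting tree has expected depth $\leq 3$, compute that the nontrivial Fourier coefficients are $\widehat{f}(T)/2^k$, and chain the resulting entropy and influence estimates through Conjecture~\ref{conj:lowinf}. The only notable difference is that the paper first reduces to the balanced case (via $x_{n+1}\cdot f$) so that each gate bit has influence exactly $1/2^k$, whereas you handle arbitrary $f$ directly by simply bounding each gate bit's influence by $2/2^k$; this buys you a slightly cleaner one-shot argument at the cost of a marginally worse constant.
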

Although this only shows that Conjecture~\ref{conj:lowinf} implies a restricted form of the FEI conjecture, this restricted form does not appear to be especially easier than the full FEI conjecture.  Thus, the restriction in Theorem~\ref{thm:depthd} that $f$ have large influence is a natural one.

Let $f:\{-1, 1\}^n \rightarrow \{-1, 1\}$ have $\Inf[f] \geq \log n$.  We prove Proposition~\ref{prop:dtreduction} by ``hiding'' $f$ in a low expected-depth decision tree.  The resulting decision tree still has low expected-depth, and its spectral entropy and total influence terms are roughly proportional to $f$'s.  Thus,  applying Conjecture~\ref{conj:lowinf} to the decision tree shows that $f$ itself satisfies the FEI conjecture.

\begin{proof}

Let $f:\{-1, 1\}^n \rightarrow \{-1, 1\}$ have $\Inf[f] \geq \log n$, and let $\bigx \sim \widehat{f}^2$.  We begin with the assumption that $f$ is balanced, i.e. that $\E[f] = 0$, and we will later reduce the general case to this case.
For simplicity, assume that $n$ is a power of two.  Consider the new function $g(x, y)$ defined as
\begin{equation*}
g(x, y) = \left\{
	\begin{array}{cl}
		f(x) & \text{if } AND(y_1, \ldots, y_k) = -1,\\
		1 & otherwise.
	\end{array} \right.
\end{equation*}

Pictorially, refer to Figure~\ref{fig:depthd}, where $f(x)$ is computed by some decision tree.  Since $f$ can be trivially computed by a decision tree of depth $n$, the decision tree pictured computes $f$ with expected depth at most $2 + n/2^k$.  By choosing $k = \log_2 n$, this decision tree has expected depth~$3$.
\begin{figure}
\centering
\includegraphics{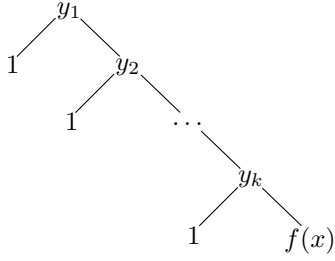}
\caption{A decision tree computing $g(x, y)$.}\label{fig:depthd}
\end{figure}

Each variable $y_i$ is influential only when the rest of the $y_j$'s are~$-1$ and $f(x) = -1$ (which happens half of the time because $f$ is balanced), so the influence of each $y_i$ is exactly $1/2^k$.  Each of the $x_i$ variables is influential only when all of the $y_i's$ are~$-1$, so the influence of variable $x_i$ on $g$ is exactly $\Inf_i[f]/2^k$.  As a result,
\begin{equation*}
\Inf[g] = \frac{k + \Inf[f]}{2^k} = \frac{\log n + \Inf[f]}{n}.
\end{equation*}

To compute the entropy, we can first write $g$ as
\begin{equation*}
g(x, y) = 1 - \frac{1}{2^k} \sum_S \chi_S(y) \left(\frac{1}{2} + \frac{f(x)}{2}\right).
\end{equation*}
From this, we can easily read off some of the Fourier coefficients of $g$: if $S \subseteq [k]$ and $T \subseteq [n]$ are both nonempty, then $\widehat{g}(S, T) = -\widehat{f}(T)/2^{k+1}$.  As a result, if $\bigx' \sim \widehat{g}^2$, then we can lower bound $\ent[\bigx']$ by summing over the terms in the entropy formula corresponding to these subsets:
\begin{align*}
\ent[\bigx']
& \geq \sum_{S, T \neq \emptyset}
	\frac{\widehat{f}(T)^2}{2^{2k+2}}\log\left(\frac{2^{2k+2}}{\widehat{f}(T)^2}\right)\\
& \geq \sum_{T} \frac{\widehat{f}(T)^2}{2^{k+3}}
	\log\left(\frac{2^{2k+2}}{\widehat{f}(T)^2}\right)\\
& \geq \sum_{T} \frac{(2k+2) \cdot \widehat{f}(T)^2}{2^{k+3}}
	+ \sum_{T} \frac{\widehat{f}(T)^2}{2^{k+3}} \log\left(\frac{1}{\widehat{f}(T)^2}\right)\\
& = \frac{2k+2 + \ent[\bigx]}{2^{k+3}} = \frac{2\log n + 2 + \ent[\bigx]}{8n}.
\end{align*}
Here the second inequality follows because the sum is over $2^k-1 \geq 2^{k-1}$ sets $S$ and because $\widehat{f}(\emptyset) = 0$.  The second-to-last equality follows because $f$ is mean-zero, so $\sum_T \widehat{f}(T)^2 = 1$.

Now, applying Conjecture~\ref{conj:lowinf} to $g$, we have that
\begin{equation*}
\frac{3\log n + \ent[\bigx]}{8n} \leq \ent[\bigx'] \leq C\cdot 3 \cdot\Inf[g] = C \cdot 3 \cdot \frac{\log n + \Inf[f]}{n}.
\end{equation*}
This can be rearranged as
\begin{equation*}
\ent[\bigx] \leq (24C - 3) \cdot \log n + 24C\cdot \Inf[f].
\end{equation*}
Thus, if $\Inf[f] \geq \log n$, then $\ent[\bigx] \leq C' \Inf[f]$, where $C' = 48C - 3$.

Now, if $f$ is not balanced, consider the function $g(x_1, \ldots, x_n, x_{n+1}) = x_{n+1} \cdot f(x)$.  Then $g$ is balanced, has the same Fourier entropy as $f$, and $\Inf[g] = \Inf[f] + 1$.  As we have just shown,
\begin{align*}
\ent[\bigx] &= \ent[\widehat{g}^2]\\
&\leq C'\cdot \Inf[g]\\
&= C'\cdot (\Inf[f] +1)\\
&\leq C' \cdot (\Inf[f] + \log n)\\
&\leq (C' + 1)\cdot \Inf[f].
\end{align*}
Here, the last inequality uses the fact that $\log n \leq \Inf[f]$.  Thus, $f$ satisfies the FEI conjecture with constant $C'+1$.
\end{proof}

\end{document}